\definecolor{Darkgreen}{RGB}{34,139,34}
\newcommand{\ii}{\mathrm{i}}
\newcommand{\ee}{\mathrm{e}}
\newcommand{\dd}{\mathrm{d}}
\newcommand{\cn}{\mathrm{cn}}
\newcommand{\dn}{\mathrm{dn}}
\newcommand{\sn}{\mathrm{sn}}
\newcommand{\tn}{\mathrm{tn}}
\newcommand{\nc}{\mathrm{nc}}
\newcommand{\dc}{\mathrm{dc}}
\newcommand{\cs}{\mathrm{cs}}
\newcommand{\ns}{\mathrm{ns}}
\newcommand{\ds}{\mathrm{ds}}
\newcommand{\cd}{\mathrm{cd}}
\newcommand{\sd}{\mathrm{sd}}
\newcommand{\nd}{\mathrm{nd}}
\newtheorem{theorem}{Theorem}
\newtheorem{lemma}{Lemma}
\newtheorem{define}{Definition}
\newtheorem{prop}{Proposition}
\newtheorem{definition}{Definition}
\newtheorem{remark}{Remark}
\titleformat{\title}{\centering\LARGE\bfseries}{\thesection}{1em}{}
\titleformat{\section}{\centering\LARGE\bfseries}{\thesection}{1em}{}
\titleformat{\subsection}{\Large\bfseries}{\thesubsection}{1em}{}
\begin{document}

\title{The multi elliptic-localized solutions and their asymptotic behaviors for the mKdV equation}

\author{Liming Ling}

\address{School of Mathematics, South China University of Technology, Guangzhou, China, 510641}
\email{linglm@scut.edu.cn}

\author{Xuan Sun}

\address{School of Mathematics, South China University of Technology, Guangzhou, China, 510641}

\email{masunx@mail.scut.edu.cn}

\begin{abstract}
We mainly construct and analyze the multi elliptic-localized solutions under the background of elliptic function solutions for the focusing modified Korteweg-de Vries (mKdV) equation. Based on the Darboux-B\"{a}cklund transformation, we provide a uniform expression for these solutions by the Jacobi theta functions. The asymptotic behaviors of multi elliptic-localized solutions are provided directly in two categories. By the consistent asymptotic expression of those solutions, we obtain that the collisions between the elliptic-breathers/solitons are elastic. Moreover, a sufficient condition of the strictly elastic collision between the solitons and breathers has been given by the symmetric analysis. In addition, as $k\rightarrow0^{+}$, the multi elliptic-localized solutions degenerate into solitons, breathers or soliton-breather solutions, which implies that we extend the solutions from the constant and vanishing backgrounds to the periodic solutions backgrounds. Moreover, we illustrate figures of the multi elliptic-localized solutions to visualize the above analysis.

{\bf Keywords:} modified Korteweg-de Vries equation, multi elliptic-localized solution, Darboux-B\"{a}cklund transformation, Jacobi theta function, asymptotic analysis.

\end{abstract}

\date{\today}

\maketitle
\section{Introduction}
In this work, we mainly study the multi elliptic-localized solutions of the focusing modified Korteweg-de Vries (mKdV) equation
\begin{equation}\label{eq:mKdV-equation}
	u_t+6u^2u_{x}+u_{xxx}=0,
\end{equation} 
under the elliptic functions backgrounds, where $u(x,t)$ is a real-valued function with $(x,t)\in \mathbb{R}^2$. According to the dynamic behavior of the multi elliptic-localized solutions, we call them the multi elliptic-soliton solutions, multi elliptic-breather solutions and multi elliptic-soliton-breather solutions. Furthermore, we study their symmetric properties, asymptotic behavior and degenerate form.

The mKdV equation is a well-known completely integrable model \cite{AblowitzC-91,AblowitzS-81}, which admits the Lax-pair \cite{LambG-80,Lax-68}, infinitely conserved
quantities \cite{MiuraGK-68}, bi-Hamiltonian structure \cite{MaF-96}, and so on. Over the past few decades, many kinds of solutions in the background of the vanishing solutions and plane waves, such as rational solutions \cite{ChowduryAN-16}, solitons \cite{ChanL-1994,YanL-20}, breather solutions \cite{ChvartatskyiF-17,KevrekidisKS-03,MunozP-19,TajiriW-98} and rogue waves \cite{ChvartatskyiF-17,KedzioraAN-14}, have been obtained extensively, through the inverse scattering transformation \cite{Pelinovsky-97}, the Darboux transformation \cite{Slyunyaev-01}, and so on.

Recently, the solutions on the background of elliptic functions have also been derived. Based on the squared wave function method, Shin gave the cnoidal waves \cite{Shin-03} and solitons on a cnoidal wave background \cite{Shin-04} of the coupled NLS equation. Combining with the Darboux transformation, Shin provided dark soliton on a cnoidal wave background \cite{Shin-05} of the  defocusing NLS equation. Hu, Lou and Chen \cite{HuLC-12explicit} constructed the explicit solutions on a cnoidal wave background of the KdV equation by utilizing the localization procedure of nonlocal symmetries. With the aid of the Darboux transformation method, Kedziora, Ankiewicz and Akhmediev \cite{KedzioraAN-14} constructed the rogue waves of the nonlinear Schr\"{o}dinger (NLS) equation under the cnoidal wave background. Based on the formal algebraic method from Cao et al. \cite{Cao-1990classical,Cao-1999-relation,GengC-01}, Chen and Pelinovsky developed an algebraic approach to gain the rogue wave solutions, the periodic traveling waves and the algebraic soliton solutions on the periodic waves background in many integrable nonlinear systems, such as the mKdV equation \cite{ChenP-18-mKdV,ChenPD-19-mKdV}, the NLS equation \cite{ChenP-18-NLS,ChenPW-19}, the sine-Gordon equation \cite{PelinovskyW-20}, and the derivative NLS equation \cite{ChenPD-21,ChenPU-21}. In the past two years, an increasing number of scholars have focused on solutions under the background of periodic functions. Then a large number of articles have appeared on equations with solutions expressed as elliptic functions, such as the Hirota equation \cite{GaoZ-2020-rogue-Hirota,PengTWZ-20-hirota-roguewave}, the sine-Gordon equation \cite{LiG-20}, and the derivative NLS equation \cite{ZhaiJG-21}.

In the past 40 years, a large number of scholars focused on utilizing the theta functions to represent the solutions under the background of elliptic functions. In the 1980s, Its, Rybin and Sail \cite{It-88} used the algebro-geometric approach to construct the finite-gap solutions of the NLS equation in terms of the theta functions. In 1994, Belokolos et al.\cite{BelokolosBEIM-94} investigated the finite-gap theta function formulas, which provide a theoretical basis for constructing the solutions in the theta function form. Fortunately, in recent years, expressing the solutions by the theta functions has progressed well. Shin \cite{Shin-12} gained the solutions in the theta function form to study the soliton dynamics moving in phase-modulated lattices. Utilizing the theta functions, Feng et al. \cite{FengLT-19} constructed the multi elliptic-breather solutions and rogue waves of the NLS equation successfully. Ling and Sun \cite{LinglmS-21} provided breather solutions in the theta function forms to exhibit the stable or unstable dynamic behaviors vividly after studying the spectral and orbital stability of the elliptic function solutions for the mKdV equation. However, to the best of our knowledge, the multi elliptic-solitons solutions and multi elliptic-breather solutions in theta function forms are still lacking. Therefore, our primary goal is to construct them in terms of the theta functions directly and analyze them from the aspects of symmetry properties, asymptotic behavior, and degenerate form.

In this paper, with the aid of the Darboux-B\"{a}cklund transformation, we provide a uniform expression to represent the multi elliptic-localized solutions under the elliptic function background. Unlike the solutions in the constant or vanishing background, the above solutions are expressed in terms of the theta functions with the uniform parameter $z$. Using the parameter $z$ instead of spectral parameter $\lambda$ could avoid the complex analysis on the genus-1 Riemann surface since the Abel map between $z$ and $\lambda$ establishes a conformal mapping between the genus-1 Riemann surface and the rectangular region, which have studied in our previous article \cite{FengLT-19,LinglmS-21}. Based on the multi elliptic-localized solutions, we study their properties, such as their asymptotic behavior, symmetric property, and degenerate form. The innovations of this paper mainly include the following aspects:

\begin{itemize}
	\item We construct the multi elliptic-localized solutions of the mKdV equation and provide a uniform expression of them in terms of theta functions either in $\cn$-type or $\dn$-type solutions background. Based on the asymptotic expressions of the multi elliptic-localized solutions, we obtain that the collisions between the breathers and solitons of all multi elliptic-localized solutions are elastic. Moreover, a sufficient condition of the strictly elastic collision between the solitons and breathers is given by the symmetric property.
	\item As $t\rightarrow \pm \infty$, the asymptotic expressions of the solutions are revealed by the exact expressions directly in two categories: along the line $L_i^{\pm}$ with velocity $v_i$ and in the area $R_{i}^{\pm}$ between two elliptic-breathers/solitons. The asymptotic expressions in region $R_{i}^{\pm}$ can be regarded as a shift in the background solution. As $k\rightarrow 0^+$, the multi elliptic-localized solutions could degenerate into the multi solitons/breathers/solitons-breathers under constant or vanishing backgrounds, which implies that we extend the multi solitons/breathers/solitons-breathers solutions under the constant and vanishing backgrounds into the backgrounds of the elliptic functions.
	\item The relations between velocity $v$ and the spectral parameter $\lambda$ for the multi elliptic-localized solutions are obtained, which has never been reported before to the best of our knowledge. Analyzing the zeros and poles of meromorphic functions $\Re(I(z))$ and $\Re(\Omega(z))$, as well as studying the conformal map between the spectral parameter $\lambda$ and the uniform parameter $z$ are two indispensable steps in proving the relations between the velocity $v$ and the spectral parameter $\lambda$. 
\end{itemize}

\subsection{Main results}
As we all know, the following two elliptic functions are the periodic solutions of the mKdV equation \eqref{eq:mKdV-equation}:
\begin{equation}\label{eq:mKdV-solution-cn-dn}
	u(x,t)=k\alpha \cn(\alpha(x-st),k),
	\qquad \qquad \qquad  
	u(x,t)=\alpha \dn(\alpha(x-st),k),
\end{equation} 
where $\cn(\cdot,k)$, $\dn(\cdot,k)$ denote the Jacobi elliptic functions with elliptic modulus $k$, and $s$ is the velocity. If the solution of mKdV equation is $\cn$-type, the value of $s$ is $\alpha^2(2k^2-1)$; if it is $\dn$-type solutions, $s=\alpha^2(2-k^2)$. The exact calculation process of solutions \eqref{eq:mKdV-solution-cn-dn} was given in \cite{LinglmS-21}.

The mKdV equation \eqref{eq:mKdV-equation} admits the Lax pair:
\begin{equation}\label{eq:Lax-pair}
	\Phi_{x}(x,t;\lambda)
	=\mathbf{U}(\lambda;u)\Phi(x,t;\lambda), \qquad
	\Phi_{t}(x,t;\lambda)
	=\mathbf{V}(\lambda;u)\Phi(x,t;\lambda),
\end{equation}
where $\lambda\in \mathbb{C}\cup \{\infty\}$ is called the spectral parameter and matrices $\mathbf{U}(\lambda;u)$ and $\mathbf{V}(\lambda;u)$ in equation \eqref{eq:Lax-pair} are defined as
\begin{equation}\label{eq:Lax-U-V-sigma3}
	\begin{split}
		\mathbf{U}(\lambda;u)
		:=&-\ii \lambda \sigma_3 +\mathbf{Q}, \qquad \sigma_3:= \mathrm{diag}(1,-1), \qquad 
		\mathbf{Q}:= \begin{bmatrix}
			0 & u \\ -u & 0
		\end{bmatrix},\\
		\mathbf{V}(\lambda;u)
		:=&	-4\ii \lambda^3 \sigma_3 +4\lambda^2\mathbf{Q}+\ii \lambda\sigma_3\left(2\mathbf{Q}_{x}-2\mathbf{Q}^2\right)+2 \mathbf{Q}^3-\mathbf{Q}_{xx},
	\end{split}
\end{equation}
satisfying the symmetric properties
\begin{equation}\label{eq:UV-sym}
	\begin{split}
		\mathbf{U}^{\dagger}(\lambda^*;u)
		=&-\mathbf{U}(\lambda;u), \qquad 
		 \mathbf{U}^{\top}(-\lambda;u)
		=-\mathbf{U}(\lambda;u),
		\\
		\mathbf{V}^{\dagger}(\lambda^*;u)
		=&-\mathbf{V}(\lambda;u), \qquad 
		\mathbf{V}^{\top}(-\lambda;u)
		=-\mathbf{V}(\lambda;u).
	\end{split}
\end{equation} 

For the elliptic solutions \eqref{eq:mKdV-solution-cn-dn} of mKdV equation \eqref{eq:mKdV-equation}, the corresponding fundamental solution $\Phi(x,t;\lambda)$ of the Lax pair \eqref{eq:Lax-pair} could be written as
\begin{equation}\label{eq:Lax-solution-Phi}
	\Phi(x,t;\lambda)
	=\frac{\alpha \vartheta_2\vartheta_4}{\vartheta_3\vartheta_4(\frac{\alpha \xi}{2K})}
	\begin{bmatrix}
		\frac{\vartheta_1(\frac{\ii(z-l)-\alpha\xi}{2K})}{\vartheta_4(\frac{\ii(z-l)}{2K})}E_1(x,t;z)
		& \frac{\vartheta_3(\frac{\ii(z+l)+\alpha\xi}{2K})}{\vartheta_2(\frac{\ii(z+l)}{2K})}E_2(x,t;z) \\ 
		-\frac{\vartheta_3(\frac{\ii(z+l)-\alpha\xi}{2K})}{\vartheta_2(\frac{\ii(z+l)}{2K})}\ee^{\alpha \xi Z(2\ii l+K)}E_1(x,t;z)
		& -\frac{\vartheta_1(\frac{\ii(z-l)+\alpha\xi}{2K})}{\vartheta_4(\frac{\ii(z-l)}{2K})}\ee^{\alpha \xi Z(2\ii l+K)}E_2(x,t;z)
	\end{bmatrix},
\end{equation} 
where $\xi=x-st$, $l=0 \text{ or }\frac{K'}{2}$, 
\begin{equation}\label{eq:E1-E2}
	\begin{split}
		E_1(z)\equiv E_1(x,t;z)&=\exp \left((\alpha Z(\ii (z-l))+\ii \lambda)(x-st)   +4\ii  y \lambda t \right),\\
		E_2(z)\equiv E_2(x,t;z)&=\exp \left(\left(-\ii\frac{\alpha\pi}{2K}-\alpha Z(\ii (z+l)+K+\ii K')+\ii \lambda\right)(x-st)   -4\ii  y \lambda t \right),
	\end{split}
\end{equation}
$y(z)=\frac{\alpha^2k^2}{4}\left(\mathrm{sn}^2(\ii(z-l))-\mathrm{sn}^2(\ii(z+l)+K+\ii K')\right)$ and 
\begin{equation}\label{eq:Omega}
		\Omega(z)=8\ii\lambda y=-\alpha^3 \left( k^2 \dn(\ii(z-l))\sn(\ii(z-l))\cn(\ii(z-l))+k'^2\frac{\dn(\ii(z+l))\sn(\ii(z+l))}{\cn^3(\ii(z+l))}\right),
\end{equation}
with the parameter $\lambda$ defined in \eqref{eq:lambda-elliptic}, the Jacobi theta functions $\vartheta_i(z), i=1,2,3,4$ defined in \eqref{eq:theta1234}, the Jacobi Zeta functions $Z(z)$ defined in \eqref{eq:zeta} and the complete elliptic integrals $K=K(k), K'=K(k')$ defined in \eqref{eq:J-K-E-int}. The selection of parameter $l$ is closely related to the elliptic function solutions \eqref{eq:mKdV-solution-cn-dn}. If the solution is $\cn$-type, the corresponding fundamental solution $\Phi(x,t;\lambda)$ of Lax pair \eqref{eq:Lax-pair} is obtained by the parameter $l=0$. If the solution is $\dn$-type, the fundamental solution is obtained by the parameter $l=\frac{K'}{2}$. The detailed analysis and calculation process of solution $\Phi(x,t;\lambda)$ is given in \cite{LinglmS-21}. Furthermore, we obtain the expressions of the spectral parameter $\lambda$ by the uniform parameter $z$ with different values of $l$ in the following:
\begin{subequations}\label{eq:lambda-elliptic}
	\begin{align}
		\lambda(z)=&\frac{\ii \alpha}{2}\frac{\dn(\ii(z-l))\sn(\ii(z-l))}{\cn(\ii(z-l))},\qquad l=0, \label{eq:lambda-elliptic-0}\\
		\lambda(z)=&\frac{\ii \alpha k^2}{2}\frac{\sn(\ii(z-l))\cn(\ii(z-l))}{\dn(\ii(z-l))},\qquad l=\frac{K'}{2} \label{eq:lambda-elliptic-K}.
	\end{align}
\end{subequations}
The conformal map $\lambda(z)$ maps the rectangular area $S$ onto the whole complex plane with two cuts, where the region $S$ is defined as
\begin{equation}\label{eq:set-S}
	S:=\left\{ z\in \mathbb{C}\left| -K'+l\le \Re(z)\le K'+l, -\frac{K}{2}\le\Im(z)\le \frac{K}{2},\quad l=0 \,\, \text{  or  } \,\, \frac{K'}{2}\right.\right\}.
\end{equation} 
 Therefore, the studies of the spectral parameter $\lambda$ over the whole complex plane with two cuts (the Riemann surface with genus-1) are converted to analyze the uniform parameter $z$ in the rectangular region $S$, which is proved in \cite{LinglmS-21}.

Based on the Darboux-B\"{a}cklund transformation, the Darboux matrix $\mathbf{T}^{[N]}(x,t;\lambda)$, Theorem \ref{theorem:T}, and Theorem \ref{theorem:u-N-breather}, the exact expression of the solution $u^{[N]}(x,t)$ in \eqref{eq:u-N-breather} could be rewritten in terms of theta functions.

\begin{theorem}\label{theorem:u-N}
	The solution $u^{[N]}(x,t)$ of equation \eqref{eq:mKdV-equation} could be written as 
	\begin{equation}\label{eq:mKdV-solution-xi-n-1}
		u^{[N]}(x,t)=\frac{\alpha \vartheta_2\vartheta_4}{\vartheta_3\vartheta_3(\frac{2\ii l}{2K})}\left(\frac{\vartheta_4(\frac{\alpha\xi}{2K})}{\vartheta_2(\frac{\alpha\xi+2\ii l}{2K})}\right)^{m-1}\frac{\det(\mathcal{M})}{\det(\mathcal{D})}\ee^{-\alpha\xi Z(2\ii l+K)},
	\end{equation}
	where $\xi=x-st$, the matrices $\mathcal{M}$ and $\mathcal{D}$ are $m\times m$, whose elements are given by 
	\begin{equation}
		\begin{split}
			(\mathcal{M})_{i,j}=&
			\mathbf{E}_i^{\dagger}
			\mathbf{r}_i^{*}
			\begin{bmatrix}
				-\frac{\vartheta_2(\frac{\ii(z_i^*-z_j+2l)+\alpha\xi}{2K})}{\vartheta_1(\frac{\ii(z_i^*-z_j)}{2K})} & 
				-\frac{\vartheta_4(\frac{\ii(z_i^*+z_j+2l)+\alpha\xi}{2K})}{\vartheta_3(\frac{\ii(z_i^*+z_j)}{2K})} \\
				\frac{\vartheta_4(\frac{\ii(-z_i^*-z_j+2l)+\alpha\xi}{2K})}{\vartheta_3(\frac{\ii(-z_i^*-z_j)}{2K})} & 
				\frac{\vartheta_2(\frac{\ii(z_j-z_i^*+2l)+\alpha\xi}{2K})}{\vartheta_1(\frac{\ii(z_j-z_i^*)}{2K})}
			\end{bmatrix}\mathbf{r}_j^{-1}\mathbf{E}_j,\\
			(\mathcal{D})_{i,j}=&\mathbf{E}_i^{\dagger}
			\begin{bmatrix}
				-\frac{\vartheta_4(\frac{\ii(z_i^*-z_j)+\alpha\xi}{2K})}{\vartheta_1(\frac{\ii(z_i^*-z_j)}{2K})} & 
				\frac{\vartheta_2(\frac{\ii(z_i^*+z_j)+\alpha\xi}{2K})}{\vartheta_3(\frac{\ii(z_i^*+z_j)}{2K})} \\
				\frac{\vartheta_2(\frac{-\ii(z_i^*+z_j)+\alpha\xi}{2K})}{\vartheta_3(\frac{-\ii(z_i^*+z_j)}{2K})} & 
				\frac{\vartheta_4(\frac{\ii(z_j-z_i^*)+\alpha\xi}{2K})}{\vartheta_1(\frac{\ii(z_j-z_i^*)}{2K})}
			\end{bmatrix}\mathbf{E}_j,	 
		\end{split}
	\end{equation}
	\begin{equation}\label{eq:r_i}
	\mathbf{r}_i=\mathrm{diag}\left( r_i^{-1}, r_i \right),
	\qquad r_i\equiv r(z_i):=\frac{\vartheta_2(\frac{\ii(z_i+l)}{2K})}{\vartheta_4(\frac{\ii(z_i-l)}{2K})},\qquad
	\mathbf{E}_i=\begin{bmatrix}
		E_1(z_i) &  c_iE_2(z_i)
	\end{bmatrix}^{\top},
\end{equation}	
$E_1(z)$ and $E_2(z)$ are defined in equation \eqref{eq:E1-E2}, and $m=n_1+2n_2$, the parameters $n_1$ and $n_2$ will be determined in equation \eqref{eq:multi-T}.

\end{theorem}
  By the different dynamic behaviors of those elliptic-localized solutions and corresponding relationships between numbers $N$ and $m$, solutions $u^{[N]}(x,t)$ are named as follows.
	\begin{itemize}
		\item  If $n_2=0$ ($m=N$), which implies the spectral parameters satisfy $\lambda_i\in \ii \mathbb{R}$, $i=1,2,\cdots,N$ and the multi-fold Darboux matrix is $\mathbf{T}^{[N]}(x,t;\lambda)=\mathbf{T}^{\mathrm{P}}_N(x,t;\lambda)\mathbf{T}^{\mathrm{P}}_{N-1}(x,t;\lambda)\cdots \mathbf{T}^{\mathrm{P}}_1(x,t;\lambda)$ defined in \eqref{eq:multi-T}, the solution $u^{[N]}(x,t)$ is called the {\bf multi elliptic-soliton solution}.
		
		\item If $n_1=0$ ($m=2N$), which implies the spectral parameters satisfy $\lambda_i\in\mathbb{C}\backslash(\mathbb{R}\cup \ii \mathbb{R})$, $i=1,2,\cdots,N$, and the multi-fold Darboux matrix is $\mathbf{T}^{[N]}(x,t;\lambda)=\mathbf{T}^{\mathrm{C}}_N(x,t;\lambda)\mathbf{T}^{\mathrm{C}}_{N-1}(x,t;\lambda)\cdots \mathbf{T}^{\mathrm{C}}_1(x,t;\lambda)$ in \eqref{eq:multi-T}, the solution $u^{[N]}(x,t)$ is called the {\bf multi elliptic-breather solution}. 
		
		\item  If $n_1,n_2\neq 0$ ($N<m<2N$), which implies the spectral parameters contain $\lambda_i\in \ii \mathbb{R}$ and $\lambda_i\in \mathbb{C}\backslash(\mathbb{R}\cup\ii \mathbb{R})$, $i=1,2,\cdots,N$, and the multi-fold Darboux matrix is $\mathbf{T}^{[N]}(x,t;\lambda)=\mathbf{T}^{\mathrm{J}}_N(x,t;\lambda)\mathbf{T}^{\mathrm{J}}_{N-1}(x,t;\lambda)\cdots \mathbf{T}^{\mathrm{J}}_1(x,t;\lambda)$,  $J=\mathrm{P}, \mathrm{C}$, in \eqref{eq:multi-T}, the solution $u^{[N]}(x,t)$ is called the {\bf multi elliptic-soliton-breather solution}.
	\end{itemize}

Based on the above exact solutions, we conduct a series of analyses on them and then get the following conclusions. For the convenience of study, we will make a rotation $\xi=x-st$, $\tau=t$ on the solution $u^{[N]}(x,t)$ and define it as
\begin{equation}\label{eq:xi-x-solution}
	\hat{u}^{[N]}(\xi,\tau) \,
	\xlongequal{\xi=x-st,\tau=t} \,
	u^{[N]}(x,t).
\end{equation}
Then, we mainly consider asymptotic expressions of function $\hat{u}^{[N]}(\xi,\tau)$ under the $\xi$ and $\tau$ axis. Define $L_i^{\pm}$, $i=1,2,\cdots,N$, as the evolution direction of breathers or solitons for the solution $\hat{u}^{[N]}(\xi,\tau)$ in \eqref{eq:xi-x-solution}. Symbols `$\pm$' indicate positive/negative directions with respect to time. The $R_i^{\pm}$ denote the region between the lines $L_{i-1}^{\pm}$ and $L_{i}^{\pm}$, $i=1,2,\cdots,N$ respectively. Figure \ref{fig:li} provides a sketch map of $L_i^{\pm}$ and $R_i^{\pm}$, $i=1,2,\cdots,N$. As $\tau\rightarrow \pm \infty$, we get the following asymptotic expressions. Along line $L_q^{+}$, we divided lines $L_i^{+}$, $i=1,2,\cdots,N$ into two categories by line $L_q^{+}$. On the left side of line $L_i^{+}$, $i=1,2,\cdots,q-1$, we define the number of related spectral parameters $\lambda_i\in \ii \mathbb{R}$ and $\lambda_i \in \mathbb{C}\backslash(\ii \mathbb{R}\cup \mathbb{R})$, $i=1,2,\cdots,q$ as $q_1$ and $q_2$ respectively. Then, we get
	\begin{equation}\label{eq:defie-q}
		q\equiv q(q_1,q_2):=q_1+q_2+1, \qquad h\equiv h(q_1,q_2):=q_1+2q_2.
	\end{equation}

\begin{theorem}\label{theorem:exact-N-solution}
	The asymptotic expression of solution $\hat{u}^{[N]}(\xi,\tau)$ along the line $L_q^{\pm}$ as $\tau\rightarrow \pm \infty$ could be rewritten in the following two different forms:
	\begin{itemize}
		\item[(i)] Along line $L_q^{\pm}$ with $\lambda_q\in \ii \mathbb{R}$, i.e., there exists only one parameter $z_h$ satisfies $\eta_h=\text{const}$ defined in \eqref{eq:E-2-hat}, as $\tau\rightarrow \pm \infty$, the asymptotic expression of solution $\hat{u}^{[N]}(\xi,\tau)$ is 
	\begin{equation}\label{eq:mKdV-solution-xi-n-k-vec}
		\begin{split}
			\hat{u}^{[N]}(\xi,\tau;L_q^{\pm})= &\frac{\alpha \vartheta_2\vartheta_4}{\vartheta_3\vartheta_3(\frac{2\ii l}{2K})}\left(\frac{\vartheta_4(\frac{\alpha\xi}{2K})}{\vartheta_2(\frac{\alpha\xi+2\ii l}{2K})}\right)^{m-1} \frac{\det\left(\sum_{i,j=1}^2(\ii)^{i+j}\mathbf{X}_i^{[h,\pm]\dagger}\mathbf{Y}^{[2]}_i \mathcal{M}^{[i,j]}\mathbf{Y}^{[1]}_j\mathbf{X}_j^{[h,\pm]} \right)}{\det\left(\sum_{i,j=1}^2(-1)^{j}\mathbf{X}_i^{[h,\pm]\dagger}\mathcal{D}^{[i,j]}\mathbf{X}_j^{[h,\pm]}\right)\ee^{\alpha\xi Z(2\ii l+K)}} \\
			&+\mathcal{O}\left(\exp\left(-\min_{j\neq h}
			\Re(I_j)|v_h-v_j||t|\right)\right), \qquad \qquad \tau \rightarrow \pm \infty,
		\end{split}	
	\end{equation}
	 where the relation between $q$ and $h$ is defined in equation \eqref{eq:defie-q}, matrices $\mathbf{Y}^{[1]}_i,\mathbf{Y}^{[2]}_i$ are defined in \eqref{eq:Y}, $\mathcal{M}^{[i,j]},\mathcal{D}^{[i,j]}$ are defined in \eqref{eq:M-H-zi-zj} and
	\begin{equation}\label{eq:X1-X2-hat}
		\begin{split}
			&\mathbf{X}_1^{[h,+]}:=\mathrm{diag} \left(
			0, \cdots , 0, 1,\overbrace{1,\cdots ,1}^{m-h}
			\right), \qquad 
			\mathbf{X}_2^{[h,+]}:=\mathrm{diag} \left( \overbrace{1,\cdots ,1}^{h-1},
			\ee^{\eta_h},0, \cdots , 0
			\right),\\
			&\mathbf{X}_1^{[h,-]}:=\mathrm{diag} \left(
			\overbrace{1,\cdots ,1}^{h-1},
			\ee^{-\eta_h},0, \cdots , 0
			\right), \quad 
			\mathbf{X}_2^{[h,-]}:=\mathrm{diag} \left( 0, \cdots , 0,1, \overbrace{1,\cdots ,1}^{m-h}
			\right).
		\end{split}
	\end{equation}
\item[(ii)] Along the line $L_q^{\pm}$ with $\lambda_q \in \mathbb{C}\backslash(\ii \mathbb{R}\cup \mathbb{R})$, i.e., there exist two parameters $z_h,z_{h+1}$ satisfy $\eta_h=\eta_{h+1}=\text{const}$, as $\tau\rightarrow \pm \infty$, 
the asymptotic expression of solution $\hat{u}^{[N]}(\xi,\tau)$ also could be written as \eqref{eq:mKdV-solution-xi-n-k-vec}, with the matrices
\begin{equation}\label{eq:X1-X2-hat-2}
	\begin{split}
		&\mathbf{X}_1^{[h,+]}:=\mathrm{diag} \left(
		0, \cdots , 0, 1, 1, \overbrace{1,\cdots ,1}^{m-h-1}
		\right), \qquad
		\mathbf{X}_2^{[h,+]}:=\mathrm{diag} \left( \overbrace{1,\cdots ,1}^{h-1},
		\ee^{\eta_h},\ee^{\eta_{h+1}},0, \cdots , 0
		\right),\\
		&\mathbf{X}_1^{[h,-]}:=\mathrm{diag} \left(
		\overbrace{1,\cdots ,1}^{h-1},
		\ee^{-\eta_h},\ee^{-\eta_{h+1}},0, \cdots , 0
		\right), \quad 
		\mathbf{X}_2^{[h,-]}:=\mathrm{diag} \left( 0, \cdots , 0, 1, 1, \overbrace{1,\cdots ,1}^{m-h-1}
		\right).
	\end{split}
	\end{equation}
The relation between $q$ and $h$ is defined in equation \eqref{eq:defie-q}.
	\end{itemize}
\end{theorem}
Utilizing the theta ratio determinant in \cite{Takahashi-16}, we could simplify the first case of the solution $\hat{u}^{[N]}(\xi,\tau;L_q^{\pm})$ \eqref{eq:mKdV-solution-xi-n-k-vec} in Remark \ref{remark:mKdV-solution-asy-v-simp-i}. In addition, the asymptotic expressions on the region $R_{k}^{\pm},k=1,2,\cdots,N$ are obtained. 
\begin{theorem}\label{theorem:asy-R-elliptic}
	The asymptotic expressions on the region $R_{k}^{\pm},k=1,2,\cdots,N,$ could be divided in the following two types: 
	\begin{itemize}
		\item[(i)] Along line $L_{q-1}$ with $\lambda_{q-1}\in \ii \mathbb{R}$, 
		as $\tau\rightarrow \pm \infty$, 
		the asymptotic expression on the region $R_{q}^{\pm}$ could be written as
		\begin{equation}\label{eq:solution-u-n-infy-R-cn}
			\begin{split}
					\hat{u}^{[N]}(\xi,\tau;R_{q}^{\pm}) \rightarrow&(-1)^{p}\alpha k \cn(\alpha \xi +s_{h,h}^{\pm}), \qquad \tau\rightarrow \pm \infty, \qquad l=0,
			\end{split}
		\end{equation}
	or 
		\begin{equation}\label{eq:solution-u-n-infy-R-dn}
		\hat{u}^{[N]}(\xi,\tau;R_{q}^{\pm}) \rightarrow(-1)^{p}\alpha \dn(\alpha \xi +s_{h,h}^{\pm}),  
	\qquad \tau\rightarrow \pm \infty, \qquad l=\frac{K'}{2},
	\end{equation}
 in which the relations between $q$ and $h$ are defined in equation \eqref{eq:defie-q};  $p$ is the number of spectral parameter $\lambda_i\in \ii \mathbb{R}$, $i=1,2,\cdots,N$, satisfying $|\Im(\lambda_i)|>\frac{\alpha(1+k')}{2}$ when $l=\frac{K'}{2}$ or $|\Im(\lambda_i)|>\frac{\alpha}{2}$ when $l=0$; and
	\begin{equation}\label{eq:s-i-j}
		\begin{split}
			s_{i,j}^{\pm}=&\pm\left(\sum_{k=1}^{i}2 \Im(z_k)-\sum_{k=j+1}^{m}2\Im(z_k)\right).
		\end{split}
	\end{equation}
\item[(ii)] Along line $L_{q-1}$ with $\lambda_{q-1}\in \mathbb{C}\backslash(\ii \mathbb{R}\cup \mathbb{R})$, as $\tau \rightarrow \pm \infty$, the asymptotic expression on the region $R_{q}^{\pm}$ also could be written in \eqref{eq:solution-u-n-infy-R-cn} or \eqref{eq:solution-u-n-infy-R-dn} by replacing the parameter $s_{h,h}^{\pm}$ with $s_{h+1,h+1}^{\pm}$.
	\end{itemize}
\end{theorem}

\begin{theorem}\label{theorem:symm}
	When $c_i=1$, $i=1,2,3,\cdots,N$, the solutions $u^{[N]}( x,t)$ in equation \eqref{eq:mKdV-solution-xi-n-1} have the symmetric relation: $u^{[N]}( x,t)=u^{[N]}(- x,-t)$.
\end{theorem}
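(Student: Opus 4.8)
The plan is to verify the identity directly from the theta-function formula \eqref{eq:mKdV-solution-xi-n-1}, tracking how each building block behaves under the reflection $(x,t)\mapsto(-x,-t)$. First I would record the elementary fact that the reflection sends the travelling coordinate $\xi=x-st\mapsto-\xi$, so that the seed profiles $\cn(\alpha\xi,k)$ and $\dn(\alpha\xi,k)$, being even, are already invariant; the task is to show the Darboux-dressed solution inherits this invariance once every $c_i=1$. Substituting $(x,t)\mapsto(-x,-t)$ into the definitions \eqref{eq:E1-E2} and noting that both the $(x-st)$-coefficient and the $t$-coefficient change sign, I would establish $E_1(z)\big|_{(-x,-t)}=E_1(z)^{-1}$ and $E_2(z)\big|_{(-x,-t)}=E_2(z)^{-1}$, and correspondingly $E_1^*(z)\mapsto(E_1^*(z))^{-1}$, $E_2^*(z)\mapsto(E_2^*(z))^{-1}$ because $x,t$ are real.

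Next I would exploit the parity of the Jacobi theta functions, namely that $\vartheta_1$ is odd while $\vartheta_2,\vartheta_3,\vartheta_4$ are even. Since $\alpha\xi$ enters the entries of $\mathcal{M}$ and $\mathcal{D}$ linearly, the flip $\xi\mapsto-\xi$ merely permutes the four theta ratios of each $2\times2$ kernel. For $\mathcal{D}$ this is the clean conjugation $A^{\mathcal{D}}_{ij}(-\xi)=\sigma_1 A^{\mathcal{D}}_{ij}(\xi)\sigma_1$ with $\sigma_1=\left[\begin{smallmatrix}0&1\\1&0\end{smallmatrix}\right]$, whereas the kernel of $\mathcal{M}$ obeys $A^{\mathcal{M}}_{ij}(-\xi)=J A^{\mathcal{M}}_{ij}(\xi)J^{-1}$ with $J=\left[\begin{smallmatrix}0&1\\-1&0\end{smallmatrix}\right]$ when $l=0$; for $l=\tfrac{K'}{2}$ the shifts by $2l=K'$ inside the $\mathcal{M}$-kernel additionally invoke the quasi-periodicity of $\vartheta_i$, producing compensating exponential factors. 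Absorbing $\sigma_1$ into the flanking row and column vectors of $\mathcal{D}$ swaps $E_1\leftrightarrow E_2$; combined with $E\mapsto E^{-1}$ and $c_i=1$, the swapped-reciprocal vectors are exactly proportional to the originals, with scalar $1/(E_1^*(z_i)E_2^*(z_i))$ on row $i$ and $1/(E_1(z_j)E_2(z_j))$ on column $j$, whence $\det\mathcal{D}(-x,-t)=\prod_i (E_1^*(z_i)E_2^*(z_i))^{-1}\prod_j (E_1(z_j)E_2(z_j))^{-1}\det\mathcal{D}(x,t)$.

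The delicate step is the numerator. Because the $\mathcal{M}$-kernel is conjugated by $J$ rather than by $\sigma_1$, and because its flanking vectors carry the weights $r_i^{*\pm1}$ and $r_j^{\pm1}$ from \eqref{eq:r_i}, the reflected row and column vectors are \emph{not} entrywise proportional to the originals: the $r$-weights obstruct a naive diagonal factorization. I expect to resolve this by matching $\mathcal{M}(-x,-t)$ to the transpose $\mathcal{M}(x,t)^{\top}$ instead, which leaves $\det\mathcal{M}$ unchanged, and by using the symmetry of the spectral data $\{z_i,z_i^*\}$ underlying the breather reduction, together with $c_i=1$, to reconcile the $r$-weights under the interchange $z_i^*\leftrightarrow z_j$. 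The aim is to show that $\det\mathcal{M}(-x,-t)$ acquires exactly the same product $\prod_i (E_1^*E_2^*)^{-1}\prod_j (E_1 E_2)^{-1}$ (and, for $l=\tfrac{K'}{2}$, the same quasi-periodicity factors) as $\det\mathcal{D}$, so that all these factors cancel in the ratio $\det\mathcal{M}/\det\mathcal{D}$.

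Finally I would treat the scalar prefactor. Under $\xi\mapsto-\xi$ the evenness of $\vartheta_4$ gives $\vartheta_4(-\tfrac{\alpha\xi}{2K})=\vartheta_4(\tfrac{\alpha\xi}{2K})$, the factor $(\vartheta_2(\tfrac{\alpha\xi+2\ii l}{2K}))^{-(m-1)}$ becomes $(\vartheta_2(\tfrac{\alpha\xi-2\ii l}{2K}))^{-(m-1)}$, and the exponent of $\ee^{-\alpha\xi Z(2\ii l+K)}$ changes sign; for $l=0$ one uses $Z(K)=0$, so the exponential is unity and the $2l$-shift is absent, making the prefactor manifestly invariant, while for $l=\tfrac{K'}{2}$ its transformation is precisely what cancels the residual $2l$-quasi-periodicity factors left over from $\mathcal{M}$. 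Assembling the pieces—an invariant (or mutually cancelling) prefactor and the cancellation of the $E$-products between $\det\mathcal{M}$ and $\det\mathcal{D}$—gives $u^{[N]}(-x,-t)=u^{[N]}(x,t)$. The principal obstacle is the numerator matching of the third step: correctly combining the $J$-conjugation, the reciprocated $E$'s and the $r_i$-weights under the constraint $c_i=1$, which forces one to use the eigenvalue symmetry carefully, especially in the $\dn$-type case $l=\tfrac{K'}{2}$ where theta quasi-periodicity intervenes.
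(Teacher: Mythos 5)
Your plan is a genuinely different route from the paper's: you attack the theta-determinant formula \eqref{eq:mKdV-solution-xi-n-1} head-on, whereas the paper works one level upstream. It first proves the matrix involution \eqref{eq:Phi-sym}, $\Phi(-x,-t;\lambda)=\left[\begin{smallmatrix}0&-1\\-1&0\end{smallmatrix}\right]\Phi(x,t;\lambda)\left[\begin{smallmatrix}0&1\\1&0\end{smallmatrix}\right]$, directly from \eqref{eq:Lax-solution-Phi} and the identity $E_2^{-1}=\ee^{\alpha\xi Z(2\ii l+K)}E_1$ (the same identity that makes your products $E_1E_2$ cancel). With $c_i=1$ the vector $[1,c_i]^{\top}$ is fixed by the swap, so $\Phi_i(-x,-t)=-\sigma_1\Phi_i(x,t)$ as in \eqref{eq:Phi-asy-1}; hence every Gram entry $\Phi_i^{\dagger}\Phi_j$, and with it the matrix $\mathbf{M}_m$, is invariant by \eqref{eq:Phi-asy-2}, and the symmetry drops out of the compact Darboux formula \eqref{eq:mKdV-solution-n} in a few lines. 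All of your theta bookkeeping (parity of $\vartheta_i$, $\sigma_1$- and $J$-conjugation of the $2\times2$ kernels, quasi-periodicity at $l=\tfrac{K'}{2}$) is thereby bypassed. Your denominator computation is correct as far as it goes, and you correctly locate where $c_i=1$ is used: the swapped-reciprocal flanking vectors are proportional to the originals only when $c_i=1$, which is exactly the role this condition plays in the paper.

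The genuine gap is that your third step — the numerator — is announced rather than proven (``I expect to resolve this by matching \ldots the aim is to show \ldots''), and it is precisely the decisive point. Moreover the proposed fix, matching $\mathcal{M}(-x,-t)$ to $\mathcal{M}(x,t)^{\top}$, does not work as stated: the reflected $(i,j)$ entry still carries the pair $(z_i^*,z_j)$ with a $J$-conjugated kernel, while the transposed entry $(\mathcal{M})_{ji}$ carries $(z_j^*,z_i)$, so transposition alone cannot reconcile them — one must also complex-conjugate, i.e.\ compare $\mathcal{M}(-x,-t)$ with $\mathcal{M}(x,t)^{\dagger}$ and then invoke the reality of $u^{[N]}$ (equivalently, the skew-Hermiticity $\mathbf{M}_m^{\dagger}=-\mathbf{M}_m$ forced by the structure $\frac{\Phi_i^{\dagger}\Phi_j}{\lambda_j-\lambda_i^*}$, which the paper's argument exploits implicitly when interchanging $\Phi_{i,1}\leftrightarrow\Phi_{i,2}$ in \eqref{eq:mKdV-solution-n}). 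You can check the obstruction concretely: after absorbing $J$ into the row $[r_i^{*-1}E_1^{*-1},\,r_i^{*}E_2^{*-1}]$, proportionality to the original row $[r_i^{*-1}E_1^{*},\,r_i^{*}E_2^{*}]$ would require the two incompatible scalars $-r_i^{*2}/(E_1^*E_2^*)$ and $r_i^{*-2}/(E_1^*E_2^*)$, so no diagonal factorization exists and the $r$-weights from \eqref{eq:r_i} never cancel row-by-row. Until you carry out the conjugate-transpose matching explicitly — including verifying that the quasi-periodicity factors in the $\dn$-type case $l=\tfrac{K'}{2}$ reproduce exactly the scalar extracted from $\det\mathcal{D}$ and the prefactor — the argument is incomplete at its core, and completing it would essentially reconstruct, entry by entry, the one-line involution the paper proves once for $\Phi$.
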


\begin{theorem}\label{theorem:Phi-k}
	 If $l=\frac{K'}{2}$ and $\lambda_i\in \ii \mathbb{R}$, $i=1,2,\cdots,N$, satisfying $|\Im(\lambda_i)|<\frac{\alpha(1-k')}{2}$, the multi elliptic-soliton solutions $u^{[N]}(x,t)$ degenerate into the constant solution. Excepting the above case, for $l=0$ and $l=\frac{K'}{2}$, as $k\rightarrow 0^+$, the multi elliptic-soliton/breather/soliton-breather solutions would degenerate into the multi soliton/breather/soliton-breather solutions under the constant or vanishing backgrounds, respectively.
\end{theorem}

\subsection{Outline for this work}
The organization of this work is as follows. In section \ref{sec:Darboux-transformation}, we obtain a uniform expression of the multi elliptic-localized solutions in theta functions and list examples to show different dynamic behaviors. In section \ref{sec:asymptotially analysis}, to begin, we prove a symmetric property of the multi elliptic-localized solutions $u^{[N]}(x,t)$ with appropriate restrictions, which reflects the strictly elastic collisions between the breathers and solitons of solution $u^{[N]}(x,t)$. Moreover, we take an asymptotic analysis of solutions along the line $L_k^{\pm}$ and on the region $R_{k}^{\pm}$ as $t\rightarrow\pm\infty$. Furthermore, in section \ref{sec:elliptic-constant}, the asymptotic expression of solutions as $k\rightarrow 0^+$ is obtained, which shows the relationship between multi elliptic-soliton/breather solutions and the solitons/breathers on the constant background. The conclusions and discussions are involved in section \ref{sec:conclusion-discussion}.

\section{The elliptic-localized solutions of mKdV equation}\label{sec:Darboux-transformation}

In this section, our primary goal is to construct the exact expression of solutions for the mKdV equation. Based on the Darboux-B\"{a}cklund transformation, we construct the solutions of the mKdV equation \eqref{eq:mKdV-equation} and express them in theta functions. Furthermore, we vividly exhibit figures, including different dynamic behaviors of elliptic-localized solutions.

\subsection{The Darboux-B\"{a}cklund transformation}
So far, the Darboux-B\"{a}cklund transformation \cite{Cieslinski-09,Guo-2012-Darboux,MatveevS-1991-darboux} has been very mature, so we would not repeat it too much. Here, we just provide the conclusions, which are useful in the following analysis. From the Darboux transformation, we know that the Darboux matrix $\mathbf{T}^{[1]}(\lambda;x,t)$ could admit a new equation
\begin{equation}\label{eq:Phi-1}
	\Phi_{ x}^{[1]}(x,t;\lambda)=\mathbf{U}^{[1]}(\lambda;u^{[1]})\Phi^{[1]}(x,t;\lambda), \,\, \Phi^{[1]}(x,t;\lambda):=\mathbf{T}^{[1]}(\lambda;x,t)\Phi(x,t;\lambda), \, \mathbf{U}^{[1]}(\lambda;u^{[1]})\equiv\mathbf{U}(\lambda;u^{[1]}).
\end{equation} 
Based on the symmetric properties of matrices $\mathbf{U}(\lambda;u)$ and $\mathbf{V}(\lambda;u)$ in equation \eqref{eq:UV-sym}, we obtain
	\begin{equation}\label{eq:Phi-inver-top}
		\Phi(x,t;\lambda)\Phi^{\top}(x,t;-\lambda)=\mathbb{I} \qquad \text{and}\qquad  \Phi(x,t;\lambda)\Phi^{\dagger}(x,t;\lambda^*)=\mathbb{I}.
	\end{equation}
	Combining equations \eqref{eq:UV-sym}, \eqref{eq:Phi-1} and \eqref{eq:Phi-inver-top}, we get that the matrix $\mathbf{T}^{[1]}(\lambda;x,t)$ satisfies the following properties.
\begin{prop}\label{prop:symmetric-T}
	The Darboux matrix $\mathbf{T}^{[1]}(\lambda;x,t)$ satisfies
	 \begin{equation}\label{eq:T-sym}
			(\mathbf{T}^{[1]}(\lambda;x,t))^{-1}=(\mathbf{T}^{[1]}(\lambda^*;x,t))^{\dagger}, \qquad (\mathbf{T}^{[1]}(\lambda;x,t))^{-1}=(\mathbf{T}^{[1]}(-\lambda;x,t))^{\top}.
	\end{equation} 
\end{prop}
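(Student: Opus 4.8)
The plan is to deduce the two symmetry relations for $\mathbf{T}(\lambda;x,t)$ directly from the symmetries of $\mathbf{U}$ recorded in \eqref{eq:UV-sym}, using the intertwining relation that defines a Darboux matrix. The starting point is that $\mathbf{T}$ is constructed so that \eqref{eq:Phi-1} holds for the transformed potential $u^{[1]}$; differentiating $\Phi^{[1]}=\mathbf{T}\Phi$ and using both $\Phi_x=\mathbf{U}\Phi$ and $\Phi_x^{[1]}=\mathbf{U}^{[1]}\Phi^{[1]}$ yields the zero-curvature (intertwining) identity
\begin{equation}\label{eq:intertwine-plan}
	\mathbf{T}_x(\lambda;x,t)+\mathbf{T}(\lambda;x,t)\mathbf{U}(\lambda;u)=\mathbf{U}^{[1]}(\lambda;u^{[1]})\mathbf{T}(\lambda;x,t).
\end{equation}
I would treat \eqref{eq:intertwine-plan} as the sole structural input and show that each claimed symmetry of $\mathbf{T}$ is consistent with it, then fix the remaining normalization ambiguity by evaluating at a convenient reference point (for instance $\lambda\to\infty$, where the Darboux matrix is normalized to the identity by construction of the transformation).

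For the first relation $\mathbf{T}^{-1}(\lambda)=\mathbf{T}^{\dagger}(\lambda^*)$, the key step is to take the Hermitian conjugate of \eqref{eq:intertwine-plan} after replacing $\lambda$ by $\lambda^*$, and then invoke $\mathbf{U}^{\dagger}(\lambda^*;u)=-\mathbf{U}(\lambda;u)$ together with the reality of the potential $u$ (so that $u^{[1]}$ is again a real mKdV field and $\mathbf{U}^{[1]}$ obeys the same symmetry). Setting $\mathbf{G}(\lambda):=\mathbf{T}^{\dagger}(\lambda^*)\mathbf{T}(\lambda)$, one checks that $\mathbf{G}$ satisfies a homogeneous linear ODE in $x$ of the form $\mathbf{G}_x=[\,\mathbf{U},\mathbf{G}\,]$-type with the same structure, forcing $\mathbf{G}$ to be $x$-independent; combined with the analogous $t$-computation using the $\mathbf{V}$-symmetry it is constant in $(x,t)$, and evaluating at the normalization point gives $\mathbf{G}\equiv\mathbf{I}$, i.e. $\mathbf{T}^{\dagger}(\lambda^*)\mathbf{T}(\lambda)=\mathbf{I}$. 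The second relation $\mathbf{T}^{-1}(\lambda)=\mathbf{T}^{\top}(-\lambda)$ is proved by the same template, this time taking the transpose of \eqref{eq:intertwine-plan} evaluated at $-\lambda$ and using $\mathbf{U}^{\top}(-\lambda;u)=-\mathbf{U}(\lambda;u)$, so that $\mathbf{H}(\lambda):=\mathbf{T}^{\top}(-\lambda)\mathbf{T}(\lambda)$ is again $(x,t)$-independent and equals $\mathbf{I}$ at the reference point.

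The main obstacle I anticipate is not the algebra of conjugating and transposing \eqref{eq:intertwine-plan}, but rather justifying that the \emph{transformed} matrix $\mathbf{U}^{[1]}$ inherits exactly the same two symmetry properties \eqref{eq:UV-sym} as $\mathbf{U}$. This requires knowing that the Darboux--B\"acklund transformation maps a real solution $u$ of \eqref{eq:mKdV-equation} to another real solution $u^{[1]}$ and is compatible with the spectral symmetries used to build $\mathbf{T}$; concretely, the eigenvalues and eigenvector data (the $z_i,z_i^*$ pairings and the parameters $c_i$) must be chosen respecting the involutions $\lambda\mapsto\lambda^*$ and $\lambda\mapsto-\lambda$. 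Once this closure property of the transformation is in hand, the constancy argument and the normalization at $\lambda\to\infty$ are routine, and both identities in \eqref{eq:T-sym} follow.
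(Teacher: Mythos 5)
There is a genuine gap, and it sits precisely in the step you describe as routine. From your intertwining identity one computes, for $\mathbf{G}(\lambda):=\mathbf{T}^{\dagger}(\lambda^*;x,t)\mathbf{T}(\lambda;x,t)$,
\begin{equation}
	\mathbf{G}_x=\left[\mathbf{U}(\lambda;u),\mathbf{G}\right],
\end{equation}
because the inner $\mathbf{U}^{[1]}$ terms cancel while the outer $\mathbf{U}$ terms survive as a commutator. But a Lax-type equation of this form does \emph{not} force $x$-independence: its general solution is $\mathbf{G}(x,t;\lambda)=\Phi(x,t;\lambda)\,\mathbf{G}(0,0;\lambda)\,\Phi^{-1}(x,t;\lambda)$, which is constant only if the value at one spacetime point is already (a multiple of) $\mathbb{I}$. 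So you still owe the identity $\mathbf{G}(0,0;\lambda)=\mathbb{I}$ at \emph{finite} $\lambda$, and the normalization $\mathbf{T}\to\mathbb{I}$ as $\lambda\to\infty$ cannot supply it: that is a condition in the spectral variable, not an initial condition for the $(x,t)$-flow, which runs at each fixed $\lambda$ separately. Worse, the intertwining relation alone cannot possibly suffice as "sole structural input": if $\mathbf{T}$ satisfies \eqref{eq:Phi-1}, so does $\tilde{\mathbf{T}}=\Phi^{[1]}\mathbf{C}(\lambda)\Phi^{-1}$ for any $(x,t)$-independent invertible $\mathbf{C}(\lambda)$, and generic $\mathbf{C}$ destroys \eqref{eq:T-sym}. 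To close your argument you must re-admit the explicit structure of $\mathbf{T}$ from \eqref{eq:T-1}--\eqref{eq:T-2}: e.g.\ for $\mathbf{T}_1^{\mathbf{P}}$, with the Hermitian projector $P=\Phi_1\Phi_1^{\dagger}/(\Phi_1^{\dagger}\Phi_1)$ satisfying $P=P^{\dagger}=P^2$, a partial-fraction computation gives $\mathbf{T}(\lambda)\mathbf{T}^{\dagger}(\lambda^*)=\mathbb{I}$ directly (equivalently, $\mathbf{G}$ is rational in $\lambda$ with candidate simple poles at $\lambda_1,\lambda_1^*$, the residues vanish by the projector identity, and Liouville plus $\mathbf{G}\to\mathbb{I}$ finishes). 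Either repair supplies the missing one-point identity, after which your constancy-in-$(x,t)$ step becomes legitimate.

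By contrast, the paper avoids this issue entirely by working one level down, with the fundamental solutions rather than with $\mathbf{T}$: since $\Phi^{\dagger}(x,t;\lambda^*)$ solves the adjoint equation with the \emph{same} $\mathbf{U}$ (by \eqref{eq:UV-sym}), the product $\Phi^{\dagger}(\lambda^*)\Phi(\lambda)$ has identically vanishing derivative --- not a commutator --- so it is genuinely constant, and the normalization $\Phi(0,0;\lambda)=\mathbb{I}$ fixes the constant, giving $\Phi^{-1}(\lambda)=\Phi^{\dagger}(\lambda^*)$ and $\Phi^{-1}(\lambda)=\Phi^{\top}(-\lambda)$; the same symmetries for $\Phi^{[1]}$ then collapse $\mathbf{T}(\lambda)\mathbf{T}^{\dagger}(\lambda^*)=\Phi^{[1]}(\lambda)\bigl(\Phi^{[1]}(\lambda^*)\bigr)^{\dagger}=\mathbb{I}$, and likewise for the transpose relation. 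Note also that the difficulty you flagged as the "main obstacle" --- that $\mathbf{U}^{[1]}$ inherits the symmetries \eqref{eq:UV-sym} because $u^{[1]}$ is again real --- is handled the same implicit way in the paper and is not where your argument fails; the failure is the claim that $\mathbf{G}_x=[\mathbf{U},\mathbf{G}]$ forces constancy.
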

Then, the Darboux matrix is divided into the following two cases to satisfy the symmetries provided in Proposition \ref{prop:symmetric-T}.

\begin{theorem}\label{theorem:T}
	To impose the symmetry \eqref{eq:T-sym}, the Darboux matrix $\mathbf{T}^{[1]}(\lambda; x,t)$ can be divided into the following two types:
	\begin{itemize}
		\item If $\lambda_1=\lambda(z_1)\in \ii \mathbb{R}$ and $\Phi_1\Phi_1^{\dagger}=(\Phi_1\Phi_1^{\dagger})^{\top}$, the Darboux matrix could be written as
		\begin{equation}\label{eq:T-1}
			\mathbf{T}^{\mathrm{P}}_1(\lambda; x,t)=\mathbb{I}-\frac{\lambda_1-\lambda_1^*}{\lambda-\lambda_1^*}\frac{\Phi_1\Phi_1^{\dagger}}{\Phi_1^{\dagger}\Phi_1}, \qquad \quad\Phi_1:=\Phi( x,t;\lambda_1)\mathbf{c}_1=\Phi( x,t;\lambda_1)\left[ 1,c_1\right]^{\top}, \quad c_1\in \mathbb{C}\setminus \left\{0\right\};
		\end{equation}
	 \item If $\lambda_1=\lambda(z_1)\in \mathbb{C}\setminus (\mathbb{R}\cup\ii \mathbb{R})$, the Darboux matrix could be written as
	 	\begin{equation}\label{eq:T-2}
	 	\mathbf{T}^{\mathrm{C}}_1(\lambda; x,t)
	 	=\mathbb{I}-
	 	\begin{bmatrix}
	 		\Phi_1 & \Phi_1^*
	 	\end{bmatrix}\mathbf{M}_2^{-1} \mathbf{D}_2^{-1}
	 	\begin{bmatrix}
	 		\Phi_1^{\dagger} \\[5pt] \Phi_1^{\top}
	 	\end{bmatrix}, 	\,\, 
 		\mathbf{D}_2={\rm diag}\left(\lambda-\lambda_1^*,\lambda+\lambda_1\right),\,\,	\mathbf{M}_2=\begin{bmatrix}
	 	\frac{\Phi_1^{\dagger} \Phi_1}{\lambda_1-\lambda_1^*} &
	 	\frac{\Phi_1^{\dagger} \Phi_1^*}{-\lambda_1^*-\lambda_1^*} \\
	 	\frac{\Phi_1^{\top}\Phi_1}{\lambda_1+\lambda_1} &
	 	\frac{\Phi_1^{\top}\Phi_1^*}{-\lambda_1^*+\lambda_1}
 	\end{bmatrix}, 
	 \end{equation}
 with $\Phi_1$ defined in \eqref{eq:T-1}.
	\end{itemize}
\end{theorem}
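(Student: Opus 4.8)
The plan is to regard $\mathbf{T}(\lambda;x,t)$ as a one-fold (elementary) Darboux matrix, i.e.\ a rational function of $\lambda$ normalized by $\mathbf{T}(\infty)=\mathbb{I}$ with simple poles and rank-one residues, and to let the two symmetries in \eqref{eq:T-sym} dictate the admissible pole configuration. Writing $\mathcal{P}$ for the poles of $\mathbf{T}$ and $\mathcal{Z}$ for its rank-deficiency points (the poles of $\mathbf{T}^{-1}$), the identity $\mathbf{T}^{-1}(\lambda)=\mathbf{T}^{\dagger}(\lambda^*)$ forces $\mathcal{Z}=\mathcal{P}^*$, while $\mathbf{T}^{-1}(\lambda)=\mathbf{T}^{\top}(-\lambda)$ forces $\mathcal{Z}=-\mathcal{P}$, so that $\mathcal{P}$ must be invariant under the map $p\mapsto -p^*$. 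First I would note that a single pole placed at $\lambda_1^*$ is self-invariant under this map precisely when $\lambda_1^*=-\lambda_1$, that is $\lambda_1\in\ii\mathbb{R}$; otherwise the smallest invariant pole set containing $\lambda_1^*$ is the pair $\{\lambda_1^*,-\lambda_1\}$. This dichotomy is exactly the split into the two cases of the theorem.

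For the first case ($\lambda_1\in\ii\mathbb{R}$), I would posit $\mathbf{T}(\lambda)=\mathbb{I}-\frac{c}{\lambda-\lambda_1^*}P_1$ with $P_1=\Phi_1\Phi_1^{\dagger}/(\Phi_1^{\dagger}\Phi_1)$ the Hermitian rank-one projector onto $\Phi_1$, so that $P_1^2=P_1$, $P_1^{\dagger}=P_1$ and $P_1\Phi_1=\Phi_1$. Requiring $\mathbf{T}(\lambda_1)\Phi_1=0$ (so that the transformed fundamental solution stays regular) pins down $c=\lambda_1-\lambda_1^*$ and yields \eqref{eq:T-1}. The Hermitian symmetry then holds automatically: using $P_1^{\dagger}=P_1$, a one-line residue computation shows that the coefficient of $P_1$ in $\mathbf{T}(\lambda)\mathbf{T}^{\dagger}(\lambda^*)$ vanishes identically. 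Imposing the transpose symmetry $\mathbf{T}(\lambda)\mathbf{T}^{\top}(-\lambda)=\mathbb{I}$ requires, in addition to $\lambda_1^*=-\lambda_1$ (already in force), the relation $P_1^{\top}=P_1$; since the denominator $\Phi_1^{\dagger}\Phi_1$ is scalar, this is equivalent to $\Phi_1\Phi_1^{\dagger}=(\Phi_1\Phi_1^{\dagger})^{\top}$, which is precisely the standing hypothesis of this case. A second short residue cancellation confirms the transpose symmetry and completes $\mathbf{T}^{\mathbf{P}}_1$.

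For the second case ($\lambda_1\notin\ii\mathbb{R}$) I would superpose rank-one residues at the two poles $\lambda_1^*$ and $-\lambda_1$ and fix the residue data by demanding that $\mathbf{T}$ degenerate on the conjugate vectors, i.e.\ $\mathbf{T}(\lambda_1)\Phi_1=0$ and $\mathbf{T}(-\lambda_1^*)\Phi_1^*=0$. Solving this linear system for the residues produces the Gram-type matrix $\mathbf{M}$ assembled from the inner products $\Phi_1^{\dagger}\Phi_1$, $\Phi_1^{\dagger}\Phi_1^*$, $\Phi_1^{\top}\Phi_1$, $\Phi_1^{\top}\Phi_1^*$ divided by the corresponding pole differences, and hence the compact form \eqref{eq:T-2} with $\mathbf{D}=\mathrm{diag}(\lambda-\lambda_1^*,\lambda+\lambda_1)$. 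I would then verify both symmetries directly as residue computations at the four candidate poles $\lambda_1,\lambda_1^*,-\lambda_1,-\lambda_1^*$, using the relations $\Phi^{-1}(\lambda)=\Phi^{\dagger}(\lambda^*)=\Phi^{\top}(-\lambda)$ recorded in \eqref{eq:Phi-inver-top} to track how the column block $[\Phi_1,\Phi_1^*]$ and the row block $[\Phi_1^{\dagger};\Phi_1^{\top}]$ are exchanged under $\dagger$ and $\top$ combined with the pole reflection, together with the $\lambda\to\infty$ limit to check that the constant term equals $\mathbb{I}$.

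I expect the heart of the argument, and the main obstacle, to be this second case. The delicate point is to show that under conjugation paired with $\lambda\mapsto\lambda^*$ (resp.\ transposition paired with $\lambda\mapsto-\lambda$) the kernel matrix $\mathbf{M}$ maps to itself in such a way that all four residues in $\mathbf{T}\mathbf{T}^{\dagger}(\lambda^*)$ and in $\mathbf{T}\mathbf{T}^{\top}(-\lambda)$ cancel pairwise. This rests on the internal Hermitian/symmetric structure of $\mathbf{M}$ (and of $\mathbf{M}^{-1}$) inherited from the symmetries of $\Phi$, together with the nondegeneracy $\det\mathbf{M}\neq 0$ needed to invert $\mathbf{M}$. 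Everything else reduces to the routine bookkeeping of residues and the straightforward verification carried out in the first case.
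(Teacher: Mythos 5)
Your proposal is correct and is essentially the approach the paper itself relies on: the paper defers the proof of Theorem \ref{theorem:T} to \cite{LinglmS-21}, but the same machinery appears in its Appendix proof of Theorem \ref{theorem:u-N-breather} — a rational ansatz normalized to $\mathbb{I}$ at $\lambda=\infty$, the pole set forced by \eqref{eq:T-sym} to be invariant under $\lambda\mapsto-\lambda^*$ (yielding exactly your dichotomy between a self-conjugate pole with $\lambda_1\in\ii\mathbb{R}$ and $\Phi_1\Phi_1^{\dagger}=(\Phi_1\Phi_1^{\dagger})^{\top}$, versus the pole pair $\{\lambda_1^*,-\lambda_1\}$), kernel conditions $\mathbf{T}(\lambda_1)\Phi_1=0$ and $\mathbf{T}(-\lambda_1^*)\Phi_1^*=0$ determining the Cauchy-type matrix $\mathbf{M}$, and verification of both symmetries by residue computations using \eqref{eq:Phi-inver-top}. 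One small adjustment: the cancellation you single out as the main obstacle in the second case is in fact immediate from the kernel conditions rather than from any finer Hermitian structure of $\mathbf{M}$, since every residue of $\mathbf{T}(\lambda)\mathbf{T}^{\dagger}(\lambda^*)$ and of $\mathbf{T}(\lambda)\mathbf{T}^{\top}(-\lambda)$ is a rank-one matrix containing the factor $\mathbf{T}(\lambda_1)\Phi_1$, $\mathbf{T}(-\lambda_1^*)\Phi_1^*$, or a conjugate/transpose thereof (e.g.\ the residue at $\lambda=\lambda_1^*$ of $\mathbf{T}(\lambda)\mathbf{T}^{\dagger}(\lambda^*)$ equals $-[\Phi_1,\Phi_1^*]\mathbf{M}^{-1}\mathbf{e}_1(\mathbf{T}(\lambda_1)\Phi_1)^{\dagger}=0$), so beyond $\det\mathbf{M}\neq 0$ nothing further is needed.
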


The proof of Theorem \ref{theorem:T} is given in \cite{LinglmS-21}, so we would not repeat it here. 

\begin{remark}\label{remark:T}
 The equation \eqref{eq:Phi-inver-top} implies $\Phi^{\top}(x,t;\lambda)=\Phi^{\dagger}(x,t;-\lambda^*)=\Phi^{\dagger}(x,t;\lambda)$, $\lambda\in \ii \mathbb{R}$. Combining the definition of $\Phi_1$ in equation \eqref{eq:T-1}, we obtain 
\begin{equation}
	\Phi_1\Phi_1^{\dagger}=\Phi(x,t;\lambda_1)\begin{bmatrix}1\\ c_1\end{bmatrix}\left[ 1,c_1^*\right]\Phi^{\dagger}(x,t;\lambda_1)=\Phi(x,t;\lambda_1)\begin{bmatrix}
		1 & c_1^* \\ c_1 & |c_1|^2
	\end{bmatrix}
\Phi^{\top}(x,t;\lambda_1).
\end{equation}
When $c_1\in \mathbb{R}$, $\lambda_1\equiv\lambda(z_1)\in \ii \mathbb{R}$, and $l=0$, $\frac{K'}{2}$, the equation $\Phi_1\Phi_1^{\dagger}=(\Phi_1\Phi_1^{\dagger})^{\top}$ holds. In addition, from \cite{LinglmS-21} we know that if $l=\frac{K'}{2}$, $\lambda_1\equiv\lambda(z_1)\in  \ii \mathbb{R}$ with $z_1$ satisfies $\Im(z_1)=\pm \frac{K'}{2}$ and $|c_1|=1$, the symmetry $\Phi_1\Phi_1^{\dagger}=(\Phi_1\Phi_1^{\dagger})^{\top}$ holds.  In addition, by function $\Phi(x,t;\lambda)$ in \eqref{eq:Lax-solution-Phi}, we obtain $\Phi^*(x,t;\lambda)=\Phi(x,t;\lambda)|_{z=-z^*+2l}$.
\end{remark}

Reviewing the Darboux matrix $\mathbf{T}^{[1]}(\lambda;x,t)$ defined in equation \eqref{eq:Phi-1} and combining the B\"{a}cklund transformation, we can obtain a new solution after an iteration. When we iterate over and over again, we can get a large number of solutions. Naturally, we will consider that how to get those solutions under the multi-fold iteration directly. Based on the elementary Darboux transformation $\mathbf{T}_1^{\mathrm{P}}(\lambda;x,t)$ and $\mathbf{T}_1^{\mathrm{C}}(\lambda;x,t)$ provided in Theorem \ref{theorem:T}, we can iterate them to obtain the multi-order ones, called the multi-fold Darboux matrix:  
	\begin{equation}\label{eq:multi-T}
		\mathbf{T}^{[N]}(\lambda;x,t)=\mathbf{T}^{\mathrm{J}}_{N}(\lambda; x,t)\mathbf{T}^{\mathrm{J}}_{N-1}(\lambda; x,t)\cdots\mathbf{T}^{\mathrm{J}}_{2}(\lambda; x,t)\mathbf{T}^{\mathrm{J}}_{1}(\lambda; x,t),\qquad \mathrm{J}=\mathrm{P},\mathrm{C},
	\end{equation}
	where $\mathbf{T}_1^{\mathrm{P}}(\lambda;x,t)$ and $\mathbf{T}_1^{\mathrm{C}}(\lambda;x,t)$ are defined in \eqref{eq:T-1} and \eqref{eq:T-2}, and matrices $\mathbf{T}_i^{\mathrm{J}}(\lambda;x,t)$, $\mathrm{J}=\mathrm{P},\mathrm{C}$, $i=2,3,\cdots,N,$ are defined as
	\begin{equation}\label{eq:T-i-P-C}
		\begin{split}
			\mathbf{T}^{\mathrm{P}}_i(\lambda; x,t)
			=&\mathbb{I}-\frac{\lambda_i-\lambda_i^*}{\lambda-\lambda_i^*}\frac{ \Phi_i^{[i-1]}(\Phi_i^{[i-1]})^{\dagger} }{(\Phi_i^{[i-1]})^{\dagger}\Phi_i^{[i-1]}},
			\Phi_i^{[i-1]}=\mathbf{T}_{i-1}^{\mathrm{J}}(\lambda_i; x,t)\mathbf{T}_{i-2}^{\mathrm{J}}(\lambda_i; x,t)\cdots\mathbf{T}_{1}^{\mathrm{J}}(\lambda_i; x,t)\Phi_i, \mathrm{J}=\mathrm{P}, \mathrm{C},\\
		\mathbf{T}^{\mathrm{C}}_i(\lambda; x,t)
		=&\mathbb{I}-
		\begin{bmatrix}
			\Phi_i^{[i-1]} & (\Phi_i^{[i-1]})^*
		\end{bmatrix}(\mathbf{M}_2^{[i-1]})^{-1} (\mathbf{D}_2^{i})^{-1}
		\begin{bmatrix}
			(\Phi_i^{[i-1]})^{\dagger} \\[5pt] (\Phi_i^{[i-1]})^{\top}
		\end{bmatrix}, \quad 	\mathbf{D}^{i}_2=\mathrm{diag}\left( \lambda-\lambda_i^*,\lambda+\lambda_i\right), \\
		\mathbf{M}^{[i-1]}_2=&\begin{bmatrix}
			\frac{(\Phi_i^{[i-1]})^{\dagger} \Phi_i^{[i-1]}}{\lambda_i-\lambda_i^*} &
			\frac{(\Phi_i^{[i-1]})^{\dagger} (\Phi_i^{[i-1]})^*}{-\lambda_i^*-\lambda_i^*} \\[5pt]
			\frac{(\Phi_i^{[i-1]})^{\top}\Phi_i^{[i-1]}}{\lambda_i+\lambda_i} &
			\frac{(\Phi_i^{[i-1]})^{\top}(\Phi_i^{[i-1]})^*}{-\lambda_i^*+\lambda_i}
		\end{bmatrix}, \quad \Phi_i:=\Phi( x,t;\lambda_i)\mathbf{c}_i=\Phi( x,t;\lambda_i)\left[ 1,c_i\right]^{\top}.
		\end{split}
	\end{equation}
It should be noticed that the value of $c_i$, $i=1,2,\cdots,N$, depends on the spectral parameter $\lambda_i$. If $\lambda_i\in \mathbb{C}\backslash(\ii \mathbb{R}\cup \mathbb{R})$, $i=1,2,\cdots,N$, the range of $c_i$ is $\mathbb{C}\backslash\{ 0 \}$. If $\lambda_i\in \ii \mathbb{R}$, $i=1,2,\cdots,N$, for satisfying the symmetric \eqref{eq:T-sym} of the Darboux matrix $\mathbf{T}^{\mathrm{P}}_i(\lambda; x,t)$, we could choose $c_i\in \mathbb{R}$ by Remark \ref{remark:T}. For the convenience of the following analysis, we define the numbers of $\mathbf{T}^{\mathrm{P}}_{i}(\lambda; x,t)$ and $\mathbf{T}^{\mathrm{C}}_{i}(\lambda; x,t)$, $i=1,2,\cdots,N$, in $\mathbf{T}^{[N]}(\lambda;x,t)$ \eqref{eq:multi-T} as $n_1$ and $n_2$ respectively, which denotes that the numbers of spectral parameters $\lambda_i\in \ii \mathbb{R}$ and $\lambda_i\in \mathbb{C}\backslash(\ii \mathbb{R}\cup \mathbb{R})$, $i=1,2,\cdots,N$, are $n_1$ and $n_2$, respectively. Therefore, based on the Darboux-B\"{a}cklund transformation, we obtain the new solutions of the mKdV equation \eqref{eq:mKdV-equation} as follows.

\begin{theorem}\label{theorem:u-N-breather}
	Based on the Theorem \ref{theorem:T}, the new solution of mKdV equation \eqref{eq:mKdV-equation} could be expressed as 
	\begin{equation}\label{eq:u-N-breather}
		u^{[N]}(x,t)=u( x,t)+2\ii\mathbf{X}_{m,1}\mathbf{M}_{m}^{-1}( x,t)\mathbf{X}^{\dagger}_{m,2}=\frac{u^{1-m}( x,t)}{\det(\mathbf{M}_m( x,t))} \det\left(u(x,t)\mathbf{M}_m( x,t)-2\ii \mathbf{X}^{\dagger}_{m,2}\mathbf{X}_{m,1}\right),
	\end{equation}
where 
\begin{equation}\label{eq:define-M-X}
	\mathbf{M}_m( x,t)=\left( \frac{\Phi_i^{\dagger}\Phi_j}{\lambda_j-\lambda_i^*} \right)_{1\le i,j\le m}, \qquad \mathbf{X}_{m}=[\Phi_1,\Phi_2,\cdots,\Phi_m],
\end{equation}
where $\Phi_1$ is defined in \eqref{eq:T-1} and $\Phi_i$, $i=2,3,\cdots,N$ are defined in \eqref{eq:T-i-P-C}. The dimension of $\mathbf{M}_m( x,t)$ and $\mathbf{X}_m$ depends on the transformation $\mathbf{T}^{[N]}(\lambda;x,t)$ in \eqref{eq:multi-T} satisfying $N\le m=n_1+2n_2\le 2N$, where $n_1$, $n_2$ are defined as the numbers of $\mathbf{T}^{\mathrm{P}}_{i}(\lambda; x,t)$ and $\mathbf{T}^{\mathrm{C}}_{i}(\lambda; x,t)$, $i=1,2,\cdots,N$, in the multi-Darboux matrix $\mathbf{T}^{[N]}(\lambda;x,t)$ \eqref{eq:multi-T}, respectively.
\end{theorem}

The proof of Theorem \ref{theorem:u-N-breather} is provided in Appendix \ref{appendix:Darboux transformation}. The right side of equation \eqref{eq:u-N-breather} could be obtained by the Sherman-Morrison-Woodbury-type matrix identity. And then, we aim to provide the exact expressions of solutions $u^{[N]}(x,t)$ in \eqref{eq:u-N-breather} by utilizing theta functions.

\subsection{The explicit expression of solutions}

Using theta functions to express the exact solution $u^{[N]}( x,t)$ is our primary purpose in this subsection. Firstly, we use the theta functions to express $\lambda(z)$, which is useful in studying the exact expressions for $u^{[N]}(x,t)$.

\begin{lemma}\label{lemma:lambda-i-j}
	The function $\lambda(z_j)-\lambda(z_i^*)$ defined in equation \eqref{eq:lambda-elliptic} with different values of $l$ could be rewritten as a uniform expression:
	\begin{equation}\label{eq:lambda-i-j}
			\lambda(z_j)-\lambda(z_i^*)
			=\frac{\ii \alpha \vartheta_2 \vartheta_4}{2\vartheta_3}\frac{\vartheta_3(\frac{2\ii l}{2K})\vartheta_1(\frac{\ii(z_j-z_i^*)}{2K})\vartheta_3(\frac{\ii(z_j+z_i^*)}{2K})}{\vartheta_2(\frac{\ii(z_j+l)}{2K})\vartheta_2(\frac{\ii(z_i^*+l)}{2K})\vartheta_4(\frac{\ii(z_j-l)}{2K})\vartheta_4(\frac{\ii(z_i^*-l)}{2K})}, \qquad l=0, \quad \text{and} \quad l=\frac{K'}{2}.
	\end{equation}
\end{lemma}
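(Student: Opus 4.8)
The plan is to prove the identity \eqref{eq:lambda-i-j} by recognizing that both sides, viewed as functions of (say) $z_j$ with $z_i^*$ held as a parameter, are elliptic (doubly periodic meromorphic) functions on the appropriate torus, and then invoking the standard principle that two elliptic functions with the same poles, zeros, and a single matching value must be equal. Concretely, I would first substitute the definitions \eqref{eq:lambda-elliptic-0} and \eqref{eq:lambda-elliptic-K} of $\lambda(z)$ into the left-hand side $\lambda(z_j)-\lambda(z_i^*)$ and rewrite the combination of Jacobi elliptic functions ($\sn,\cn,\dn$ at arguments $\ii(z_j-l)$ and $\ii(z_i^*-l)$) as a single rational expression. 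Using the addition formulas for the Jacobi elliptic functions, this difference should factor into a product over the shifted arguments $\ii(z_j-z_i^*)$ and $\ii(z_j+z_i^*)$, which is exactly the structure suggested by the right-hand side.

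Next I would translate each factor from Jacobi elliptic functions into theta quotients via the classical representations $\sn(u)=\tfrac{\vartheta_3}{\vartheta_2}\tfrac{\vartheta_1(u/2K)}{\vartheta_4(u/2K)}$, $\cn(u)=\tfrac{\vartheta_4}{\vartheta_2}\tfrac{\vartheta_2(u/2K)}{\vartheta_4(u/2K)}$, $\dn(u)=\tfrac{\vartheta_4}{\vartheta_3}\tfrac{\vartheta_3(u/2K)}{\vartheta_4(u/2K)}$ (with the convention $u = \alpha\xi$–type scaling absorbed into the $2K$ normalization already used throughout the paper). Substituting $u=\ii(z-l)$ and collecting the constant theta-null factors $\vartheta_2,\vartheta_3,\vartheta_4$ out front should produce precisely the prefactor $\tfrac{\ii\alpha\vartheta_2\vartheta_4}{2\vartheta_3}\vartheta_3(\tfrac{2\ii l}{2K})$ together with the quotient of theta functions in numerator and denominator appearing in \eqref{eq:lambda-i-j}. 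The role of the $l$-dependent factor $\vartheta_3(\tfrac{2\ii l}{2K})$ is to reconcile the two cases $l=0$ and $l=K'/2$ into one formula; for $l=0$ it reduces to the theta-null $\vartheta_3$, and for $l=K'/2$ the quasi-periodicity relations of the theta functions under the half-period shift will convert the $\dn/\cn^3$ expression in \eqref{eq:lambda-elliptic-K} into the same template, which is the point of the uniform statement.

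An efficient alternative, which I expect to be cleaner, is to verify the claimed identity directly at the theta level rather than passing back and forth: treat both sides as functions of $w:=\tfrac{\ii z_j}{2K}$, check that they have the same quasi-periodicity under $w\mapsto w+1$ and $w\mapsto w+\tau$ (so their ratio is elliptic), confirm that the poles and zeros match—the right-hand side vanishes when $\vartheta_1(\tfrac{\ii(z_j-z_i^*)}{2K})=0$ or $\vartheta_3(\tfrac{\ii(z_j+z_i^*)}{2K})=0$, i.e. at $z_j=z_i^*$ and at the point where $\lambda(z_j)=\lambda(z_i^*)$ via the reflection built into the $\dn\,\sn/\cn$ structure—and that the simple poles coincide. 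By Liouville's theorem the ratio is then a constant, which I pin down by evaluating at one convenient value (for instance letting $z_j\to z_i^*$ and comparing residues, or taking a limit $z_j\to l$ where the elliptic functions degenerate to explicit values).

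The main obstacle will be the bookkeeping in unifying the two cases $l=0$ and $l=K'/2$: the expressions \eqref{eq:lambda-elliptic-0} and \eqref{eq:lambda-elliptic-K} have genuinely different functional forms ($\dn\,\sn/\cn$ versus $k^2\sn\,\cn/\dn$), and showing that both collapse to the single theta template in \eqref{eq:lambda-i-j} requires careful use of the half-period quasi-periodicity relations $\vartheta_i(v+\tfrac{\tau}{2})$-shifts that interchange the theta indices and introduce the compensating exponential and $\vartheta_3(\tfrac{2\ii l}{2K})$ factors. I would handle this by doing the $l=0$ computation in full, then verifying that substituting $z\mapsto z+\tfrac{K'}{2}$ (the shift relating the two backgrounds, as in the paper's choice $l=K'/2$ for the $\dn$-case) sends the $l=0$ formula to the $l=K'/2$ formula with the stated prefactor intact, so that a single computation suffices for both.
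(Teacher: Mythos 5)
Your proposal is correct and takes essentially the same route as the paper: the paper sets $x=\ii(z_j-l)$, identifies the simple zeros ($x=\beta$ and $x=-\beta+K+\ii K'$ for $l=0$, resp.\ $x=-\beta+K$ for $l=\frac{K'}{2}$) and simple poles of $\lambda(z_j)-\lambda(z_i^*)$, invokes Liouville's theorem to write it as a constant times the theta quotient, and fixes the constant by evaluating at $x=0$ --- exactly your ``efficient alternative.'' The only cosmetic difference is that the paper repeats the zero/pole bookkeeping separately for $l=0$ and $l=\frac{K'}{2}$ instead of deducing the second case from the first via a half-period shift.
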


\begin{proof}
	Firstly, we consider the condition $l=0$. To simplify the notation and facilitate analysis, we set $\ii (z_j-l)$ as a whole variable $x$ and $\ii (z_i^*-l)$ as a constant $\beta$. Combining with equation \eqref{eq:lambda-elliptic-0}, we get 
	\begin{equation}\label{eq:lambda-0-ij}
		\lambda(z_j)-\lambda(z_i^*)=\frac{\ii \alpha}{2} \frac{\dn(x)\sn(x)\cn(\beta)-\dn(\beta)\sn(\beta)\cn(x)}{\cn(x)\cn(\beta)}.
	\end{equation}
It is straightforward to verify that $2K$ and $2\ii K'$ are two periods of function \eqref{eq:lambda-0-ij}. Therefore, we just consider a periodic parallelogram starting from $\left(-\frac{K}{2},-\frac{K'}{2}\right)$ and taking $\left(-\frac{K}{2},-\frac{K'}{2}\right)$, $\left(\frac{3K}{2},-\frac{K'}{2}\right)$, $\left(\frac{3K}{2},\frac{3K'}{2}\right)$, $\left(-\frac{K}{2},\frac{3K'}{2}\right)$ as vertices. By the zeros and poles of elliptic functions in equation \eqref{eq:J 0in}, we know that $x=K$ and $x=\ii K'$ are both the simple poles of equation \eqref{eq:lambda-0-ij}.
It is easy to obtain that $ x=\beta+2nK+2m\ii K'$, $m,n\in \mathbb{Z}$, are the simple zeros of equation \eqref{eq:lambda-0-ij}.  Since 
	\begin{equation} 
		\begin{split}
			&\dn(x)\sn(x)\cn(K+\ii K'-x)-\dn(K+\ii K'-x)\sn(K+\ii K'-x)\cn(x)
			=\ii k'\frac{\sn(x)\dn(x)-\sn(x)\dn(x)}{k\cn(x)},
		\end{split}
	\end{equation} 
obtained by shift formulas \eqref{eq:Jacobi-shift}, we get $ x=-\beta+(2n+1)K+(2m+1)\ii K'$, $m,n\in \mathbb{Z}$, which are also the simple zeros of equation \eqref{eq:lambda-0-ij}. Furthermore, it is easy to verify that there only exist two simple zeros in this periodic parallelogram. Thus, by the theory of theta functions and Liouville theorem, we obtain 
\begin{equation}\label{eq:lambda-0-ij-C}	\lambda(z_j)-\lambda(z_i^*)=C\frac{\vartheta_1(\frac{x-\beta}{2K})\vartheta_3(\frac{x+\beta}{2K})}{\vartheta_2(\frac{x}{2K})\vartheta_4(\frac{x}{2K})}
	=C\frac{\vartheta_1(\frac{\ii(z_j-z_i^*)}{2K})\vartheta_3(\frac{\ii(z_j+z_i^*)}{2K})}{\vartheta_2(\frac{\ii(z_j+l)}{2K})\vartheta_4(\frac{\ii(z_j-l)}{2K})}, \qquad l=0.
\end{equation}
	Plugging $\ii(z_j-l)=x=0$ into \eqref{eq:lambda-0-ij} and \eqref{eq:lambda-0-ij-C}, we get $-\frac{\ii \alpha}{2}\frac{ \sn(\ii z^*_i)\dn(\ii z^*_i)}{\cn(\ii z^*_i)}=-C\frac{\vartheta_1(\frac{\ii z_i^*}{2K})\vartheta_3(\frac{\ii z_i^*}{2K})}{\vartheta_2\vartheta_4}$,
	which implies $C=\frac{\ii \alpha \vartheta_2 \vartheta_4\vartheta_3(\frac{2\ii l}{2K})}{2\vartheta_3\vartheta_2(\frac{\ii z_i^*}{2K})\vartheta_4(\frac{\ii z_i^*}{2K})}$, by the conversion formulas \eqref{eq:sn vartheta}. Therefore, equation \eqref{eq:lambda-i-j} holds at $l=0$.
	
	Similarly, when $l=\frac{K'}{2}$, by equation \eqref{eq:lambda-elliptic-K}, we get
	\begin{equation}\label{eq:lambda-K-ij}
		\lambda(z_j)-\lambda(z_i^*)=\frac{\ii \alpha k^2}{2}\frac{\sn(x)\cn(x)\dn(\beta)-\sn(\beta)\cn(\beta)\dn(x)}{\dn(\beta)\dn(x)}. 	\end{equation}
	We consider the same periodic parallelogram as the case $l=0$. It is easy to obtain that $x=K+\ii K'$ and $x=\ii K'$ are two single poles of equation \eqref{eq:lambda-K-ij} in the above periodic parallelogram and $x=\beta$ is a simple zero of \eqref{eq:lambda-K-ij}. Due to 
	\begin{equation}
		\sn(K-x)\cn(K-x)\dn(x)-\sn(x)\cn(x)\dn(x+K)=\frac{k'\sn(x)\cn(x)-k'\sn(x)\cn(x)}{\dn(x)}=0,
	\end{equation} 
we get $x=-\beta+K$ is also a simple zero of  equation \eqref{eq:lambda-K-ij}. Similarly, we obtain 
\begin{equation}\label{eq:lambda-K-ij-C}
	\lambda(z_j)-\lambda(z_i^*)=C\frac{\vartheta_1(\frac{x-\beta}{2K})\vartheta_3(\frac{x+\beta+\ii K'}{2K})}{\vartheta_2(\frac{x+\ii K'}{2K})\vartheta_4(\frac{x}{2K})}
	=C\frac{\vartheta_1(\frac{\ii(z_j-z_i^*)}{2K})\vartheta_3(\frac{\ii(z_j+z_i^*)}{2K})}{\vartheta_2(\frac{\ii(z_j+l)}{2K})\vartheta_4(\frac{\ii(z_j-l)}{2K})}, \qquad l=\frac{K'}{2},
\end{equation}
by the Liouville theorem. Substituting $\ii(z_j-l)=x=0$ into equations \eqref{eq:lambda-K-ij} and \eqref{eq:lambda-K-ij-C}, we obtain equation \eqref{eq:lambda-i-j}.
\end{proof}

Combining with the matrix function $\Phi( x,t;\lambda)$ in \eqref{eq:Lax-solution-Phi} and addition formulas of the theta functions in \eqref{eq:Jacobi Theta uv}, we obtain the following functions directly:
	 \begin{equation}\label{eq:Phi-2-1-N}
	 	\begin{split}
	 		\Phi_{i,2}^*\Phi_{j,1}
	 		=&-\frac{\alpha^2 \vartheta_2^2\vartheta_4^2}{\vartheta_3^2\vartheta_4^2(\frac{\alpha\xi}{2K})\ee^{\alpha\xi Z(2\ii l+K)}}
	 		\mathbf{E}_i^{\dagger}
	 		\begin{bmatrix}
	 			\frac{\vartheta_3(\frac{\ii(z_i^*+l)+\alpha\xi}{2K})\vartheta_1(\frac{\ii(z_j-l)-\alpha\xi}{2K})}{\vartheta_2(\frac{\ii(z_i^*+l)}{2K})\vartheta_4(\frac{\ii(z_j+l)}{2K})}& 
	 			\frac{\vartheta_3(\frac{\ii(z_i^*+l)+\alpha\xi}{2K})}{\vartheta_2(\frac{\ii(z_i^*+l)}{2K})}\frac{\vartheta_3(\frac{\ii(z_j+l)+\alpha\xi}{2K})}{\vartheta_2(\frac{\ii(z_j+l)}{2K})} \\
	 			-\frac{\vartheta_1(\frac{\ii(z_i^*-l)-\alpha\xi}{2K})}{\vartheta_4(\frac{\ii(z_i^*-l)}{2K})}\frac{\vartheta_1(\frac{\ii(z_j-l)-\alpha\xi}{2K})}{\vartheta_4(\frac{\ii(z_j+l)}{2K})} & 
	 			-\frac{\vartheta_1(\frac{\ii(z_i^*-l)-\alpha\xi}{2K})}{\vartheta_4(\frac{\ii(z_i^*-l)}{2K})}\frac{\vartheta_3(\frac{\ii(z_j+l)+\alpha\xi}{2K})}{\vartheta_2(\frac{\ii(z_j+l)}{2K})}
	 		\end{bmatrix}\mathbf{E}_j,
	 	\end{split}
	 \end{equation}
 where $\mathbf{E}_j$ is defined in equation \eqref{eq:r_i}. Using the addition and shift formulas of theta functions in \eqref{eq:Jacobi Theta-K iK} and \eqref{eq:Jacobi Theta uv}, we obtain 
	\begin{equation}\label{eq:Phi-Phi}
	\begin{split}
		\Phi_{i,1}^*\Phi_{j,1}+\Phi_{i,2}^*\Phi_{j,2}
		=&\frac{\alpha^2 \vartheta_2^2\vartheta_4^2}{\vartheta_3^2\vartheta_4^2(\frac{\alpha\xi}{2K})}
		\mathbf{E}_i^{\dagger}
		\begin{bmatrix}
			-\frac{\vartheta_1(\frac{\ii(z_i^*-l)+\alpha\xi}{2K})\vartheta_1(\frac{\ii(z_j-l)-\alpha\xi}{2K})}{\vartheta_4(\frac{\ii(z_i^*-l)}{2K})\vartheta_4(\frac{\ii(z_j-l)}{2K})} & 
			-\frac{\vartheta_1(\frac{\ii(z_i^*-l)+\alpha\xi}{2K})\vartheta_3(\frac{\ii(z_j+l)+\alpha\xi}{2K})}{\vartheta_4(\frac{\ii(z_i^*-l)}{2K})\vartheta_2(\frac{\ii(z_j+l)}{2K})} \\
			\frac{\vartheta_3(\frac{\ii(z_i^*+l)-\alpha\xi}{2K})\vartheta_1(\frac{\ii(z_j-l)-\alpha\xi}{2K})}{\vartheta_2(\frac{\ii(z_i^*+l)}{2K})\vartheta_4(\frac{\ii(z_j-l)}{2K})} & 
			\frac{\vartheta_3(\frac{\ii(z_i^*+l)-\alpha\xi}{2K})\vartheta_3(\frac{\ii(z_j+l)+\alpha\xi}{2K})}{\vartheta_2(\frac{\ii(z_i^*+l)}{2K})\vartheta_2(\frac{\ii(z_j+l)}{2K})}
		\end{bmatrix}\mathbf{E}_j \\
		&+\frac{\alpha^2 \vartheta_2^2\vartheta_4^2}{\vartheta_3^2\vartheta_4^2(\frac{\alpha\xi}{2K})}
		\mathbf{E}_i^{\dagger}
		\begin{bmatrix}
			\frac{\vartheta_3(\frac{\ii(z_i^*+l)+\alpha\xi}{2K})\vartheta_3(\frac{\ii(z_j+l)-\alpha\xi}{2K})}{\vartheta_2(\frac{\ii(z_i^*+l)}{2K})\vartheta_2(\frac{\ii(z_j+l)}{2K})} & 
			\frac{\vartheta_3(\frac{\ii(z_i^*+l)+\alpha\xi}{2K})\vartheta_1(\frac{\ii(z_j-l)+\alpha\xi}{2K})}{\vartheta_2(\frac{\ii(z_i^*+l)}{2K})\vartheta_4(\frac{\ii(z_j-l)}{2K})} \\
			-\frac{\vartheta_1(\frac{\ii(z_i^*-l)-\alpha\xi}{2K})\vartheta_3(\frac{\ii(z_j+l)-\alpha\xi}{2K})}{\vartheta_4(\frac{\ii(z_i^*-l)}{2K})\vartheta_2(\frac{\ii(z_j+l)}{2K})} & 
			-\frac{\vartheta_1(\frac{\ii(z_i^*-l)-\alpha\xi}{2K})\vartheta_1(\frac{\ii(z_j-l)+\alpha\xi}{2K})}{\vartheta_4(\frac{\ii(z_i^*-l)}{2K})\vartheta_4(\frac{\ii(z_j-l)}{2K})}
		\end{bmatrix}
		\mathbf{E}_j \\
		=&\frac{\alpha^2 \vartheta_2^2\vartheta_4^2B}{\vartheta_3^2\vartheta_4(\frac{\alpha\xi}{2K})}
		\mathbf{E}_i^{\dagger}
		\begin{bmatrix}
			\vartheta_3(\frac{\ii(z_i^*+z_j)}{2K})\vartheta_4(\frac{\ii(z_i^*-z_j)+\alpha\xi}{2K}) & 
			-\vartheta_1(\frac{\ii(z_i^*-z_j)}{2K})\vartheta_2(\frac{\ii(z_j+z_i^*)+\alpha\xi}{2K}) \\
			-\vartheta_1(\frac{\ii(z_i^*-z_j)}{2K})\vartheta_2(\frac{\ii(z_j+z_i^*)-\alpha\xi}{2K}) & 
			\vartheta_3(\frac{\ii(z_i^*+z_j)}{2K})\vartheta_4(\frac{\ii(z_i^*-z_j)-\alpha\xi}{2K})
		\end{bmatrix}\mathbf{E}_j,
	\end{split}	
\end{equation}
	where $B=\frac{\vartheta_3(\frac{2\ii l}{2K})\vartheta_4(\frac{\ii(z_i^*-z_j)+\alpha\xi}{2K})}{\vartheta_2(\frac{\ii(z_i^*+l)}{2K})\vartheta_2(\frac{\ii(z_j+l)}{2K})\vartheta_4(\frac{\ii(z_i^*-l)}{2K})\vartheta_4(\frac{\ii(z_j-l)}{2K})}$, $\mathbf{E}_j$ is defined in \eqref{eq:r_i}. Therefore, combining with equation \eqref{eq:lambda-i-j}, we gain
	\begin{equation}\label{eq:M}
		\frac{\Phi^{\dagger}_i\Phi_j}{2(\lambda_j-\lambda_i^*)}=-\frac{\ii\alpha \vartheta_2\vartheta_4}{\vartheta_3\vartheta_4(\frac{\alpha\xi}{2K})}
		\mathbf{E}_i^{\dagger}
		\begin{bmatrix}
			-\frac{\vartheta_4(\frac{\ii(z_i^*-z_j)+\alpha\xi}{2K})}{\vartheta_1(\frac{\ii(z_i^*-z_j)}{2K})} & 
			\frac{\vartheta_2(\frac{\ii(z_i^*+z_j)+\alpha\xi}{2K})}{\vartheta_3(\frac{\ii(z_i^*+z_j)}{2K})} \\
			\frac{\vartheta_2(\frac{-\ii(z_i^*+z_j)+\alpha\xi}{2K})}{\vartheta_3(\frac{-\ii(z_i^*+z_j)}{2K})} & 
			\frac{\vartheta_4(\frac{\ii(z_j-z_i^*)+\alpha\xi}{2K})}{\vartheta_1(\frac{\ii(z_j-z_i^*)}{2K})}
		\end{bmatrix}
	\mathbf{E}_j.
	\end{equation}
	Furthermore, by formulas \eqref{eq:Jacobi Theta-K iK} and \eqref{eq:Jacobi Theta uv}, we also could obtain
	\begin{equation}\label{eq:uM-N}
		\begin{split}
		\frac{	u(x,t)\Phi^{\dagger}_i\Phi_j}{2(\lambda_j-\lambda_i^*)}-\ii\Phi_{i,2}^*\Phi_{j,1}
			=& -\frac{\ii\alpha^2 \vartheta_2^2\vartheta_4^2A}{\vartheta_3^2\vartheta_4^2(\frac{\alpha\xi}{2K})}
			\mathbf{E}_i^{\dagger}
			\begin{bmatrix}
				\frac{\vartheta_2(\frac{\alpha\xi+2\ii l}{2K})\vartheta_4(\frac{\ii(z_i^*-z_j)+\alpha\xi}{2K})}{\vartheta_3(\frac{2\ii l}{2K})\vartheta_1(\frac{\ii(z_j-z_i^*)}{2K})} & 
				\frac{\vartheta_2(\frac{\alpha\xi+2\ii l}{2K})\vartheta_2(\frac{\ii(z_i^*+z_j)+\alpha\xi}{2K})}{\vartheta_3(\frac{2\ii l}{2K})\vartheta_3(\frac{\ii(z_i^*+z_j)}{2K})} \\
				\frac{\vartheta_2(\frac{\alpha\xi+2\ii l}{2K})\vartheta_2(\frac{\ii(z_i^*+z_j)-\alpha\xi}{2K})}{\vartheta_3(\frac{2\ii l}{2K})\vartheta_3(\frac{\ii(z_i^*+z_j)}{2K})} & 
				\frac{\vartheta_2(\frac{\alpha\xi+2\ii l}{2K})\vartheta_4(\frac{\ii(z_i^*-z_j)-\alpha\xi}{2K})}{\vartheta_3(\frac{2\ii l}{2K})\vartheta_1(\frac{\ii(z_j-z_i^*)}{2K})}
			\end{bmatrix}
		\mathbf{E}_j \\
			&+\frac{\ii\alpha^2 \vartheta_2^2\vartheta_4^2A}{\vartheta_3^2\vartheta_4^2(\frac{\alpha\xi}{2K})}
			\mathbf{E}_i^{\dagger}
			\begin{bmatrix}
				\frac{\vartheta_3(\frac{\ii(z_i^*+l)+\alpha\xi}{2K})}{\vartheta_2(\frac{\ii(z_i^*+l)}{2K})}\frac{\vartheta_1(\frac{\ii(z_j-l)-\alpha\xi}{2K})}{\vartheta_4(\frac{\ii(z_j+l)}{2K})} & 
				\frac{\vartheta_3(\frac{\ii(z_i^*+l)+\alpha\xi}{2K})}{\vartheta_2(\frac{\ii(z_i^*+l)}{2K})}\frac{\vartheta_3(\frac{\ii(z_j+l)+\alpha\xi}{2K})}{\vartheta_2(\frac{\ii(z_j+l)}{2K})} \\
				-\frac{\vartheta_1(\frac{\ii(z_i^*-l)-\alpha\xi}{2K})}{\vartheta_4(\frac{\ii(z_i^*-l)}{2K})}\frac{\vartheta_1(\frac{\ii(z_j-l)-\alpha\xi}{2K})}{\vartheta_4(\frac{\ii(z_j+l)}{2K})} & 
				-\frac{\vartheta_1(\frac{\ii(z_i^*-l)-\alpha\xi}{2K})}{\vartheta_4(\frac{\ii(z_i^*-l)}{2K})}\frac{\vartheta_3(\frac{\ii(z_j+l)+\alpha\xi}{2K})}{\vartheta_2(\frac{\ii(z_j+l)}{2K})}
			\end{bmatrix}
		\mathbf{E}_j \\
			=&\frac{-\ii\alpha^2 \vartheta_2^2\vartheta_4^2A}{\vartheta_3^2\vartheta_3(\frac{2\ii l}{2K})\vartheta_4(\frac{\alpha\xi}{2K})}
			\mathbf{E}_i^{\dagger}\mathbf{r}_i^*
			\begin{bmatrix}
				-\frac{\vartheta_2(\frac{\ii(z_i^*-z_j+2l)+\alpha\xi}{2K})}{\vartheta_1(\frac{\ii(z_i^*-z_j)}{2K})} & 
				-\frac{\vartheta_4(\frac{\ii(z_i^*+z_j+2l)+\alpha\xi}{2K})}{\vartheta_3(\frac{\ii(z_i^*+z_j)}{2K})} \\
				\frac{\vartheta_4(\frac{\ii(z_i^*+z_j-2l)-\alpha\xi}{2K})}{\vartheta_3(\frac{\ii(z_i^*+z_j)}{2K})} & 
				\frac{\vartheta_2(\frac{\ii(z_j-z_i^*+2l)+\alpha\xi}{2K})}{\vartheta_1(\frac{\ii(z_j-z_i^*)}{2K})}
			\end{bmatrix}
		\mathbf{r}_j^{-1}\mathbf{E}_j,
		\end{split}
	\end{equation}
with $A=\ee^{-\alpha\xi Z(2\ii l+K)}$, where $\mathbf{E}_j$ and $\mathbf{r}_j$ are defined in \eqref{eq:r_i}. Collecting equations \eqref{eq:u-N-breather}, \eqref{eq:M} and \eqref{eq:uM-N}, function \eqref{eq:mKdV-solution-xi-n-1} holds. Thus, we obtain Theorem \ref{theorem:u-N}.

\begin{remark}\label{remark:mKdV-solution-new}
	Combining Theorem \ref{theorem:u-N}, solution \eqref{eq:mKdV-solution-xi-n-1} could be rewritten as
		\begin{equation}\label{eq:mKdV-solution-xi-n-1-vec}
		\begin{split}
			u^{[N]}( x,t)=&\frac{\alpha \vartheta_2\vartheta_4}{\vartheta_3\vartheta_3(\frac{2\ii l}{2K})}\left(\frac{\vartheta_4(\frac{\alpha \xi}{2K})}{\vartheta_2(\frac{\alpha\xi+2\ii l}{2K})}\right)^{m-1}
			\frac{\det\left(\sum_{i,j=1}^2(\ii)^{j+i}\mathbf{X}_j^{\dagger}\mathbf{Y}^{[2]}_j \mathcal{M}^{[j,i]}\mathbf{Y}^{[1]}_{i}\mathbf{X}_i \right)}{\det\left(\sum_{i,j=1}^2(-1)^{i} \mathbf{X}_j^{\dagger}\mathcal{D}^{[j,i]}\mathbf{X}_i\right)}\ee^{-\alpha\xi Z(2\ii l+K)},
		\end{split}
	\end{equation}
		where $\xi=x-st$,
		\begin{equation}\label{eq:M-H-zi-zj}
			\begin{split}
				\mathcal{D}^{[h,n]}=&\left(\frac{\vartheta_4\left(\frac{\alpha\xi+\zeta^{[h]}_i+\mu^{[n]}_j}{2K}\right)}{\vartheta_1\left(\frac{\zeta^{[h]}_i+\mu^{[n]}_j}{2K}\right)}\right)_{1\le i,j\le m}, \qquad \mathcal{M}^{[h,n]}=\left(\frac{\vartheta_2\left(\frac{\alpha\xi+2\ii l+\zeta^{[h]}_i+\mu^{[n]}_j}{2K}\right)}{\vartheta_1\left(\frac{\zeta^{[h]}_i+\mu^{[n]}_j}{2K}\right)}\right)_{1\le i,j\le m},
			\end{split}	
		\end{equation}
		\begin{equation}\label{eq:X1-X2}
			\mathbf{X}_1:=\mathrm{diag} \left(
			\ee^{-\eta_1}, \ee^{-\eta_2}, \cdots , \ee^{-\eta_{h-1}}, \overbrace{1,\cdots ,1}^{m-h}
			\right), \qquad 
			\mathbf{X}_2:=\mathrm{diag} \left( \overbrace{1,\cdots ,1}^{h-1},
			\ee^{\eta_h},\ee^{\eta_{h+1}}, \cdots , \ee^{\eta_{m}}
			\right),
		\end{equation}
		or 
		\begin{equation}\label{eq:X1-X2-2}
			\mathbf{X}_1:=\mathrm{diag} \left(
			\overbrace{1,\cdots ,1}^{h-1},
			-\ee^{\eta_h},-\ee^{\eta_{h+1}}, \cdots , -\ee^{\eta_{m}}
			\right), \qquad 
			\mathbf{X}_2:=\mathrm{diag} \left( \ee^{\eta_1}, \ee^{\eta_2}, \cdots , \ee^{\eta_{h-1}}, \overbrace{1,\cdots ,1}^{m-h}
			\right),
		\end{equation}
		with
		\begin{equation}\label{eq:Y}
			\begin{split}
			\mathbf{Y}^{[1]}_{1}=& \mathbf{Y},\quad  \mathbf{Y}^{[1]}_2= \mathbf{Y}^{-1}, \quad 
			\mathbf{Y}^{[2]}_1= \mathbf{Y}^{*-1},\quad \mathbf{Y}^{[2]}_2= \mathbf{Y}^{*},\qquad	\mathbf{Y}:=\exp\left(\frac{l\pi}{2K}\right)\mathrm{diag} \begin{bmatrix}
					r_1, r_2, \cdots , r_m
				\end{bmatrix},
			\end{split}
		\end{equation}
	and 
	\begin{equation}\label{eq:E-2-hat}
			\eta_i:=\gamma_i+I_i \xi-\Omega_i t,\quad
			I_i:=- \left(\alpha Z(\ii(z_i+l)+K+\ii K')+\alpha Z(\ii (z_i-l))\right), \quad 
			\Omega_i:=8 \ii \lambda_i y_i, \quad 
			\gamma_i:=\ln c_i.
	\end{equation}	
		Furthermore, we list the concrete sequence of $\zeta^{[j]}$ and $\mu^{[j]}$ as:
		\begin{equation}\label{eq:zeta-mu}
			\begin{split}
				\zeta^{[1]}=&\left(\zeta^{[1]}_1,\zeta^{[1]}_2,\cdots,\zeta^{[1]}_m\right)
				=\left(\ii \mathbf{Z}_1^*, \ii \mathbf{Z}_2^*, \cdots, \ii \mathbf{Z}_N^*\right),\\
				\zeta^{[2]}=&\left(\zeta^{[2]}_1,\zeta^{[2]}_2,\cdots,\zeta^{[2]}_m\right)
				=\left(-\ii \mathbf{Z}_1^*-(K+\ii K'), -\ii \mathbf{Z}_2^*-(K+\ii K'), \cdots, -\ii \mathbf{Z}_{N}^*-(K+\ii K')\right),\\
				\mu^{[1]}=&\left(\mu^{[1]}_1,\mu^{[1]}_2,\cdots,\mu^{[1]}_m\right)
				=\left(-\ii \mathbf{Z}_1, -\ii \mathbf{Z}_2, \cdots,-\ii \mathbf{Z}_N\right),\\
				\mu^{[2]}=&\left(\mu^{[2]}_1,\mu^{[2]}_2,\cdots,\mu^{[2]}_m\right)
				=\left(\ii \mathbf{Z}_1+K+\ii K', \ii \mathbf{Z}_2+K+\ii K', \cdots, \ii \mathbf{Z}_{N}+K+\ii K'\right),
			\end{split}
		\end{equation}
	where $\mathbf{Z}_i=z_i,$ if $\lambda(z_i)\in \ii \mathbb{R}, c_i\in \mathbb{R}$ or $\mathbf{Z}_i=\left[z_i,2l-z_i^*\right]$, if $\lambda(z_i)\in \mathbb{C} \backslash  (\ii  \mathbb{R}\cup \mathbb{R})$, $c_i\in \mathbb{C}\backslash\{0\}$. 
	\end{remark}
	
	For any positive integer $N$, we could have infinite solutions by choosing different values of the uniform parameters $z_i\neq z_j$, $z_i\in S$, $i=1,2,\cdots,m$. The exact expressions of solutions $u^{[N]}(x,t)$ are obtained by equations \eqref{eq:mKdV-solution-xi-n-1} or \eqref{eq:mKdV-solution-xi-n-1-vec}. Then, we provide solutions $u^{[N]}(x,t)$ under the different elliptic function backgrounds in the following subsection.
	 
\subsection{The dynamic behaviors of solutions}\label{subsec:cn-case}

In this subsection, we illustrate solutions $u^{[N]}(x,t)$, $N=1,2$, that exhibit different dynamic behaviors, including the elliptic-soliton, elliptic-breather, two-elliptic-solitons, two-elliptic-breathers, and elliptic-soliton-breather solutions, based on Theorem \ref{theorem:u-N}.

By the expressions of solution $u^{[1]}(x,t)$ in \eqref{eq:mKdV-solution-xi-n-1}, we consider the elliptic-soliton and elliptic-breather solutions, which are divided by the parameter $\lambda(z)$.
\begin{itemize}
	\item [Case I-1: ] If $\lambda_1=\lambda(z_1)\in \ii \mathbb{R}$ and $c_1\in \mathbb{R}$, based on the Darboux matrix $\mathbf{T}^{[1]}(\lambda;x,t)=\mathbf{T}_1^{\mathrm{P}}(\lambda;x,t)$ and the B\"{a}cklund transformation, we obtain the elliptic-soliton solution $u^{[1]}(x,t)$ by equation \eqref{eq:mKdV-solution-xi-n-1} with the $1\times 1$ matrices $\mathcal{M}$ and $\mathcal{D}$.
	
	\item[Case I-2: ] If $\lambda_1=\lambda(z_1)\in \mathbb{C}\backslash(\ii \mathbb{R}\cup\mathbb{R})$ and $c_1\in \mathbb{C}\backslash\{0\}$, we obtain the elliptic-breather solution $u^{[1]}(x,t)$ by equation \eqref{eq:mKdV-solution-xi-n-1} with $2\times 2$ matrices $\mathcal{M}$ and $\mathcal{D}$ based on the Darboux matrix $\mathbf{T}^{[1]}(\lambda;x,t)=\mathbf{T}_1^{\mathrm{C}}(\lambda;x,t)$ and the B\"{a}cklund transformation. 
	From Theorem \ref{theorem:u-N-breather} and Remark \ref{remark:T}, parameters $z_1$ and $z_2$ in solution $u^{[1]}(x,t)$ \eqref{eq:mKdV-solution-xi-n-1} must satisfy $z_2=-z_1^*+2l$.
\end{itemize}

Following, we give three examples including the elliptic-soliton solutions and elliptic-breather solutions under the $\cn$-type background, i.e., $l=0$. Firstly, taking $z_1=K'+\frac{K}{3}\ii$, $k=\frac{1}{2}$ and $\alpha=1$ into equation \eqref{eq:lambda-elliptic-0}, we get $\lambda_1=\lambda(z_1)\approx -0.836 \ii\in \ii \mathbb{R}$. Therefore, plugging them into equation \eqref{eq:mKdV-solution-xi-n-1}, we obtain an elliptic-soliton solution $u^{[1]}(x,t)$ under the $\cn$-type background shown in Figure \ref{fig:cn-1b-3d}(a). By equation \eqref{eq:u-N-breather}, the range of solution  $u^{[1]}(x,t)=u(x,t)+\frac{2\ii(\lambda_1^*-\lambda_1)\Phi_{1,1}\Phi_{1,2}^*}{|\Phi_{1,1}|^2+|\Phi_{1,2}|^2}$ is $[\min(u(x,t))-2|\Im(\lambda_1)|, \max(u(x,t))+2|\Im(\lambda_1)|]$. And only when $\frac{\Phi_{1,1}}{\Phi_{1,2}^*}=\pm 1$, the solution $u^{[1]}(x,t)$ reaches the maximum or minimum value. After numerical calculations, we know that solution $u^{[1]}(x,t)$ reaches its maximum value at the origin $(0,0)$ and the maximum value of the solution $u^{[1]}(x,t)$ is approximately equal to $2.172$.

Then, we provide two solutions by Case I-2 (i.e., the elliptic-breather solution).
Selecting the parameters $\alpha=1$, $k=\frac{1}{2}$,
$z_1=\frac{3K'}{4}+\frac{K}{3}\ii$ and $z_2=-\frac{3K'}{4}+\frac{K}{3}\ii$ and plugging them into equations \eqref{eq:lambda-elliptic-0} and \eqref{eq:mKdV-solution-xi-n-1}, we obtain an elliptic-breather solution $u^{[1]}(x,t)$ and draw in Figure \ref{fig:cn-1b-3d}(b). In addition, we also display an elliptic-breather solution whose velocity is zero, by plugging $\alpha=1$, $k=\frac{1}{2}$, $c_1=1.375-\ii$,
$z_1=\frac{K'}{3}+\ii \frac{2K}{5}$, and $z_2=-\frac{K'}{3}+\ii \frac{2K}{5}$ into equation \eqref{eq:lambda-elliptic-0} and \eqref{eq:mKdV-solution-xi-n-1}. This elliptic-breather solution $u^{[1]}(x,t)$ is shown in the Figure \ref{fig:cn-1b-3d}(c).

\begin{figure}[h]
	\centering
	\includegraphics[width=1\linewidth]{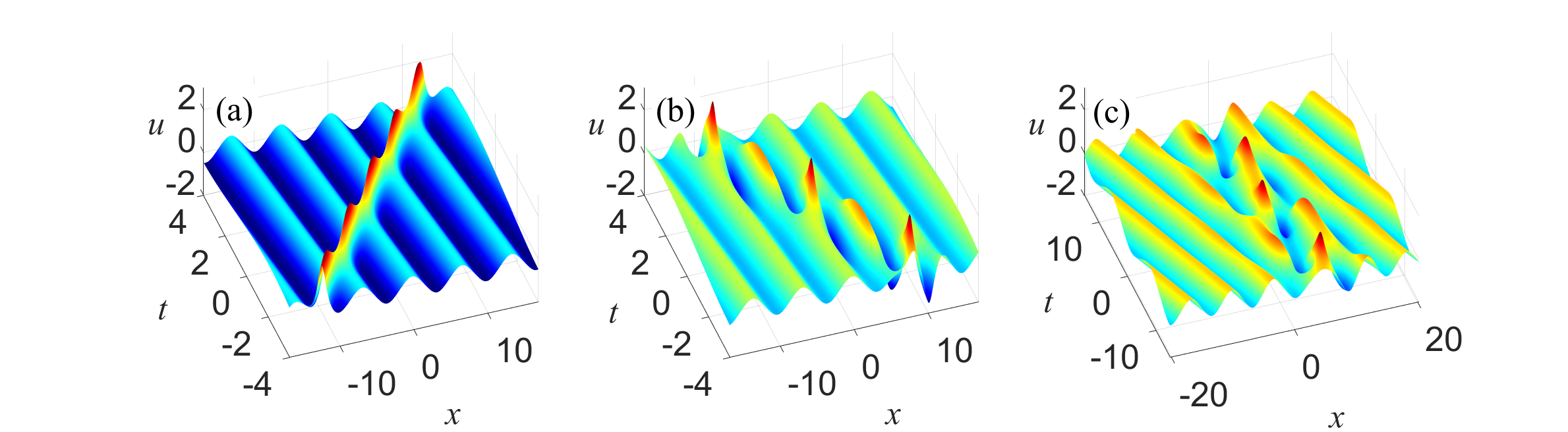}
	\caption{The 3d-plot of solutions $u^{[1]}(x,t)$ of mKdV equation \eqref{eq:mKdV-equation} under the $\cn$-type background. (a):  elliptic-soliton solution. (b) and (c): elliptic-breather solution.}
	\label{fig:cn-1b-3d}
\end{figure}

In addition, we list three solutions under the $\dn$-type background, i.e., $l=\frac{K'}{2}$. Substituting $\alpha=1$, $k=\frac{9}{10}$,
$z_1=\frac{K'}{2}+\frac{2K}{5}\ii$ into equation \eqref{eq:lambda-elliptic-K}, we find $\lambda(z_1)\approx-0.269\ii \in \ii \mathbb{R}$ and obtain an elliptic-breather solution $u^{[1]}(x,t)$ in equation \eqref{eq:mKdV-solution-xi-n-1}. Thus, the solution $u^{[1]}(x,t)$ under the $\dn$-type background in Figure \ref{fig:dn-1b-3d}(a) reaches its maximum value approximately equal to $1.539$ at the origin $(x,t)=(0,0)$. For $z_1=-\frac{K'}{2}+\ii \frac{K}{5}$, i.e., $\lambda(z_1)\approx-1.189\ii$, similarly to above, we obtain an elliptic-soliton solution $u^{[1]}(x,t)$ drawn in Figure \ref{fig:dn-1b-3d}(b) with the maximum value approximately equal to $3.378$ reaching at the origin $(x,t)=(0,0)$. Comparing Figure \ref{fig:dn-1b-3d}(a) with Figure \ref{fig:dn-1b-3d}(b), we find that the parameter $\lambda_1$ is an important factor leading to the variation of peaks.

For the Case I-2 (i.e., the elliptic-breather solution) under the $\dn$-type background, we plug parameters $\alpha=1$, $k=\frac{9}{10}$,
$z_1=-\frac{K'}{8}+\frac{2K}{5}\ii$ and $z_2=\frac{9K'}{8}+\frac{2K}{5}\ii$ into equations \eqref{eq:lambda-elliptic-K} and \eqref{eq:mKdV-solution-xi-n-1}. It follows that the elliptic-breather solution $u^{[1]}(x,t)$ with the spectral parameter $\lambda_1\approx0.103-0.535\ii$ can be obtained (see Figure \ref{fig:cn-1b-3d}(c)), which shows the elliptic-breather solution with zero velocity.

\begin{figure}[h]
	\centering
	\includegraphics[width=1\linewidth]{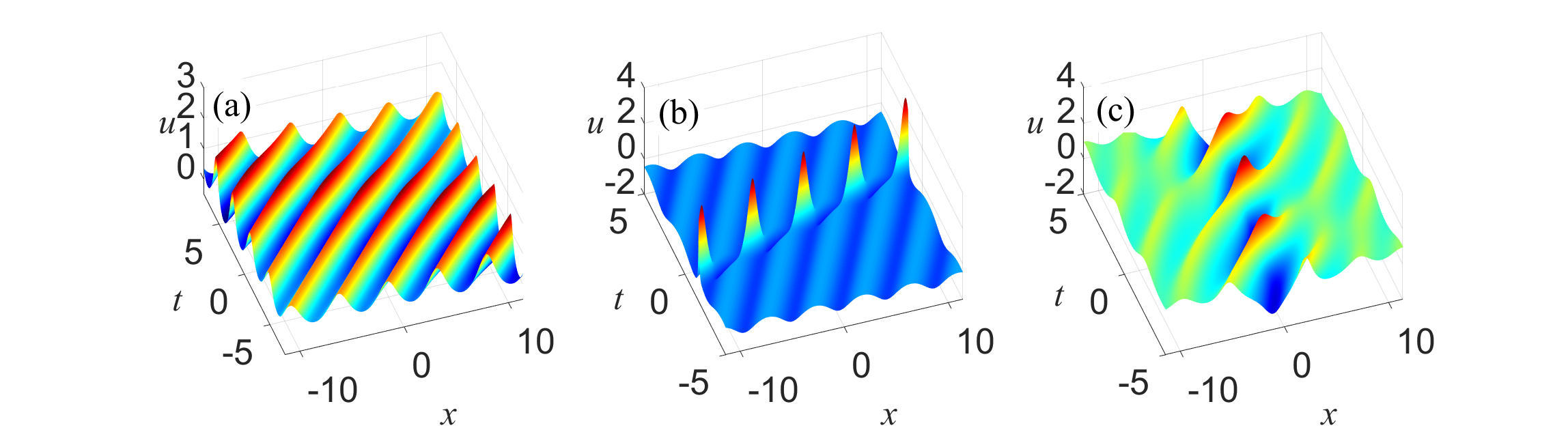}
	\caption{The 3d-plot of solutions $u^{[1]}( x,t)$ for mKdV equation \eqref{eq:mKdV-equation} under the $\dn$-type background. (a) and (b): elliptic-soliton solution. (c): elliptic-breather solution.}
	\label{fig:dn-1b-3d}
\end{figure}

By the expression of the multi elliptic-localized solutions \eqref{eq:u-N-breather} and the Darboux matrix \eqref{eq:multi-T} we choosing, the solutions $u^{[2]}(x,t)$ provide three different types, which are referred to as the two-elliptic-solitons solution, the two-elliptic-breathers solution, and the elliptic-soliton-breather solution. Now, we show this three solutions as follows:
\begin{itemize}
	\item[Case II-1:] If $\lambda_1=\lambda(z_1)$, $\lambda_2=\lambda(z_2)\in \ii \mathbb{R}$ and $c_1,c_2\in\mathbb{R}$, the solution $u^{[2]}(x,t)$ in equation \eqref{eq:mKdV-solution-xi-n-1} is called the two-elliptic-solitons solution, which is obtained by $2\times 2$ matrices $\mathcal{M}$ and $\mathcal{D}$.
	
	\item[Case II-2:] If $\lambda_1\in \ii \mathbb{R}$, $\lambda_2\in \mathbb{C}\backslash(\ii \mathbb{R}\cup \mathbb{R})$, $c_1\in \mathbb{R}, c_2\in \mathbb{C}\backslash\{0\}$, or $\lambda_1\in \mathbb{C}\backslash(\ii \mathbb{R}\cup \mathbb{R})$, $\lambda_2\in \ii \mathbb{R}$, $c_1\in \mathbb{C}\backslash\{0\}, c_2\in \mathbb{R}$, we could get the elliptic-soliton-breather solution $u^{[2]}(x,t)$ from equation \eqref{eq:mKdV-solution-xi-n-1} with $3\times 3$ matrices $\mathcal{M}$ and $\mathcal{D}$. By Theorem \ref{theorem:u-N-breather} and Remark \ref{remark:T}, parameters $z_1$, $z_2$ and $z_3$ in solution $u^{[2]}(x,t)$ satisfy $z_3=-z_2^*+2l$, $z_1\neq z_2$, or $z_3=-z_1^*+2l$, $z_1\neq z_2$.
	
	\item[Case II-3:] If $\lambda_1,\lambda_2\in \mathbb{C}\backslash(\ii \mathbb{R}\cup \mathbb{R})$ and $c_1,c_2\in \mathbb{C}\backslash\{0\}$, the solution $u^{[2]}(x,t)$ in equation \eqref{eq:mKdV-solution-xi-n-1} is called the two-elliptic-breathers solution, which is obtained by the $4\times 4$ matrices $\mathcal{M}$ and $\mathcal{D}$. And parameters $z_1$, $z_2$, $z_3$ and $z_4$ in solution $u^{[2]}(x,t)$ satisfy $z_3=-z_1^*+2l$, $z_4=-z_2^*+2l$ and $z_1\neq z_2$, based on the Theorem \ref{theorem:u-N-breather} and Remark \ref{remark:T}.
\end{itemize}

We present examples for the above three cases under the $\cn$-type and $\dn$-type backgrounds. Under the $\cn$-type background, i.e. $l=0$, by parameters $k=\frac{1}{2}$, $\alpha=1$, $z_1=K'+\frac{2K}{9}\ii$, $z_2=K'+\frac{K}{3}\ii$, $c_1=c_2=1$, $\lambda_1=\lambda(z_1)\approx -1.301\ii\in \ii \mathbb{R}$ and $\lambda_2=\lambda(z_2)\approx-0.836\ii\in \ii \mathbb{R}$, satisfying Case II-1, a two-elliptic-solitons solution is drawn in Figure \ref{fig:cn-2b-3d}(a). If $z_1=\frac{3K'}{4}+\frac{K}{3}\ii $, $z_2=K'+\frac{2K}{5}\ii$, $z_3=-\frac{3K'}{4}+\frac{K}{3}\ii$, $\alpha=1$ and $k=\frac{1}{2}$, an elliptic-soliton-breather solution $u^{[2]}(x,t)$ in \eqref{eq:mKdV-solution-xi-n-1} is drawn in Figure \ref{fig:cn-2b-3d}(b) under Case II-2 with $\lambda_1\approx-0.504-0.429\ii $ and $\lambda_2\approx-0.673\ii $ and $c_1=c_2=1$. Plugging $k=\frac{1}{2}$, $\alpha=1$, $z_1=-\frac{3K'}{4}+\frac{K}{4}\ii$,
$z_2=\frac{K'}{3}+\frac{2K}{5}\ii$, $z_3=\frac{3K'}{4}+\frac{K}{4}\ii$, and $z_4=-\frac{K'}{3}+\frac{2K}{5}\ii$, into equation \eqref{eq:mKdV-solution-xi-n-1}, we get a two-elliptic-breathers solution $u^{[2]}(x,t)$ in Figure \ref{fig:cn-2b-3d}(c), which satisfies the Case II-3 with parameters $\lambda_1\approx 0.626-0.427\ii$, $\lambda_2\approx -0.368-0.257\ii$, $c_1=3+4\ii$, and $c_2=1.375-\ii$.

\begin{figure}[h]
	\centering
	\subfigure {\includegraphics[width=1\linewidth]{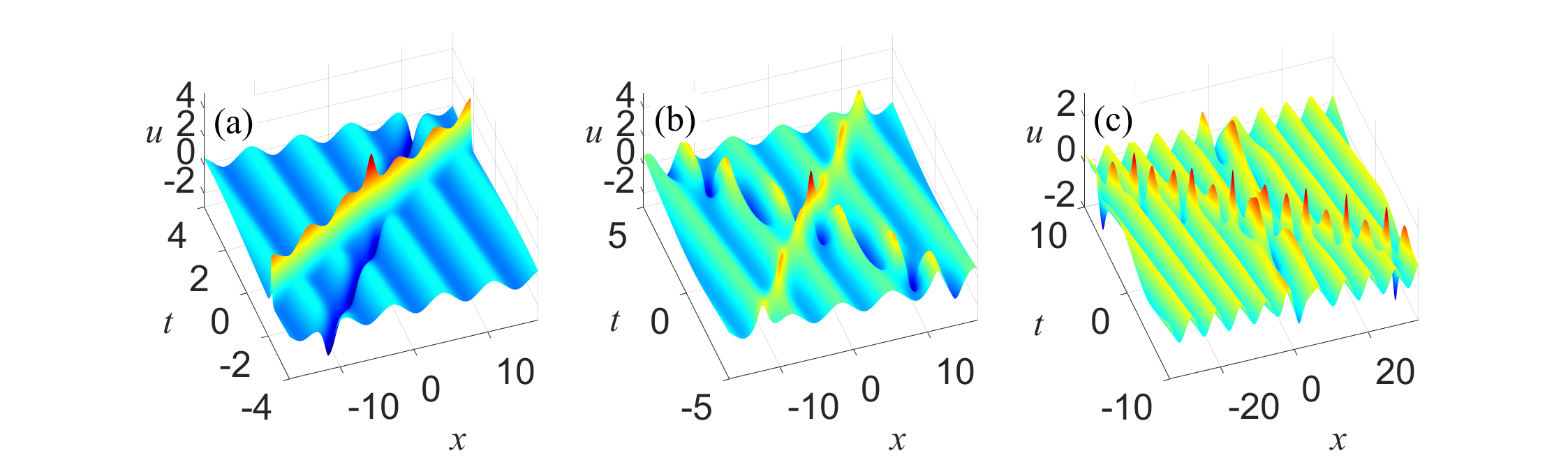}}
	\subfigure {\includegraphics[width=1\linewidth]{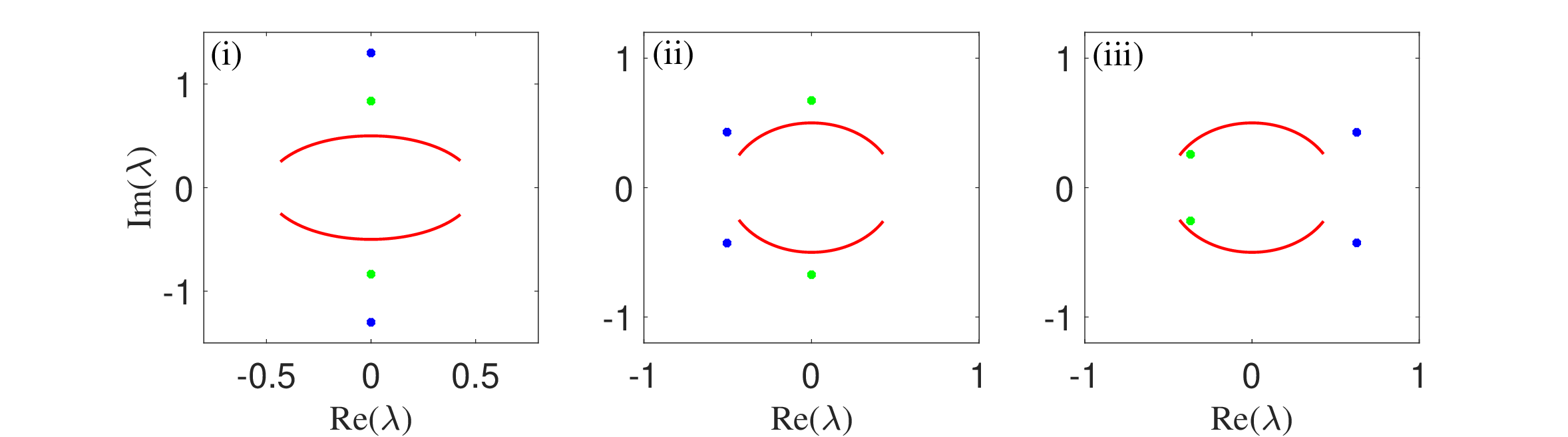}}
	\caption{The 3d-plot of solutions $u^{[2]}(x,t)$ for mKdV equation \eqref{eq:mKdV-equation} and corresponding spectral parameters $\lambda_1$, $\lambda_2$, under the $\cn$-type background.
	The solutions $u^{[2]}(x,t)$ in figures (a), (b) and (c) are called the  two-elliptic-solitons solution, the elliptic-soliton-breather solution and the  two-elliptic-breathers solution, respectively. The red curves in figures (i), (ii) and (iii) describe the cuts in $\lambda$-plane. The green and blue dots in them are the spectral points $\lambda_1$ and $\lambda_2$, respectively.}
	\label{fig:cn-2b-3d}
\end{figure}

For the $\dn$-type background, i.e. $l=\frac{K'}{2}$, we choose $z_1=\frac{K'}{2}+\frac{2K}{5}\ii$, $z_2=-\frac{K'}{2}+\frac{K}{5}\ii$, to gain a two-elliptic-solitons solution $u^{[2]}(x,t)$ by Case II-1, since $\lambda_1\approx-0.269\ii$, $\lambda_2\approx-1.189\ii$ and $c_1=c_2=1$. Plugging the above parameters and $k=\frac{9}{10}$, $\alpha=1$ into equation \eqref{eq:mKdV-solution-xi-n-1}, we construct a two-elliptic-solitons solution $u^{[2]}(x,t)$ and plot it in Figure \ref{fig:dn-2b-3d}(a). The different peaks of solitons in Figure \ref{fig:dn-2b-3d}(a) are mainly determined by parameters $\lambda_1$ and $\lambda_2$. Substituting $z_1=-\frac{K'}{8}+\frac{2K}{5}\ii$, $z_2=-\frac{K'}{2}+\frac{K}{5}\ii$, $z_3=\frac{9K'}{8}+\frac{2K}{5}\ii$, $\alpha=1$ with $\lambda_1\approx0.103-0.535\ii$, $\lambda_2\approx-1.189\ii$ and $c_1=c_2=1$ by Case II-2 into equation \eqref{eq:mKdV-solution-xi-n-1}, we obtain an elliptic-soliton-breather solution $u^{[2]}(x,t)$ and plot it in Figure \ref{fig:dn-2b-3d}(b). When $\alpha=1$, $k=\frac{9}{10}$, $z_1=-\frac{K'}{3}+\frac{K}{3}\ii$, $z_2=-\frac{K'}{8}+\frac{2K}{5}\ii$, $z_3=\frac{4K'}{3}+\frac{K}{3}\ii$, $z_4=\frac{9K'}{8}+\frac{2K}{5}\ii$ with $\lambda_1\approx0.143-0.737\ii$ and $\lambda_2\approx0.103- 0.535\ii$ in Case II-3, a two-elliptic-breathers solution $u^{[2]}(x,t)$ is obtained by equation \eqref{eq:mKdV-solution-xi-n-1} and drawn in Figure \ref{fig:dn-2b-3d}(c).  Figures \ref{fig:dn-2b-3d} (i), (ii) and (iii) show the corresponding spectral parameters $\lambda_1$ and $\lambda_2$ of the above three solutions, respectively.

\begin{figure}[h]
	\centering
	\subfigure {\includegraphics[width=1\linewidth]{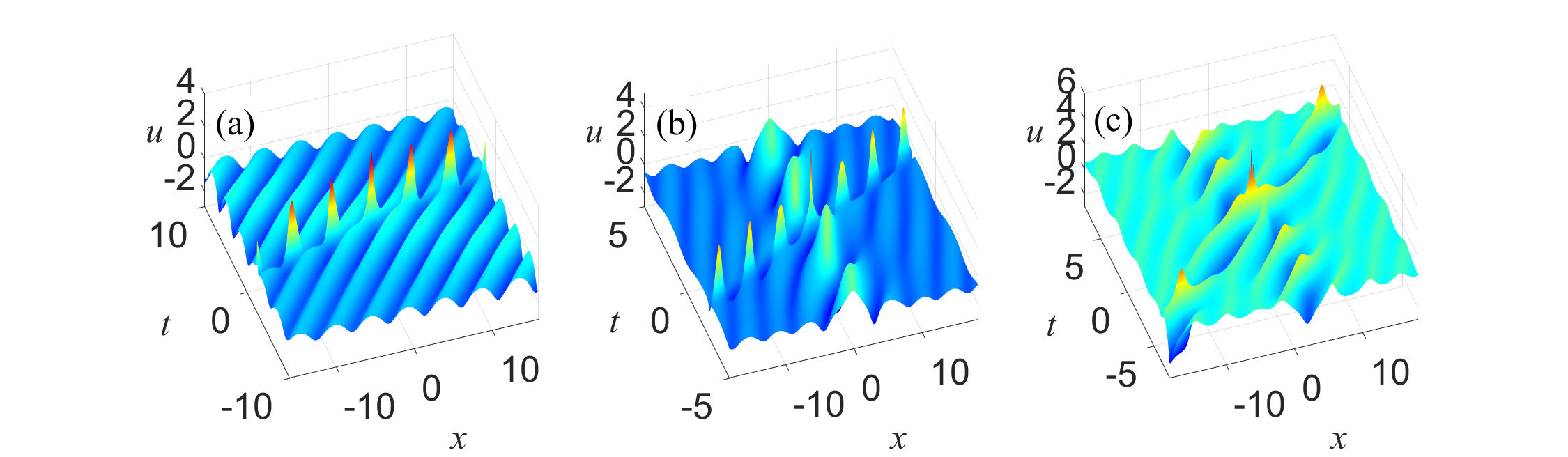}}
	\subfigure {\includegraphics[width=1\linewidth]{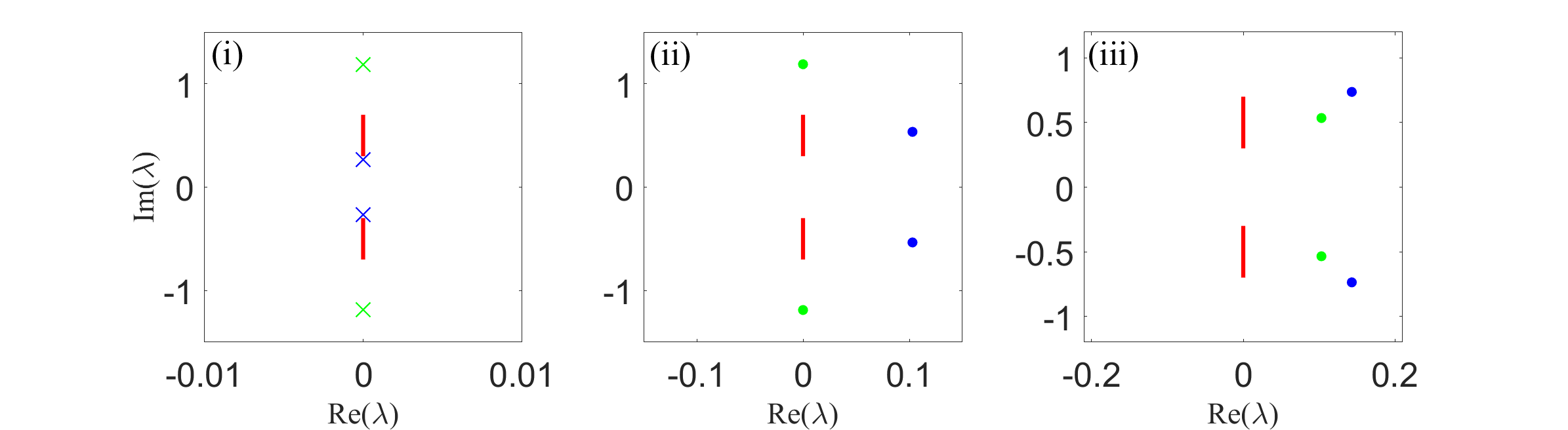}}
	\caption{The 3d-plot of two-elliptic-localized solutions $u^{[2]}(x,t)$  and the corresponding spectral parameters $\lambda_1$, $\lambda_2$ under the $\dn$-type background. The solutions $u^{[2]}(x,t)$ in figures (a), (b) and (c) are called the two-elliptic-solitons solution, the elliptic-soliton-breather solution and the two-elliptic-breathers solution respectively. The red curves in figure (i), (ii), (iii) are the cuts in $\lambda$-plane and the green and blue points are spectral points $\lambda_1$ and $\lambda_2$, respectively.}
	\label{fig:dn-2b-3d}
\end{figure}

In this section, we first show multi elliptic-localized solutions of the mKdV equation under the elliptic function background. Then the different dynamic behaviors of solutions (the elliptic-soliton solutions, elliptic-breather solutions, two-elliptic-solitons solutions, two-elliptic-breathers solutions, and elliptic-soliton-breather solutions) are exhibited with the aid of the computer graph. In what follows, we would like to describe the asymptotic dynamics of the multi elliptic-localized solutions systematically.

\section{Asymptotic behaviors for the multi elliptic-localized solutions}\label{sec:asymptotially analysis}
In the previous section, we have provided the general formulas for the multi elliptic-localized solutions of the mKdV equation \eqref{eq:mKdV-equation} and exhibited the dynamics behaviors of those solutions. From the figures of two-elliptic-localized solutions, we can see that a crucial feature of them is their elastic interaction. In this section, we would like to disclose the properties strictly by the asymptotic analysis to the multi elliptic-localized solutions, which can be regarded as the analog of multi-solitons or multi-breathers. Especially, we find a symmetry condition intimately related to the strictly elastic interaction for the multi elliptic-localized solutions.

\subsection{The variation of the velocity}

The velocities of each elliptic-soliton/breather for the multi elliptic-localized solutions $u^{[N]}(x,t)$ are crucial to study their asymptotic behaviors. We aim to study the velocities of those breathers and solitons and their asymptotic expressions. For the velocity of the multi soliton and breather solutions, it is easy to obtain the expressions of velocity in a polynomial form. But for the multi elliptic-localized solutions, the velocity is given by a harmonic function of the spectral parameter. For ease of analysis, we consider the translation $\xi=x-st$, $\tau=t$ on solution $u^{[N]}(x,t)$ defined in equation \eqref{eq:xi-x-solution}. Then, we study the relationship between the velocity $v$ and the uniform parameter $z$, and get the relationship between the velocity $v$ and the spectral parameter $\lambda$ based on the conformal mapping $\lambda(z)$.

\begin{define}\label{define:li}
	Define the line $L_i$ as 
	\begin{equation}\label{eq:li}
		L_i:=\Re(\gamma_i)+\Re(I_i) \xi-\Re(\Omega_i) t=\Re(I_i)\left(\xi-v_i t+\hat{c}_i\right)\equiv C,\qquad v_i\equiv \frac{\Re(\Omega_i)}{\Re(I_i)}, \qquad i=1,2,\cdots,N,
	\end{equation}
where $C$ is a constant and $\gamma_i,I_i=I(z_i),\Omega_i=\Omega(z_i)$ are defined in equation \eqref{eq:E-2-hat}.
\end{define}
From the dynamic behaviors of solution $\hat{u}^{[N]}(\xi,\tau)$, we find that there exist $n_1$ solitons and $n_2$ breathers in solution $\hat{u}^{[N]}(\xi,\tau)$, i.e., $N=n_1+n_2$. However, $\hat{u}^{[N]}(\xi,\tau)$ is constructed by choosing $m$ different parameters $z_i$'s with $m\times m$ matrices $\mathcal{M}$ and $\mathcal{D}$ in equation \eqref{eq:mKdV-solution-xi-n-1}. Therefore, the relationship between $z_i$ and $L_i$ is not a one-to-one correspondence. Sometimes, different parameters $z_i\neq z_{j}$, $i,j=1,2,\cdots,N$, represent the same $L_i$. The solution $\hat{u}^{[N]}(\xi,\tau)$ divides the whole space into $2N$ regions, by lines $L_i^{\pm}$, $i=1,2,\cdots,N$, as shown in Figure \ref{fig:li}. In order to have a clearer understanding of the above solution, we mainly consider the following aspects: (i) Study the relation between the velocity $v$ and the parameter $z$ in this subsection. (ii) Analyze the asymptotic behavior of multi elliptic-localized solutions along the line $L_i^{\pm},i=1,2,\cdots,N$, and the region $R_i^{\pm},i=1,2,\cdots,N$, as $\tau\rightarrow \pm \infty$ in the next subsection.
\begin{figure}[h]
	\centering
	\includegraphics[width=0.45\linewidth]{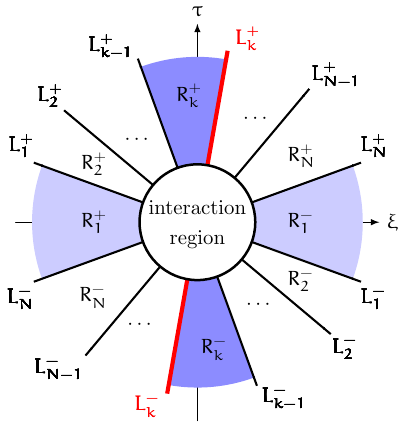}
	\caption{The sketch map for the interaction of solution $\hat{u}^{[N]}(\xi,\tau)$. The region $R_{k}^{\pm}$ is defined between the lines $L_{k-1}^{\pm}$ and $L_k^{\pm}$, $k=1,2,\cdots,N$.}
	\label{fig:li}
\end{figure}

For the definition of functions $I(z)$ and $\Omega(z)$ in equation \eqref{eq:E-2-hat}, the velocity $v(z)\equiv \frac{\Re(\Omega)}{\Re(I)}$ is mainly determined by the following two conditions, one is $\Re(I)=\Re(I(z))= \Re(-\alpha Z(\ii (z+l)+K+\ii K')-\alpha Z(\ii (z-l)))$ and the other is $\Re(\Omega)=\Re(\Omega(z))$ in equation \eqref{eq:Omega}. Both of them are the meromorphic functions of variable $z\in S$ defined in equation \eqref{eq:set-S}. To study the meromorphic function $v(z)$ clearly, we must obtain all possibilities of $z\in S$ satisfying $\Re(\Omega)\equiv 0 $, $\Re(I)\equiv 0 $, $\Re(\Omega)\equiv \infty$  and $\Re(I)\equiv \infty$. The condition $\Re(I)\equiv 0$ is studied in \cite{LinglmS-21}. Therefore, we just analyze the case of $\Re(\Omega)\equiv 0, z\in S$. For clarity, we introduce the notation $S_1$ defined as the first quadrants of $S$ in \eqref{eq:set-S}.

\begin{lemma}\label{lemma:S-boundary}
	On the boundary of $S_1$, marked as $\partial S_1$, the points satisfying $\Re(\Omega)\equiv 0$ could be classified as follows:
	\begin{itemize}
		\item If $l=0$ and $k\in \left(0,\frac{\sqrt{2}}{2}\right)$, we get that when $z\in \partial S_1\cap \mathbb{R}$,  $\Re(\Omega)=0$. On the upper boundaries of $S_1$, there exist three points $z=z_{R1}+\ii \frac{K}{2}$, $\frac{K'}{2}+\ii\frac{K}{2}$ and $z_{R2}+\ii \frac{K}{2}$,  $z_{R1}<\frac{K'}{2}<z_{R2}$, satisfying $\Re(\Omega)=0$. On the remaining left and right boundary of $S_1$, there do not exist any points such that $\Re(\Omega)=0$.
		\item If $l=0$ and $k\in \left(\frac{\sqrt{2}}{2},1\right)$, we get that when $z\in \partial S_1\cap \mathbb{R}$,  $\Re(\Omega)=0$. On the upper boundary of $S_1$, i.e., $z\in \partial S_1\cap\left\{z\left|z=z_R+\frac{\ii K}{2}\right.\right\}$, only exist one point $z=\frac{K'}{2}+\ii\frac{K}{2}$ such that $\Re(\Omega)=0$. The remaining left and right boundary of $S_1$ do not exist any points satisfying $\Re(\Omega)=0$.
	\end{itemize}
\end{lemma}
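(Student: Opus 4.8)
The plan is to analyze $\Re(\Omega)$ separately on the four edges of $\partial S_1$, in each case reducing the restriction of $\Omega$ in \eqref{eq:Omega} (with $l=0$) to an explicit real-variable expression by means of Jacobi's imaginary-argument transformation and the quarter- and half-period shift formulas \eqref{eq:Jacobi-shift}. Three of the four edges are routine. On the bottom edge $z\in\mathbb{R}$ the argument $\ii z$ is purely imaginary, so the imaginary transformation turns every factor $\sn(\ii z),\cn(\ii z),\dn(\ii z)$ into $\ii$ times a real quantity; both terms of \eqref{eq:Omega} are then purely imaginary and $\Re(\Omega)\equiv 0$. On the left edge $z=\ii y$ one has $\ii z=-y\in\mathbb{R}$, so $\Omega$ is real and factors as $\Omega(\ii y)=\alpha^3\sn(y)\dn(y)\left(k^2\cn(y)+k'^2\cn^{-3}(y)\right)>0$ for $0<y\le K/2$; on the right edge $z=K'+\ii y$ the $\ii K'$-shift formulas make $\Omega$ real again but of the opposite (negative) sign, with a pole only at the corner $z=K'$. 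Hence $\Re(\Omega)\neq 0$ on the interiors of both vertical edges.

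The top edge $z=z_R+\ii K/2$, $z_R\in[0,K']$, is the crux. First I would record two structural facts: since $\Omega$ is purely imaginary on $\mathbb{R}$, Schwarz reflection gives the reality relation $\overline{\Omega(z)}=-\Omega(\bar z)$, while the oddness of $\sn$ together with the evenness of $\cn,\dn$ gives $\Omega(-z)=-\Omega(z)$. I then introduce
\begin{equation*}
\Psi(z_R):=\Omega(z_R+\tfrac{\ii K}{2})-\Omega(z_R-\tfrac{\ii K}{2})=2\,\Re\!\left(\Omega(z_R+\tfrac{\ii K}{2})\right),
\end{equation*}
which is an elliptic function of $z_R$, real on $\mathbb{R}$, and by the two facts above even about $z_R=0$ with $\Psi(0)=2\Omega(\ii K/2)>0$. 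A direct computation with \eqref{eq:Jacobi-shift} (conjugation combined with a shift by $K'$, on the line $\Re(\ii z)=-K/2$ where $\overline{\ii z}=-K-\ii z$ forces the quarter-period relations among $\overline{\sn},\overline{\cn},\overline{\dn}$) yields the reflection identity $\Omega(K'-z_R+\tfrac{\ii K}{2})=-\Omega(z_R+\tfrac{\ii K}{2})$, so that $\Psi$ is odd about $z_R=K'/2$. In particular $\Psi(K'/2)=0$ for every $k$, which is exactly the point $(K'/2,K/2)$, and the remaining zeros occur in symmetric pairs $\{a,K'-a\}$; moreover $\Psi(K')=-\Psi(0)<0$, matching the negativity found on the right edge.

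It then remains to count the real zeros of the odd-about-$K'/2$ function $\Psi$ on $[0,K']$. The sign pattern $\Psi(0)>0$, $\Psi(K')<0$ together with the oddness shows that this count is governed by the sign of $\Psi'(K'/2)$: if $\Psi'(K'/2)<0$ then $\Psi>0$ on $[0,K'/2)$ and $K'/2$ is the only zero, whereas if $\Psi'(K'/2)>0$ a symmetric pair $z_{R1}<K'/2<z_{R2}$ of additional zeros is forced. I expect the explicit evaluation of $\Psi'(K'/2)$ to factor through $2k^2-1$, vanishing exactly at $k=\sqrt{2}/2$, the modulus at which the $\cn$-background velocity $s=\alpha^2(2k^2-1)$ degenerates to zero; then $\Psi'(K'/2)$ has one sign on $k\in(0,\sqrt{2}/2)$ and the other on $k\in(\sqrt{2}/2,1)$. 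A pitchfork-type bifurcation at $k=\sqrt{2}/2$ merges the pair $\{z_{R1},z_{R2}\}$ into a higher-order zero at $K'/2$, producing the three zeros in the first case and the single zero in the second.

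The main obstacle is precisely this last step: establishing the term-swapping identity on the line $\Re(\ii z)=-K/2$ (under which the two summands of \eqref{eq:Omega} exchange after the conjugation-and-shift), which is what makes $\Psi$ anti-symmetric about $K'/2$, and then pinning the threshold by showing $\Psi'(K'/2)\propto(2k^2-1)$ and supplying a global bound from the elliptic order of $\Psi$ that excludes any further symmetric pairs (so the count is exactly one or exactly three, never five or more). By contrast, the three non-top edges, the existence of the central zero at $(K'/2,K/2)$, and the corner signs are all routine consequences of the transformation formulas, mirroring the earlier treatment of $\Re(I)$ in \cite{LinglmS-21}.
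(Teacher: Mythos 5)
Your treatment of the three easy edges is correct and essentially matches the paper: on $\partial S_1\cap\mathbb{R}$ the imaginary-argument transformation makes both summands of \eqref{eq:Omega} purely imaginary; on $z=\ii y$ one gets the real positive expression $\alpha^3\sn(y)\dn(y)\bigl(k^2\cn(y)+k'^2\cn^{-3}(y)\bigr)$; and on $z=K'+\ii y$ the $\ii K'$-shift yields a real negative expression with a pole only at the corner. Your structural observations about the top edge are also true and checkable: the reality relation $\Omega(z^*)=-\Omega^*(z)$, the oddness $\Omega(-z)=-\Omega(z)$, and the reflection identity $\Omega(K'-z_R+\ii K/2)=-\Omega(z_R+\ii K/2)$ all hold (the last follows from the quarter-period swap $k'^2\sn\dn/\cn^3(u)=k^2\sn\cn\dn(u+K+\ii K')$, which makes the bracket in \eqref{eq:Omega} have period $K+\ii K'$ in $u=\ii z$, combined with oddness). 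So the central zero at $(K'/2,K/2)$, the endpoint signs $\Psi(0)>0$, $\Psi(K')<0$, and the pairing $\{a,K'-a\}$ of any further zeros are all secured, and the existence of the extra pair when $\Psi'(K'/2)>0$ follows by the intermediate value theorem.

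The genuine gap is exactly where you flag it, and the proposed repair would not close it. First, you never compute $\Psi'(K'/2)$, so the threshold $k=\tfrac{\sqrt{2}}{2}$ is only conjectured (it is in fact correct: one finds $\Psi'(K'/2)\propto -k(2k^2-1)$). Second, and more seriously, the ``global bound from the elliptic order of $\Psi$'' cannot deliver the exactness claim: $\Omega$ has poles of order three (e.g.\ from $\cn^{-3}$ and from the triple pole of $\sn\cn\dn$), so $\Psi$ inherits several third-order poles per fundamental parallelogram; its order therefore permits far more than three zeros, and in any case it counts non-real zeros, so it cannot pin the number of zeros on the real segment $[0,K']$ at exactly one or exactly three. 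The paper avoids this by a closed-form reduction you do not have: using the quarter-period swap plus the half-argument formulas \eqref{eq:Jacobi-half}--\eqref{eq:Jacobi-Double}, it first proves the doubling identity $\Omega(z)=-2\alpha^3\,\dn(2\ii z)\bigl(1-\cn(2\ii z)\bigr)^2/\sn^3(2\ii z)$, and then on the top edge obtains
\begin{equation*}
\Omega\!\left(z_R+\tfrac{\ii K}{2}\right)=2\alpha^3k'\cn(2z_R,k')\left(1-2k'^2\sn^2(2z_R,k')-2\ii k'\dn(2z_R,k')\sn(2z_R,k')\right),
\end{equation*}
so that $\Re(\Omega)=0$ factors into $\cn(2z_R,k')=0$ (giving exactly $z_R=K'/2$) or $\sn^2(2z_R,k')=\tfrac{1}{2k'^2}$; since $\sn^2(2z_R,k')$ rises monotonically from $0$ to $1$ on $[0,K'/2]$ and falls back symmetrically, this second equation has exactly two roots $z_{R1}<K'/2<z_{R2}$ iff $2k'^2>1$, i.e.\ $k<\tfrac{\sqrt{2}}{2}$, and none for $k>\tfrac{\sqrt{2}}{2}$. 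To finish your argument you would need this factorization (or an equivalent monotonicity statement for $\Re(\Omega)$ on the half-edge), not an order count; with it, your symmetric-pair bookkeeping becomes redundant, since the explicit form already gives the exact zero set.
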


\begin{proof}
	 When $l=0$, the function $\Omega(z)$ could be simplified as 
	\begin{equation}\label{eq:Omega-0}
			\Omega(z)=-\alpha^3 \left( k^2 \dn(\ii z)\sn(\ii z)\cn(\ii z)+k'^2\frac{\dn(\ii z)\sn(\ii z)}{\cn^3(\ii z)}\right)=-\alpha^3 \frac{2\dn(2\ii z)(1-\cn(2\ii z))^2}{\sn^3(2\ii z)},
	\end{equation}
which is obtained by studying the zeros and poles of the elliptic function $\Omega(z)$ as follows. By the zeros and poles of Jacobi elliptic functions \eqref{eq:J 0in} and the particular values of function $\cn(\ii z)$ in \cite{ByrdF-54}, we obtain that all zeros of the middle term of equation \eqref{eq:Omega-0} are $2 nK'+2\ii mK$, $(2n+1) K'+\ii(2m+1)K$, $(n+\frac{1}{2}) K'+\ii(m+\frac{1}{2})K$ with multiplicity one. Similarly, all poles of the middle term of equation \eqref{eq:Omega-0} are $2n K'+\ii(2m+1)K$, $(2n+1) K'+\ii 2mK$ with multiplicity three. Furthermore, we could verify that the right side of equation \eqref{eq:Omega-0} has the same zeros and poles as the middle term. Plugging $z=-\ii \frac{K}{2}$ into $\Omega(z)$ and combining Liouville's theorem, we get equation \eqref{eq:Omega-0}.

And then, we consider the value of $\Re(\Omega(z))$ for $z\in \partial S_1$.
	Since $\dn(2\ii z),\cn(2\ii z)\in \mathbb{R},z\in \mathbb{R}$ and $\sn(2\ii z)\in \ii \mathbb{R},z\in \mathbb{R}$, we get that the equation $\Re(\Omega(z))\equiv 0$ holds on the real line for any $k\in(0,1)$. Consider the left boundary of set $S_1$, i.e., $z\in \ii \mathbb{R}\cap \partial S_1$ and set $z=\ii z_I, z_I\in \left(0,\frac{K}{2}\right)$. Substituting $z=\ii z_I$ into equation \eqref{eq:Omega-0}, we obtain $\Re(\Omega(z))\neq 0$ with $z_I\neq 0$, which implies $\Re(\Omega(z))\neq 0, z\in \ii \mathbb{R}\cap( \partial S_1\backslash\{0\})$. If the parameter $z$ in the right boundary of set $\partial S_1$, we plug $z=K'+\ii z_I$ into equation \eqref{eq:Omega-0} and get
	 \begin{equation}
	 	\begin{split}
	 		\Omega(z)=&-\alpha^3 \frac{2\dn(-2z_I+2\ii K')(1-\cn(-2z_I+2\ii K'))^2}{\sn^3(-2z_I+2\ii K')}
	 		=-\alpha^3 \frac{2\dn(2z_I)(1+\cn(2z_I))^2}{\sn^3(2z_I)},
	 	\end{split}
	 \end{equation}
	which implies that for any $z_I\in\left(0,\frac{K}{2}\right]$, $\Re(\Omega(z))\neq 0$. Then, we study the upper boundary of set $\partial S_1$. 
	Plugging $z=z_R+\ii \frac{K}{2}$ into \eqref{eq:Omega-0} and using the translation formula \eqref{eq:Jacobi-shift} and imaginary arguments formula \eqref{eq:Jacobi-I}, we get
		\begin{equation}
		\begin{split}
			\Re(\Omega(z)) \xlongequal[ ]{\eqref{eq:Jacobi-shift}}
			\Re\left(2\alpha^3k' \frac{(\dn(2\ii z_R)-k'\sn(2\ii z_R))^2}{\cn^3(2\ii z_R)}\right)
			\xlongequal[ ]{\eqref{eq:Jacobi-I}}2\alpha^3k'\cn(2z_R,k')\left(1-2k'^2\sn^2(2 z_R,k')\right).
		\end{split}
	\end{equation}
	Letting $\Re(\Omega(z))=0$, we have $\cn(2z_R,k')=0$ or $1-2k'^2\sn^2(2 z_R,k')=0$. It is easy to verify that if $z_R=\frac{K'}{2}$, equation $\cn(2z_R,k')=0$ holds. Solving $1-2k'^2\sn^2(2 z_R,k')=0$, we know that only if $k\in \left(0,\frac{\sqrt{2}}{2}\right)$, equation $1-2k'^2\sn^2(2 z_R,k')=0$ have two different roots $z_{R1},z_{R2}$ satisfying $z_{R1}<\frac{K'}{2}< z_{R2}< K'$. When $k=\frac{\sqrt{2}}{2}$, equation $1-2k'^2\sn^2(2 z_R,k')=0$ has the multiple root $z_{R1}=z_{R2}=\frac{K'}{2}$. Otherwise, when $k\in \left(\frac{\sqrt{2}}{2},1\right)$, $1-2k'^2\sn^2(2 z_R,k')>0, z_R\in [0,K']$.
\end{proof}

\begin{remark}\label{remark:boundary}
	 Substituting $z=\frac{K'}{2}+\ii z_I$ into equation \eqref{eq:Omega-0}, we obtain
	\begin{equation}\nonumber
			\Omega(z)
			=-\alpha^3 \frac{2\dn(\ii K'-2z_I)(1-\cn(\ii K'-2z_I))^2}{\sn^3(\ii K'-2z_I)}
			=-2\alpha^3k\cn(2z_I)\left( 2k\sn(2z_I)\dn(2z_I)+\ii (2k^2\sn^2(2z_I)-1) \right),
	\end{equation}
	which implies that on the line $\left\{z\left|z=\frac{K'}{2}+\ii z_I\right.\right\}\cap S_1$, only when $z=\frac{K'}{2}, \frac{K'}{2}+\ii \frac{K}{2}$, the equation $\Re(\Omega(z))=0$ holds. 
\end{remark}

For the different values of $l$, we study the function $\Omega(z)$ to obtain the curve $\Re(\Omega(z))\equiv 0$. Based on the properties of elliptic functions and equation \eqref{eq:Omega-0}, it is easy to verify that $\Omega(-z)=-\Omega(z)$ and $\Omega(z^*)=-\Omega^*(z)$ in equation \eqref{eq:Omega-0}, which implies that the function $\Re(\Omega)$ is symmetric about the real axis $\Im(z)=0$ and the line $\Re(z)=0$. Therefore, when $l=0$, we just need to study the region $S_1$. 

\begin{prop}\label{prop:Omega=0}
	The curve $\Re(\Omega(z))\equiv 0$ could be divided into the following cases:
	\begin{itemize}
		\item[(i)] If $l=0$ and $k\in (0,\frac{\sqrt{6}}{4}-\frac{\sqrt{2}}{4})$, there are three line segments in $S_1$ satisfying $\Re(\Omega(z))\equiv 0$. Those three line segments start with points $z=z_{R1}+\ii \frac{K}{2}$, $z=\frac{K'}{2}+\ii \frac{K}{2}$ and  $z=z_{R2}+\ii \frac{K}{2}$, $z_{R1}<\frac{K'}{2}<z_{R2}<K'$, on the upper boundary of $S_1$ and end with points $z=z_1$, $z=z_2$ and $z=K'$, $z_1<z_2<K'$ on the real axis. Furthermore, the above curves do not intersect with each other.
		\item[(ii)] If $l=0$ and $k\in (\frac{\sqrt{6}}{4}-\frac{\sqrt{2}}{4},\frac{\sqrt{2}}{2})$, there are two line segments in $S_1$ satisfying $\Re(\Omega(z))\equiv 0$. One of those line segments starts with $z=z_{R1}+\ii \frac{K}{2}$ and ends with  $z=\frac{K'}{2}+\ii \frac{K}{2}$.
		The other curve starts with $z=z_{R2}+\ii \frac{K}{2}$ and ends with $z=K'$. Furthermore, the above curves do not intersect with each other.
		\item[(iii)] If $l=0$ and $k\in (\frac{\sqrt{2}}{2},1)$, there is only one curve in $S_1$ meeting $\Re(\Omega(z))\equiv 0$, which connects points $\frac{K'}{2}+\ii \frac{K}{2}$ and $K'$.
	\end{itemize}
\end{prop}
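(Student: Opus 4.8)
The plan is to treat $\Re(\Omega)$ as a harmonic function on the open rectangle $S_1$ and to analyze its nodal set $\{\Re(\Omega)\equiv 0\}$ by combining the boundary data already obtained in Lemma \ref{lemma:S-boundary} and Remark \ref{remark:boundary} with an interior critical-point analysis. First I would check that $\Omega(z)$ has no poles in the interior of $S_1$: writing $\Omega$ in the half-argument form \eqref{eq:Omega-half}, the only poles arise from $\sn(2\ii z)=0$, which inside $\overline{S_1}$ occurs solely at the corners $z=0$ and $z=K'$. Hence $\Re(\Omega)$ is harmonic on the open $S_1$, so its nodal set is a union of real-analytic arcs that can branch or cross only at critical points $\Omega'(z_0)=0$ and can terminate only on $\partial S_1$ (or at the corner poles). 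The symmetries $\Omega(-z)=-\Omega(z)$ and $\Omega(z^*)=-\Omega^*(z)$ noted before the proposition show that the whole real segment $[0,K']$ is itself a nodal line and justify restricting to $S_1$.

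The key reduction is to pin down where interior arcs meet the real axis. On the real axis $\Omega$ is purely imaginary, $\Omega(z_R)=\ii\, g(z_R)$ with $g(z_R)=-2\alpha^3\,\dn(2z_R,k')\sn(2z_R,k')/(1+\cn(2z_R,k'))^2$ (obtained from \eqref{eq:Omega-half} via the imaginary-argument formulas and the identity $1-\cn=\sn^2/(1+\cn)$). A separate nodal arc can reach the real axis only where a second nodal line crosses it, that is, exactly at a critical point $\Omega'(z_0)=0$, equivalently at an interior critical point of $g$. Computing $g'=0$ I expect the condition to reduce, after clearing $(1+\cn)^{-3}$ and using $\dn^2=1-k'^2\sn^2$, to a cubic in $c:=\cn(2z_R,k')$ that factors as $(c+1)\bigl(2k'^2c^2-c+2-2k'^2\bigr)=0$. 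Since $c=-1$ is attained only at the endpoint $z_R=K'$, the interior feet are the roots of $2k'^2c^2-c+2-2k'^2=0$, whose discriminant $1-16k'^2+16k'^4$ vanishes when $k'^2=\tfrac{2+\sqrt3}{4}$, i.e. $k=\tfrac{\sqrt6-\sqrt2}{4}$ (the companion root $\tfrac{\sqrt6+\sqrt2}{4}$ lies beyond $\tfrac{\sqrt2}{2}$ and produces roots with $c>1$, hence is irrelevant here). Thus at $k=\tfrac{\sqrt6-\sqrt2}{4}$ the two interior critical points of $g$ collide and annihilate, giving the bifurcation between cases (i) and (ii); the bifurcation at $k=\tfrac{\sqrt2}{2}$ between (ii) and (iii) is already supplied by Lemma \ref{lemma:S-boundary}, where the outer top-boundary zeros $z_{R1},z_{R2}$ merge into $\tfrac{K'}{2}$.

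With both bifurcation values in hand, I would fix the connectivity interval by interval. In each $k$-range the boundary zeros are known (three top points for $k<\tfrac{\sqrt2}{2}$, one for $k>\tfrac{\sqrt2}{2}$; the corner pole at $K'$; and, for $k<\tfrac{\sqrt6-\sqrt2}{4}$, two interior feet $z_1<z_2$ on the real axis). Counting arc-ends and matching them—one arc emanating from each simple nodal point, and the arcs leaving the corner pole—forces the pairings: for $k<\tfrac{\sqrt6-\sqrt2}{4}$ the three top points connect down to $z_1,z_2,K'$; for $\tfrac{\sqrt6-\sqrt2}{4}<k<\tfrac{\sqrt2}{2}$ the loss of the two real feet makes the arc from $z_{R1}+\ii\tfrac{K}{2}$ reconnect to $\tfrac{K'}{2}+\ii\tfrac{K}{2}$, leaving $z_{R2}+\ii\tfrac{K}{2}$ paired with $K'$; and for $k>\tfrac{\sqrt2}{2}$ only $\tfrac{K'}{2}+\ii\tfrac{K}{2}\to K'$ survives. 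I would propagate the pairing across each open interval by continuity of the nodal set in $k$ (the only reconnection events being the two bifurcations), fixing the sign pattern of $\Re(\Omega)$ by a single evaluation, for instance on the segment $\Re z=\tfrac{K'}{2}$ where Remark \ref{remark:boundary} guarantees no sign change.

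The main obstacle is \emph{completeness}: I must rule out any stray interior critical point of $\Re(\Omega)$ off the real axis, since such a point could carry an unaccounted crossing and violate the asserted non-intersection of the curves. I would address this with the argument principle: $\Omega'$ is elliptic, so the number of its zeros in a fundamental domain equals the number of its poles (read off from the corner poles of $\Omega$), and by the two reflection symmetries these zeros distribute symmetrically over the four quadrants. Matching this total against the real-axis zeros already produced by the cubic should show that $\Omega'$ has no zero in the open interior of $S_1$, whence distinct arcs cannot meet there and the curves are disjoint as claimed. A secondary technical point is the local nodal picture at the corner pole $z=K'$, where I would compute the pole order of $\Omega$ to count the nodal rays leaving $K'$ and confirm that exactly one of them enters $S_1$ to terminate the surviving arc.
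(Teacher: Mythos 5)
Your proposal is correct and shares the paper's overall architecture---boundary data from Lemma \ref{lemma:S-boundary} and Remark \ref{remark:boundary}, nodal-line theory for the harmonic function $\Re(\Omega)$ (arcs cross only at zeros of $\Omega'$, terminate only at $\partial S_1$ or at singularities), an argument-principle count of the zeros of the elliptic function $\Omega'$ to exclude off-axis critical points, and a connectivity argument forcing the pairings---but you locate the real-axis critical points by a genuinely different and cleaner computation. The paper differentiates $\Omega$ globally, obtaining in \eqref{eq:Omega-d-0} the quartic $f(q)=3k^4q^4-(8k^4+2k^2)q^3+(8k^4+4k^2)q^2-(8k^2-2)q+1$ in $q=\sn^2(\ii z)$, and extracts the threshold by calculus on $f$ (locating the interior minimum $q_0=1-\frac{\sqrt{3-3k^2}}{3k}$ and checking the sign of $f(q_0)$). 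You instead restrict $\Omega$ to the nodal line $\mathbb{R}$ first, write $\Omega=\ii g$ with $g=-2\alpha^3\dn(2z_R,k')\sn(2z_R,k')/(1+\cn(2z_R,k'))^2$, and factor $g'$ as $(c+1)\bigl(2k'^2c^2-c+2-2k'^2\bigr)$ with $c=\cn(2z_R,k')$; I checked this factorization and it is exact, the discriminant $1-16k'^2+16k'^4$ vanishes precisely at $k=\frac{\sqrt6-\sqrt2}{4}$, for $k'^2>\frac{2+\sqrt3}{4}$ both roots lie in $(0,1)$ (two feet $z_1<z_2<\frac{K'}{2}$, matching the paper), and the companion value $\frac{\sqrt6+\sqrt2}{4}$ indeed yields $c>1$. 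Since $\Omega'(z_R)=\ii g'(z_R)$ by analyticity, your quadratic reproduces exactly the paper's real zeros of $\Omega'$, so the two computations are equivalent; yours buys a closed-form discriminant in place of the paper's quartic analysis, while the paper's route has the advantage that $f(q)$ already parametrizes \emph{all} zeros of $\Omega'$, which is what the zero/pole bookkeeping in the period parallelogram ultimately consumes---the completeness step you flag as the main obstacle is precisely the step the paper settles this way, so your plan closes the same gap with the same tool. Two minor points: $z=0$ is a removable singularity of $\Omega$ (indeed $\Omega\sim-\ii\alpha^3 z$ there), not a pole, so only $z=K'$ carries a pole (of order three, consistent with your plan to count nodal rays at that corner); and in cases (ii)--(iii), like the paper, you still need the confinement input of Remark \ref{remark:boundary} (no interior nodal point on $\Re(z)=\frac{K'}{2}$) to pin the pairing $z_{R1}+\ii\frac{K}{2}\leftrightarrow\frac{K'}{2}+\ii\frac{K}{2}$, since the zero count alone does not distinguish the two admissible matchings---your continuity-in-$k$ propagation plus this remark does.
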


\begin{proof}
	By the definition of $\Omega(z)$ in equation \eqref{eq:Omega}, it is easy to obtain that function $\Omega(z)$ is a meromorphic function on the whole complex plane. Considering the derivative of $\Omega(z)$, we get the curve $\Re(\Omega(z))=0$ generated by the tangent vector 
	\begin{equation}\label{eq:Omega-tangent}
		\left(-\frac{\dd \Re(\Omega(z))}{\dd z_I}, \frac{\dd \Re(\Omega(z))}{\dd z_R}\right)=\left(\Im \left(\frac{\dd \Omega(z)}{\dd z}\right), \Re \left(\frac{\dd \Omega(z)}{\dd z}\right)\right).
	\end{equation}
	Based on the above iteration, the curve $\Re(\Omega(z))\equiv 0$ could be divided into the following two categories:
	\begin{itemize}
		\item[(1)] The curve $\Re(\Omega(z))\equiv 0$ forms a closed circle. According to the maximum value principle of harmonic function, if there is no singularity in the area, all the values of $\Omega(z)$ in the closed area satisfy $\Re(\Omega(z))\equiv 0$. Otherwise, there must exist at least one pole in this closed region.
		\item[(2)] Without any closed loops, the curve must end up at the point $z_0$ satisfying $\Omega'(z_0)=\infty$ or $\Omega'(z_0)=0$ or at the boundary $\partial S_1$. It should be noticed that when there exist two different curves end at the same point $z_0$ satisfying $\Omega'(z_0)=0$, the Taylor expanding at this point could be written as $\Omega(z)=\Omega(z_0)+\frac{\Omega''(z_0)}{2}(z-z_0)^2+\mathcal{O}\left((z-z_0)^3\right)$, with $\Omega''(z_0)\neq 0$.
	\end{itemize}

	The derivatives of function $\Omega(z)$ in equation \eqref{eq:Omega-0} could be written as
\begin{equation}\label{eq:Omega-d-0}
		\begin{split}
			\Omega'(z)
			=&-\alpha^3 \left(\frac{2\dn(2\ii z)(1-\cn(2\ii z))^2}{\sn^3(2\ii z)}\right)'
			=-\frac{4\ii \alpha^3(\cn(2\ii z)-1)^2(2k^2\cn^2(2\ii z)-\cn(2\ii z)-2k^2+2)}{\sn^4(2\ii z)}.
		\end{split}
\end{equation}
By function $\Omega(z)$ in \eqref{eq:Omega-0}, we know that the periods of $\Omega(z)$ and $\Omega'(z)$ are both $2\ii K$ and $\ii K+K'$. By the zeros and poles of the Jacobi elliptic functions in equation \eqref{eq:J 0in}, we know that only when $z=(2m_1+1)K'+ (2n_1+1)\ii K$ and $2n_2K'+(2m_2+1)\ii K$, $n_1,m_1,n_2,m_2\in \mathbb{Z}$ the derivative function reaches infinity, i.e. $\Omega'(z)=\infty$, and both of them are four order poles. Consider a period parallelogram starting from $\left(0,0\right)$ and taking $\left(0,0\right)$, $\left(0,-2\ii K\right)$, $\left(K',-\ii K\right)$, $\left(K',\ii K\right)$ as vertices. Based on the above studies of the poles of $\Omega'(z)$, we claim that there are four poles in this region including multiple numbers. 

And then, we consider the zeros of $\Omega'(z)$ in this period parallelogram. It is easy to know that only if the numerator of $\Omega'(z)$ in \eqref{eq:Omega-d-0} is zero, the derivative function $\Omega'(z)$ is zero, which reflects that if and only if $\cn(2\ii z)-1=0$ or $2k^2\cn^2(2\ii z)-\cn(2\ii z)-2k^2+2=0$, the equation $\Omega'(z)=0$ holds. Considering the equation $f(z):=2k^2\cn^2(2\ii z)-\cn(2\ii z)-2k^2+2$, we know that if $16k^4-16k^2+1>0$, the equation $f(z)=0$ has two real roots $z_1<z_2$ satisfying $1<\cn(2\ii z_1)<\cn(2\ii z_2)<\infty$. Because $k\in (0,1)$ and the value of function $\cn(2\ii z)=1/\cn(2z,k')$, $z\in \mathbb{R}$, we obtain that if $k\in(0,\frac{\sqrt{6}-\sqrt{2}}{4})$, the above two real roots satisfy $0<z_1<z_2<\frac{K'}{2}$. If $k=\frac{\sqrt{6}-\sqrt{2}}{4}$, on the real axis, there only exists a two-order root $z_1$ such that $f(z_1)=0$. For $\cn(2\ii z)-1=0$, we know that $z=0,-2\ii K,K'-\ii K,K'+\ii K$ are four two-order zero points of function $\Omega'(z)$. Since the above four points are the vertices of a period parallelogram and $z_i$, $i=1,2$ are in the period parallelogram, we claim that we obtain four zero points including multiples in period parallelogram, when $k\in (0,\frac{\sqrt{6}-\sqrt{2}}{4})$. Because in a periodic parallelogram, the elliptic function $\Omega'(z)$ has the same number of zeros and poles, including the multiple points, we could claim that all zero points are gained when $k\in \left( 0,\frac{\sqrt{6}}{4}-\frac{\sqrt{2}}{4} \right)$. If $k\in \left( \frac{\sqrt{6}-\sqrt{2}}{4},1 \right)$, $f(z)$ does not exist any real roots $z$ such that $f(z)=0$.
	
	Combining with Lemma \ref{lemma:S-boundary} and Remark \ref{remark:boundary}, we divide modules $k$ into the following three cases in region $S_1$.
	\begin{itemize}
		\item[(i)] When $k\in (0,\frac{\sqrt{6}}{4}-\frac{\sqrt{2}}{4})$, we consider three points $z=z_{R1}+\ii \frac{K}{2}$, $z=\frac{K'}{2}+\ii \frac{K}{2}$ and  $z=z_{R2}+\ii \frac{K}{2}$ in the upper region of $\partial S_1$. Along the tangent vector, we want to show that they would end at points $z=z_1$, $z=z_2$ and $z=K'$. If not, there must exist two curves intersecting with others at point $z_0\in S_1$ such that $\Omega'(z_0)=0$, which contradicts the fact that all zero points could be written as $z=z_1+m_1K'+2n_1\ii K$ or $z=z_2+m_2K'+2n_2\ii K$, $n_1,n_2,m_1,m_2\in \mathbb{Z}$. Thus, we obtain that except for line $\mathbb{R}\cap \partial S_1$, three curves exist in region $S_1$ satisfying $\Re(\Omega(z))=0$. They start at points $z=z_{R1}+\ii \frac{K}{2}$, $z=\frac{K'}{2}+\ii \frac{K}{2}$ and  $z=z_{R2}+\ii \frac{K}{2}$, along the tangent vector \eqref{eq:Omega-tangent} and end at points $z=z_1$, $z=z_2$ and $z=K'$. 
		\item[(ii)] When $k\in (\frac{\sqrt{6}}{4}-\frac{\sqrt{2}}{4},\frac{\sqrt{2}}{2})$, we consider three points $z=z_{R1}+\ii \frac{K}{2}$, $z=\frac{K'}{2}+\ii \frac{K}{2}$ and  $z=z_{R2}+\ii \frac{K}{2}$ in the upper region of $\partial S_1$. Firstly, we consider the point $z=z_{R1}+\ii \frac{K}{2}$. From Remark \ref{remark:boundary}, we know that this curve does not cross the line $\Re(z)=\frac{K'}{2}$. Furthermore, on the real line there do not exist any points such that $\Omega'(z)=0$. Thus, it must end up at the boundary point $z=\frac{K'}{2}+\ii \frac{K}{2}$, by Lemma \ref{lemma:S-boundary}. Similarly, the point $z=z_{R2}+\ii \frac{K}{2}$ must end up at point $z=K'$. Therefore, except for the boundary of $S_1$, there exist two curves in the region $S_1$. They start at the points $\left(z_{R2},\frac{K}{2}\right)$, $\left(\frac{K'}{2},\frac{K}{2}\right)$ and end at the points $\left(z_{R1},\frac{K}{2}\right)$ and $\left(K',0\right)$. Since there is not any point on the line $z=\frac{K'}{2}+\ii z_I, z_I\in \left(0,\frac{K}{2}\right)$ satisfying $\Re(\Omega(z))=0$, the above two curves do not intersect with each other. The difference from the above condition is that there do not exist  points on line $z\in \mathbb{R}$ such that $\Omega'(z)=0$.
		
		\item[(iii)] When $k\in (\frac{\sqrt{2}}{2},1)$, only one curve exists in region $S_1$. It will start at the point $z=\frac{K'}{2}+\ii \frac{K}{2}$ and end at point $z=K'$. Since on the upper boundary of $\partial S_1$, there only exists one point satisfying $\Re(\Omega(z))=0$ and on the line $z\in \mathbb{R}$, $\Omega'(z)\neq 0$.
	\end{itemize}
\end{proof}

Then, we consider the velocity $v(z)=\frac{\Re(\Omega(z))}{\Re(I(z))}$. In our previous studies \cite{LinglmS-21}, we have proved that in the region $S_1$, there is only one curve satisfying $\Re(I(z))=0$. When $k\in (0,0.9089)$ this curve intersects with the real axis and when $k\in (0.9089,1)$ this curve intersects with the imaginary axis. We also prove that the curves $\Re(I(z))=0$ and $\Re(\Omega)=0$ do not intersect with each other. Moreover, only when $z=\pm K'+l$, $\Re(I(z))=\infty$ and $\Re(\Omega(z))=\infty$. Together with the curves satisfying $\Re(\Omega(z))=0$, $\Re(I(z))=0$ and the points satisfying $\Omega(z)=\infty$ and $I(z)=\infty$, we could obtain that the region of velocity $v$ can be classified into four different cases of parameter $k$ in the different regions.
	\begin{itemize}
		\item When $k\in (0,\frac{\sqrt{6}-\sqrt{2}}{4})$, the value of $v=v(z)$ in region $S_1$ is divided into five areas. The above regions are separated by three curves $\Re(\Omega(z))=0$ satisfying the case (i) of Proposition \ref{prop:Omega=0} and one curve $\Re(I(z))=0$ intersecting with a real line (See in Figure \ref{fig:Omega-v-0}(i));
		\item When $k\in (\frac{\sqrt{6}-\sqrt{2}}{4}, \frac{\sqrt{2}}{2})$, the value of $v=v(z),z\in S_1$ is divided into four areas. The above regions are separated by two curves $\Re(\Omega(z))=0$ satisfying the case (ii) of Proposition \ref{prop:Omega=0} and one curve $\Re(I(z))=0$ intersecting with a real line (See in Figure \ref{fig:Omega-v-0}(ii));
		\item When $k\in (\frac{\sqrt{2}}{2},0.9089)$, the value of $v=v(z),z\in S_1$ is divided into three regions, which are separated by a curve $\Re(\Omega(z))=0$ satisfying the case (iii) of Proposition \ref{prop:Omega=0} and a curve $\Re(I(z))=0$ intersecting with a real line (See in Figure \ref{fig:Omega-v-0}(iii));
		\item When $k\in (0.9089, 1)$, the value of $v=v(z),z\in S_1$ is also divided into three regions. However, unlike the above, the division of regions is different. The above regions are separated by a curve $\Re(\Omega(z))=0$ satisfying the case (iii) of Proposition \ref{prop:Omega=0} and a curve $\Re(I(z))=0$ intersecting with imaginary axis (See in Figure \ref{fig:Omega-v-0}(iv)).
\end{itemize}
For the above four cases, we give corresponding values to depict them clearly in Figure \ref{fig:Omega-v-0}. We choose the modulus $k$ as $0.25,0.35,0.85,0.95$, which satisfy the above four cases, respectively. The first four figures (i), (ii), (iii), (iv) show the velocity $v$ in the $z$-plane and the rest four figures (a), (b), (c), (d) describe the velocity $v$ in the $\lambda$-plane with cuts in yellow. The green regions in them represent the condition $v>0$, and the white ones represent $v<0$. In Figure \ref{fig:Omega-v-0}, we could find that the above regions with different signs of velocity $v$ are separated by curves $\Re(\Omega(z))=0$ in red and curves $\Re(I(z))=0$ in blue. It is worth noting that we do not use the same method as the $z$-plane above to obtain the variation of $v$ in the $\lambda$-plane. We get it based on the conformal map between the $\lambda$-plane and the $z$-plane.

\begin{figure}[h]
	\centering
	\includegraphics[width=1\textwidth]{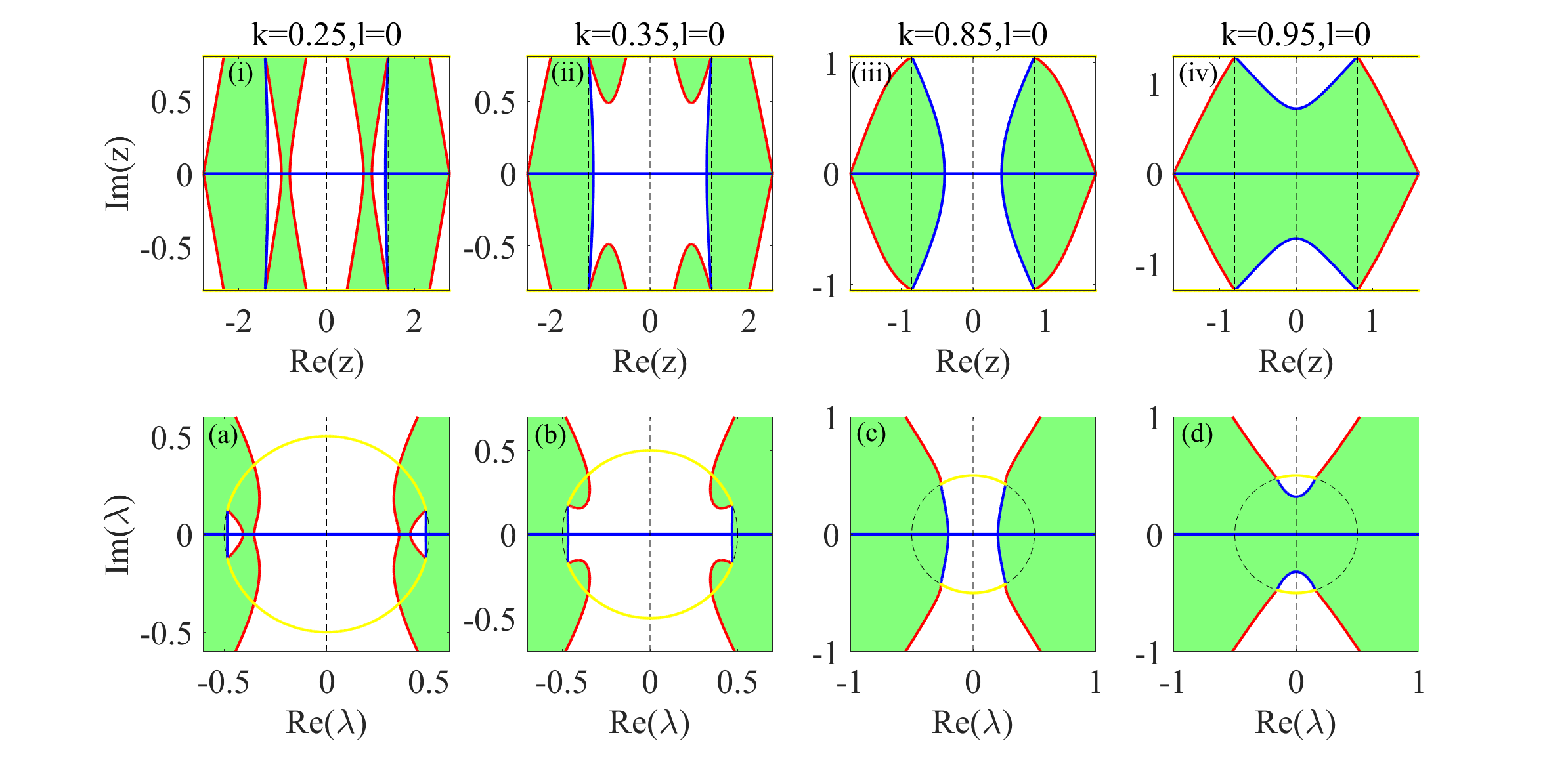}
	\caption{The first four images (i), (ii), (iii), (iv) draws $v>0$ in green and $v<0$ in white on the $z$-plane. The red curves describe the condition $\Re(\Omega(z))=0$, and the blue curves describe the condition $\Re(I(z))=0$. The rest four images (a), (b), (c), (d) draws $v>0$ in green and $v<0$ in white on the $\lambda$-plane. The red curves describe the condition $\Re(\Omega)=0$ and the blue curves describe the condition $\Re(I)=0$. Figures (a), (b), (c) and (d) are obtained from the $z$-plane through the conformal map $\lambda(z)$.}
	\label{fig:Omega-v-0}
\end{figure}

\begin{remark}	
	Consider the condition $l=\frac{K'}{2}$. The function $\Omega(z)$ could be written as 
	\begin{equation}\label{eq:Omega-l-K1}
		\begin{split}
			\Omega(z)=&-\alpha^3 \left( k^2 \dn(\ii(z-l))\sn(\ii(z-l))\cn(\ii(z-l))+k'^2\frac{\dn(\ii(z+l))\sn(\ii(z+l))}{\cn^3(\ii(z+l))}\right).
		\end{split}
	\end{equation}
	From work \cite{LinglmS-21}, we know that when $z\in \mathbb{R}$ and $z=z_R\pm \ii \frac{K}{2}$, the equation $\Re(\Omega(z))=0$ holds. As previously mentioned, through studying the derivative of function $\Omega(z)$, we obtain that there only exist points $z=K'+\ii \frac{K}{2}$ and $z=\ii \frac{K}{2}$ such that $\Omega'(z)=0$. Therefore, we get that excepting the above two lines, there are only two curves in the upper half plane of $S$,  which connect the points $z=K'+\ii \frac{K}{2}$, $z=\frac{3K'}{2}$ and the points $z=\ii \frac{K}{2}$, $z=-\frac{K'}{2}$.
\end{remark}
We also describe the case $l=\frac{K'}{2}$ with $k=\frac{9}{10}$ in Figure \ref{fig:Omega-v-K}. The Figure \ref{fig:Omega-v-K}(i) shows the velocity $v$ in the $z$-plane and Figure \ref{fig:Omega-v-K}(a) 
describes the velocity $v$ in the $\lambda$-plane with cuts in yellow. The green areas represent the condition $v>0$, and the white represents $v<0$. We could find that the above regions with the different signs of velocity $v$ are separated by curves satisfying $\Re(\Omega(z))=0$ in red. The result of Figure \ref{fig:Omega-v-K}(a) in the $\lambda$-plane depends on the conformal map between the $\lambda$-plane and the $z$-plane. The blue points and crosses in Figures \ref{fig:Omega-v-K}(i) and \ref{fig:Omega-v-K}(a) are the parameters $z_1$ and $\lambda_1$ we used in drawing Figure \ref{fig:dn-1b-3d}. Points 2-(i), 2-(ii), 2-(iii) represent the parameters we used in Figure \ref{fig:dn-1b-3d}(i), Figure \ref{fig:dn-1b-3d}(ii), and Figure \ref{fig:dn-1b-3d}(iii), respectively.

Looking back at the variation between $(\xi,\tau)$ and $(x,t)$, we know that the above velocity $v$ describes the variation between $\xi$ and $\tau$. Combining the rotation $\xi=x-st$, $\tau=t$, the line $L_i$ in equation \eqref{eq:li} could be converted into $x-(s+v(z_i))t+\hat{c_i}$. 
\begin{figure}[h]
	\centering
	\includegraphics[width=0.8\textwidth]{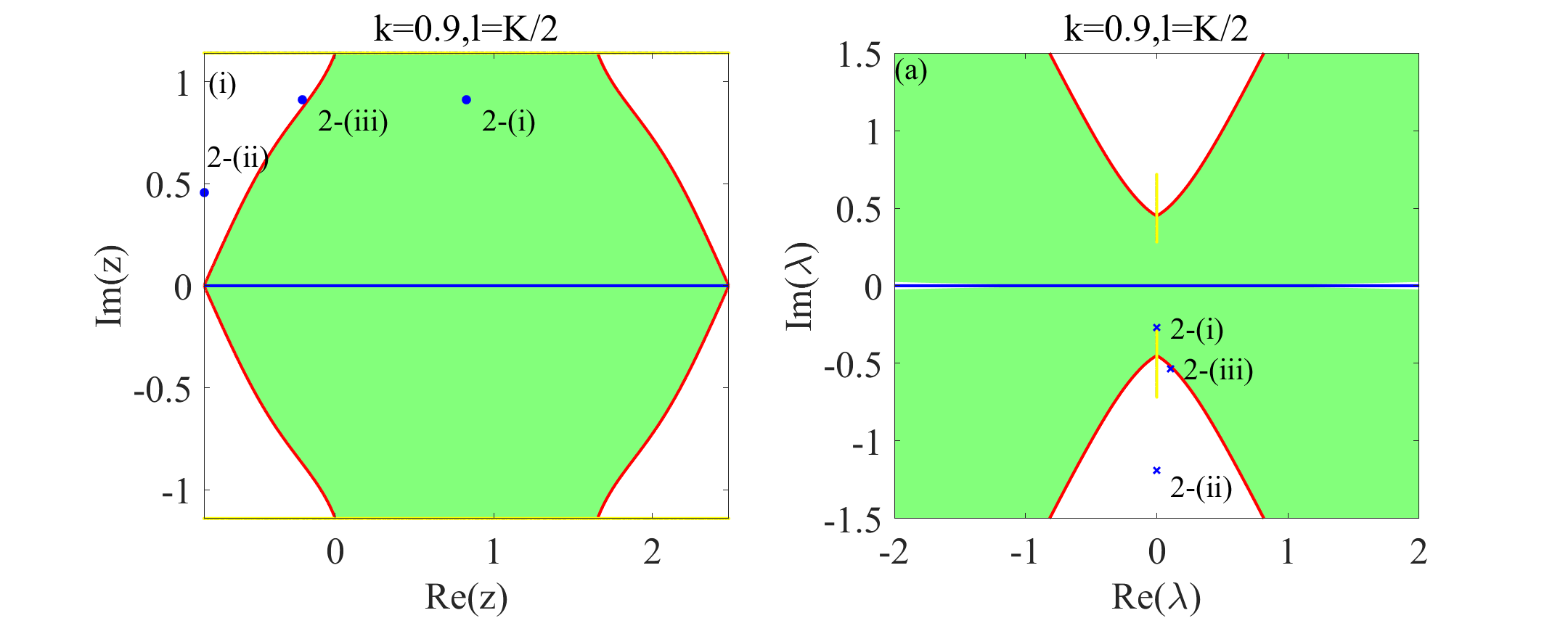}
	\caption{The curve $\Re(\Omega(z))$ draws the case of $l=\frac{K'}{2},k=0.90$. The white region represents $\Re(\Omega(z))<0$ and the green region represents $\Re(\Omega(z))>0$. Points 2-(i), 2-(ii), 2-(iii) represent the parameters we used in Figure \ref{fig:dn-1b-3d}(i), Figure \ref{fig:dn-1b-3d}(ii), and Figure \ref{fig:dn-1b-3d}(iii), respectively.}
	\label{fig:Omega-v-K}
\end{figure}

\subsection{Asymptotic analysis for the multi elliptic-localized solutions}

Assume that the multi elliptic-localized solution $u^{[N]}(x,t)$ has $N$ different velocities. However, the matrices $\mathcal{M}$ and $\mathcal{D}$ of solution $u^{[N]}(x,t)$ in \eqref{eq:mKdV-solution-xi-n-1} are $m\times m$, $N\le m\le 2N$, which implies that there must exist different uniform parameters $z_i\neq z_{j}$ with the same velocity $v(z_i)=v(z_{j})$. The main reason for the above conditions lies in the different kinds of the Darboux matrix $\mathbf{T}_i^{\mathrm{C}}(\lambda;x,t)$ and $\mathbf{T}_i^{\mathrm{P}}(\lambda;x,t)$ we choose. 

Then, we study the asymptotic expression of solution $\hat{u}^{[N]}(\xi,\tau)$ alone the line $L_k^{\pm}$, $k=1,2,\cdots,N$ shown in Figure \ref{fig:li}. Those $2N$ lines divided the $(\xi,\tau)$ plane into $2N$ different pieces of the region $R_i^{\pm}, i=1,2,\cdots,N$. The sketch map of the above conditions is shown in Figure \ref{fig:li}. 
We should notice that the direction depends only on the number $N$ and not the dimension $m$. 

\newenvironment{aproof-asy}{\emph{Proof of Theorem \ref{theorem:exact-N-solution}.}}{\hfill$\Box$\medskip}
\begin{aproof-asy}
By the definition of $\eta_i$, $i=1,2,\cdots ,m$, in equation \eqref{eq:E-2-hat}, it is easy to obtain
	\begin{equation}
		\begin{split}
			\exp\left(\eta_i\right)=\exp\left(\gamma_i+ I_i\xi-\Omega_i \tau\right)
			=\exp\left(\gamma_i+\ii\Im(I_i)\xi-\ii\Im(\Omega)\tau+\Re(I_i)(\xi-v_h\tau)+\Re(I_i)(v_h-v_i)\tau\right).
		\end{split}
\end{equation}
Without losing the generality, we can assume $v_i=v(z_i)\le v_{i+1}=v(z_{i+1}), i=1,2,\cdots,m-1$ and $I_i=I(z_i)>0, i=1,2,\cdots,m$. Along the line $L_q^{\pm}$ with $\lambda_q\in \ii \mathbb{R}$, we obtain that only one parameter $z_h$ such that $\eta_h=\text{const}$ in \eqref{eq:E-2-hat}. The relations between parameters $h$ and $q$ are defined in equation \eqref{eq:defie-q}. Then, we obtain that as $\tau\rightarrow +\infty$, 
the value $\eta_i\rightarrow -\infty$, $\ee^{\eta_i}=\mathcal{O}(\ee^{\Re(I_i)(v_h-v_i)\tau})$, $i=h+1,h+2,\cdots, m$, and $-\eta_i\rightarrow -\infty$, $\ee^{-\eta_i}=\mathcal{O}(\ee^{-\Re(I_i)(v_h-v_i)\tau})$, $i=1,2,\cdots, h-1$. Similarly, as $\tau\rightarrow -\infty$, the value $-\eta_i\rightarrow -\infty$, $\ee^{-\eta_i}=\mathcal{O}(\ee^{-\Re(I_i)(v_h-v_i)\tau})$, $i=h+1,h+2,\cdots, m$ and $\eta_i\rightarrow -\infty$, $\ee^{\eta_i}=\mathcal{O}(\ee^{\Re(I_i)(v_h-v_i)\tau})$, $i=1,2,\cdots, h-1$. Thus, combining equations \eqref{eq:X1-X2} and \eqref{eq:X1-X2-2}, we prove the first case in Theorem \ref{theorem:exact-N-solution}.

Considering the second case that there are two parameters $z_h$ and $z_{h+1}$ such that $\eta_h=\eta_{h+1}=\text{const}$, along the line $L_q^{\pm}$ with $\lambda_q\in \mathbb{C}\backslash(\ii \mathbb{R}\cup \mathbb{R})$, which occurs at the point $\mathbf{Z}_q=\left[z_q,2l-z_q^*\right]$. The difference from the previous case is that in addition to $\eta_h$, the parameter $ \eta_{h+1}=\text{const}$. And then, we get that as $\tau\rightarrow +\infty$, $\eta_i\rightarrow -\infty$, $\ee^{\eta_i}=\mathcal{O}(\ee^{\Re(I_i)(v_h-v_i)\tau})$, $i=h+2,\cdots, m$, $-\eta_i\rightarrow -\infty$, $\ee^{-\eta_i}=\mathcal{O}(\ee^{-\Re(I_i)(v_h-v_i)\tau})$, $i=1,2,\cdots, h-1$ and as $\tau\rightarrow -\infty$, $-\eta_i\rightarrow -\infty$, $\ee^{-\eta_i}=\mathcal{O}(\ee^{-\Re(I_i)(v_h-v_i)\tau})$, $i=h+2,\cdots, m$, $\eta_i\rightarrow -\infty$, $\ee^{\eta_i}=\mathcal{O}(\ee^{\Re(I_i)(v_h-v_i)\tau})$, $i=1,2,\cdots, h-1$. Plugging them into equation \eqref{eq:mKdV-solution-xi-n-1-vec}, we prove case (ii) in Theorem \ref{theorem:exact-N-solution}.
\end{aproof-asy}

Consider case (i) in Theorem \ref{theorem:exact-N-solution}. For $\tau\rightarrow +\infty$ along the line $L_q$ with $\lambda_q\in \ii \mathbb{R}$ and only one parameter satisfying $\eta_h=\text{const}$, equation \eqref{eq:mKdV-solution-xi-n-k-vec} could be rewritten as the following forms:
\begin{equation}\label{eq:solution-u-n-infy+}
	\begin{split}
		\hat{u}^{[N]}( \xi,\tau;L_q^+)\rightarrow&\frac{\alpha \vartheta_2\vartheta_4 }{\vartheta_3\vartheta_3(\frac{2\ii l}{2K})}\left(\frac{\vartheta_4(\frac{\alpha\xi}{2K})}{\vartheta_2(\frac{\alpha\xi+2\ii l}{2K})}\right)^{m-1}\mathbf{r}_{h,h}^{+}\\
		&
		\frac{\sum_{a,b=1}^{2}r_h^{2(a-1)*}r_h^{-2(b-1)}(\ii)^{a+b}\det( \mathcal{M}^{[a,b]}_{+})\ee^{(a-1)(\eta_h-\frac{2l}{2K}\pi)+(b-1)(\eta_h^*+\frac{2l}{2K}\pi)}}{\sum_{a,b=1}^{2} (-1)^{b}\det( \mathcal{D}^{[a,b]}_{+})\ee^{(a-1)\eta_h+(b-1)\eta_h^*}},
	\end{split}
\end{equation}
where $r_i,i=1,2,\cdots,m$ are defined in equation \eqref{eq:r_i} and
\begin{equation}\label{eq:define-r-i-j}
	\mathbf{r}^{+}_{i,j}=\prod_{k=1}^{i}\frac{r_k^*}{r_k}\prod_{k=j+1}^{m}\frac{r_k}{r_k^*}, \qquad  \mathbf{r}^{-}_{i,j}=\left(\mathbf{r}^{+}_{i,j}\right)^{-1},
\end{equation}
\begin{equation}
	\mathcal{D}^{[a,b]}_{+}=\left(\frac{\vartheta_4\left(\frac{\alpha\xi+\zeta^{[a,+]}_i+\mu^{[b,+]}_j}{2K}\right)}{\vartheta_1\left(\frac{\zeta^{[a,+]}_i+\mu^{[b,+]}_j}{2K}\right)}\right)_{1\le i,j\le m}, \qquad \mathcal{M}^{[a,b]}_{+}=\left(\frac{\vartheta_2\left(\frac{\alpha\xi+2\ii l+\zeta^{[a,+]}_i+\mu^{[b,+]}_j}{2K}\right)}{\vartheta_1\left(\frac{\zeta^{[a,+]}_i+\mu^{[b,+]}_j}{2K}\right)}\right)_{1\le i,j\le m}.
\end{equation}
Furthermore, we list the concrete sequence of $\zeta^{[a,+]}_j$ and $\mu^{[a,+]}_j$, $a=1,2$ :
\begin{equation}\nonumber
	\begin{split}
		\zeta^{[1,+]}
		=&\left(\zeta^{[1,+]}_1,\zeta^{[1,+]}_2,\cdots,\zeta^{[1,+]}_{h-1},\zeta^{[1,+]}_{h},\zeta^{[1,+]}_{h+1},\cdots,\zeta^{[1,+]}_m\right)\\	
		=&\left(-\ii z_1^*-(K+\ii K'), -\ii z_2^*-(K+\ii K'), \cdots, -\ii z_{h-1}^*-(K+\ii K'), \ii z_h^*, \ii z_{h+1}^*, \cdots, \ii z_m^*\right)\\
		\zeta^{[2,+]}=
		&\left(\zeta^{[2,+]}_1,\zeta^{[2,+]}_2,\cdots,\zeta^{[2,+]}_{h-1},\zeta^{[2,+]}_{h},\zeta^{[2,+]}_{h+1},\cdots,\zeta^{[2,+]}_m\right)\\
		=&\left(-\ii z_1^*-(K+\ii K'), -\ii z_2^*-(K+\ii K'), \cdots, -\ii z_{h-1}^*-(K+\ii K'), -\ii z_h^*-(K+\ii K'), \ii z_{h+1}^*, \cdots, \ii z_m^*\right)\\
		\mu^{[1,+]}
		=&\left(\mu^{[1,+]}_1,	\mu^{[1,+]}_2,\cdots,	\mu^{[1,+]}_{h-1},	\mu^{[1,+]}_{h},	\mu^{[1,+]}_{h+1},\cdots,	\mu^{[1,+]}_m\right)\\
		=&\left(\ii z_1+K+\ii K', \ii z_2+K+\ii K', \cdots, \ii z_{h-1}+K+\ii K', -\ii z_h, -\ii z_{h+1}, \cdots, -\ii z_m\right)\\
		\mu^{[2,+]}
		=&\left(\mu^{[2,+]}_1,\mu^{[2,+]}_2,\cdots,\mu^{[2,+]}_{h-1},\mu^{[2,+]}_{h},\mu^{[2,+]}_{h+1},\cdots,\mu^{[2,+]}_m\right)\\
		=&\left(\ii z_1+K+\ii K', \ii z_2+K+\ii K', \cdots, \ii z_{h-1}+K+\ii K', \ii z_h+K+\ii K', -\ii z_{h+1}, \cdots, -\ii z_m\right)\\
	\end{split}
\end{equation}

\begin{remark}\label{remark:mKdV-solution-asy-v-simp-i}
	Considering the case (i) of the asymptotic analysis of function $\hat{u}^{[N]}( \xi,\tau;L_q^{\pm})$ in Theorem \ref{theorem:exact-N-solution} and combining the formulas of Jacobi theta functions and the result in Appendix of \cite{Takahashi-16}, we could rewrite the solution $\hat{u}^{[N]}(\xi,\tau;L_q^{\pm})$ as $\tau\rightarrow +\infty$ in equation \eqref{eq:solution-u-n-infy+} as follows
	\begin{equation}\label{eq:solution-asy-v-simp-i}
		\hat{u}^{[N]}(\xi,\tau;L_q^{\pm}) =\frac{\alpha \vartheta_2\vartheta_4 }{\vartheta_3\vartheta_3(\frac{2\ii l}{2K})}\mathbf{r}_{h-1,h}^{\pm}
		\frac{\hat{\mathcal{M}}_h^{\pm \infty} }{\hat{\mathcal{D}}_h^{\pm \infty}}+\mathcal{O}\left(\exp\left(-\min_{j\neq h}
		\Re(I_j)|v_h-v_j||\tau|\right)\right), \qquad \tau
		\rightarrow \pm \infty,
	\end{equation}
where $\mathbf{r}_{h-1,h}^{\pm}$ are defined in equation \eqref{eq:define-r-i-j}; the relations between $q$ and $h$ are defined in equation \eqref{eq:defie-q}; and
	\begin{equation}
		\begin{split}
			\hat{\mathcal{M}}_h^{\pm \infty}=&\begin{bmatrix}
				1 & \Delta_h^{\pm *}\ee^{\eta_h^*+\frac{2l\pi}{2K}}
			\end{bmatrix}\mathbf{r}^*_h
			\begin{bmatrix}
				-\frac{\vartheta_2(\frac{\alpha \xi+2\ii l +s^{\pm}_{h,h}+\ii z_h^*-\ii z_h }{2K})}{\vartheta_1(\frac{\ii z_h^*-\ii z_h}{2K})} 
				& -\ii \frac{\vartheta_2(\frac{\alpha \xi+2\ii l +s^{\pm}_{h,h}+\ii z_h^*+\ii z_h+K+\ii K' }{2K})}{\vartheta_1(\frac{\ii z_h^*+\ii z_h+K+\ii K'}{2K})}
				\\ -\ii \frac{\vartheta_2(\frac{\alpha \xi+2\ii l +s^{\pm}_{h,h}-\ii z_h^*-\ii z_h-K-\ii K' }{2K})}{\vartheta_1(\frac{-\ii z_h^*-\ii z_h-K-\ii K'}{2K})}
				& \frac{\vartheta_2(\frac{\alpha \xi+2\ii l +s^{\pm}_{h,h}-\ii z_h^*+\ii z_h }{2K})}{\vartheta_1(\frac{-\ii z_k^*+\ii z_k}{2K})}
			\end{bmatrix}\mathbf{r}_h^{-1}
			\begin{bmatrix}
				1 \\ \ee^{\eta_h-\frac{2l\pi}{2K}}\Delta_h^{\pm }
			\end{bmatrix},\\
			\hat{\mathcal{D}}_h^{\pm \infty}=&\begin{bmatrix}
				1 & \Delta_h^{\pm *}\ee^{\eta_h^*}
			\end{bmatrix}
			\begin{bmatrix}
				-\frac{\vartheta_4(\frac{\alpha \xi+s^{\pm}_{h,h}+\ii z_h^*-\ii z_h }{2K})}{\vartheta_1(\frac{\ii z_h^*-\ii z_h}{2K})} 
				& \frac{\vartheta_4(\frac{\alpha \xi +s^{\pm}_{h,h}+\ii z_h^*+\ii z_h+K+\ii K' }{2K})}{\vartheta_1(\frac{\ii z_h^*+\ii z_h+K+\ii K'}{2K})}
				\\ -\frac{\vartheta_4(\frac{\alpha \xi +s^{\pm}_{h,h}-\ii z_h^*-\ii z_h-K-\ii K' }{2K})}{\vartheta_1(\frac{-\ii z_h^*-\ii z_h-K-\ii K'}{2K})}
				& \frac{\vartheta_4(\frac{\alpha \xi+s^{\pm}_{h,h}-\ii z_h^*+\ii z_h }{2K})}{\vartheta_1(\frac{-\ii z_h^*+\ii z_h}{2K})}
			\end{bmatrix}
			\begin{bmatrix}
				1 \\ \ee^{\eta_h}\Delta_h^{\pm}
			\end{bmatrix},
		\end{split}
	\end{equation}
with $\mathbf{r}_h$ and $s^{\pm}_{h,h}$ defined in equations \eqref{eq:r_i} and \eqref{eq:s-i-j} respectively.
	Furthermore,
	\begin{equation}\label{eq:r,s}
		\begin{split}
			\Delta_h^{+} =\prod_{j=1}^{h-1}\frac{\vartheta_1(\frac{\ii z_j-\ii z_h}{2K})\vartheta_1(\frac{\ii z_j^*+\ii z_h+K+\ii K'}{2K})}{\vartheta_1(\frac{\ii z_j^*-\ii z_h}{2K})\vartheta_1(\frac{\ii z_j+\ii z_h+K+\ii K'}{2K})}\prod_{j=h+1}^{n}\frac{\vartheta_1(\frac{\ii z_j^*-\ii z_h}{2K})\vartheta_1(\frac{\ii z_j+\ii z_h+K+\ii K'}{2K})}{\vartheta_1(\frac{\ii z_j-\ii z_h}{2K})\vartheta_1(\frac{\ii z_j^*+\ii z_h+K+\ii K'}{2K})}, \quad \Delta_h^{-}=(\Delta_h^{+})^{-1}.
		\end{split}
	\end{equation}
\end{remark}

The above remark actually gives the rigorous proof of the elastic interaction of multi elliptic-localized waves since the elliptic-breathers/solitons have the consistent expressions before and after the interaction. Similarly, we can reduce the formulas in case (ii) of Theorem \ref{theorem:exact-N-solution} to verify their elastic interaction. Here we do not repeat to give complicated formulas. 

\begin{lemma}\label{lemma:asy-R}
	The asymptotically analysis on the region  $R_{q}^{\pm},q=1,2,\cdots,n$ could also be classified into the following two cases: 
	\begin{itemize}
		\item[(i)] If along the line $L_{q-1}$ with $\lambda_{q-1}\in \ii \mathbb{R}$,
		 as $\tau\rightarrow \pm \infty$, 
		 the asymptotic expression on the region $R_{q}^{\pm}$ could be written as
		\begin{equation}\label{eq:solution-u-n-infy-R}
			\hat{u}^{[N]}( \xi,\tau;R_{q}^{\pm})\longrightarrow\frac{\alpha \vartheta_2\vartheta_4 }{A\vartheta_3\vartheta_3(\frac{2\ii l}{2K})}\mathbf{r}_{h,h}^{\pm}
			\frac{\vartheta_2(\frac{\alpha \xi+2\ii l +s_{h,h}^{\pm}}{2K}) }{\vartheta_4(\frac{\alpha \xi +s_{h,h}^{\pm}}{2K})}, \qquad \tau\rightarrow \pm \infty,
		\end{equation}  
		where the relations between parameters $q$ and $h$ are defined in \eqref{eq:defie-q} and $s_{h,h}^{\pm}$ and $\mathbf{r}_{h,h}^{\pm}$ are defined in equations \eqref{eq:s-i-j} and \eqref{eq:define-r-i-j} respectively. 
			\item[(ii)] If along the line $L_{q-1}$ with $\lambda_{q-1}\in \mathbb{C}\backslash(\ii \mathbb{R}\cup\mathbb{R})$, 
		as $\tau\rightarrow \pm \infty$, the asymptotic expression on the region $R_{q}^{\pm}$ could be written in \eqref{eq:solution-u-n-infy-R} by replacing parameters $s_{h,h}^{\pm}$ and $\mathbf{r}_{h,h}^{\pm}$ with  $s_{h,h+1}^{\pm}$ and $\mathbf{r}_{h+1,h+1}^{\pm}$.
	\end{itemize}
\end{lemma}

\begin{proof}
	
To consider the asymptotic analysis of function $\hat{u}^{[N]}(\xi,\tau)$ on the region $R_{k}^{\pm},k=1,2,\cdots,N$, as $\tau\rightarrow \pm \infty$. Similarly, we divide it into two conditions. One is that along the line $L_{k-1}$, if there is only one parameter $z_h$ such that $\eta_{h}=\text{const}$, as $\tau\rightarrow \pm \infty$.
Then, we consider that on the region $R_{k}^+$, the value $\eta_j$ satisfies $\eta_j\rightarrow -\infty,j=h+1,h+2,\cdots, m$ and $-\eta_j\rightarrow -\infty,j=1,2,\cdots, h $, as $\tau\rightarrow +\infty$.  Similarly, as $\tau\rightarrow -\infty$, the values $-\eta_j\rightarrow -\infty,j=h+1,h+2,\cdots, m$ and $\eta_j\rightarrow -\infty,j=1,2,\cdots, h$. Thus, combining with equations \eqref{eq:X1-X2} and \eqref{eq:X1-X2-2}, we prove case (i) in Lemma \ref{lemma:asy-R}. Similarly, we also could obtain case (ii) in Lemma \ref{lemma:asy-R}.
\end{proof}

\begin{lemma}\label{lemma:r-r}
	The function $r_i$ have the following properties:
	\begin{itemize}
		\item For $\ii(z_i-l)=\pm \ii K'-z_{iI}$, $z_{iI}=\Im(z_i)$ i.e., $\lambda(z_i)\in \ii \mathbb{R}$, $|\Im(\lambda_i)|>\frac{\alpha(1+k')}{2}$ when $l=\frac{K'}{2}$ or $|\Im(\lambda_i)|>\frac{\alpha}{2}$ when $l=0$, we get
		\begin{equation}\label{eq:r-r-1}
			\begin{split}
				\frac{r_i}{r_i^*}=&-1, \qquad l=0, \qquad \text{and} \qquad 
				\frac{r_i}{r_i^*}=-\ee^{\frac{2\ii\Im(z_i)\pi }{2K}}, \qquad l=\frac{K'}{2};
			\end{split}
		\end{equation}  
		\item For $\ii(z_i-l)=-z_{iI}$, $z_{iI}=\Im(z_i)$, i.e., $\lambda(z_i)\in\ii \mathbb{R}$, $|\Im(\lambda_i)|<\frac{\alpha(1-k')}{2}$ when $l=\frac{K'}{2}$ or $|\Im(\lambda_i)|<\frac{\alpha}{2}$ when $l=0$, we get
		\begin{equation}\label{eq:r-r-3}
			\begin{split}
				\frac{r_i}{r_i^*}=& 1, \qquad l=0, \qquad \text{and} \qquad 
				\frac{r_i}{r_i^*}=\ee^{\frac{2\ii \Im(z_i)\pi }{2K}}, \qquad l=\frac{K'}{2};
			\end{split}
		\end{equation}  
		\item For  $\ii(z_i-l)\neq\pm \ii K'-z_{iI}$, $z_{iI}=\Im(z_i)$ and $z_{i+1}=-z_i^*+2l$, i.e., $\lambda(z_i)\in \mathbb{C} \backslash (\ii \mathbb{R}\cup \mathbb{R})$, we get 
		\begin{equation}\label{eq:r-r-2}	
			\frac{r_ir_{i+1}}{r_i^*r_{i+1}^*}=1, \qquad l=0,\qquad \text{and} \qquad
			\frac{r_ir_{i+1}}{r_i^*r_{i+1}^*}=\ee^{\frac{4\ii \Im(z_i)\pi }{2K}}, \qquad l=\frac{K'}{2}.
		\end{equation}
	\end{itemize}
\end{lemma}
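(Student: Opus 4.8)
The plan is to reduce every identity in Lemma~\ref{lemma:r-r} to the parity and quasi-periodicity of the Jacobi theta functions. The starting observation is that, because $k\in(0,1)$, the nome $q=\ee^{-\pi K'/K}$ is real, so each $\vartheta_j$ has real series coefficients and obeys $\overline{\vartheta_j(w)}=\vartheta_j(\overline{w})$; combined with the evenness of $\vartheta_2$ and $\vartheta_4$ this gives, straight from \eqref{eq:r_i},
\[
	r_i^*=\overline{r_i}=\frac{\vartheta_2\!\left(\tfrac{\ii(z_i^*+l)}{2K}\right)}{\vartheta_4\!\left(\tfrac{\ii(z_i^*-l)}{2K}\right)}.
\]
Hence each quotient $r_i/r_i^*=r_i/\overline{r_i}$ is automatically unimodular, and the whole task is to pin down its phase. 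Writing $z_i=\Re(z_i)+\ii\,\Im(z_i)$ and abbreviating the four theta arguments
\[
	a=\tfrac{\ii(z_i+l)}{2K},\quad b=\tfrac{\ii(z_i-l)}{2K},\quad a^*=\tfrac{\ii(z_i^*+l)}{2K},\quad b^*=\tfrac{\ii(z_i^*-l)}{2K},
\]
one has $r_i/r_i^*=\vartheta_2(a)\vartheta_4(b^*)\big/\big(\vartheta_4(b)\vartheta_2(a^*)\big)$, together with $a^*=-a+\tfrac{\ii(\Re(z_i)+l)}{K}$ and $b^*=-b+\tfrac{\ii(\Re(z_i)-l)}{K}$. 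In each case the hypothesis fixes $\Re(z_i)$ so that these two imaginary offsets become integer multiples of the quasi-period $\tfrac{\ii K'}{K}$, to which the shift relations \eqref{eq:Jacobi Theta-K iK} apply.

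First I would treat the two bullets with $\lambda(z_i)\in\ii\mathbb{R}$; by \eqref{eq:lambda-elliptic} these force $\Re(z_i)=l$ (second bullet) or $\Re(z_i)=l\pm K'$ (first bullet). On the line $\Re(z_i)=l$ the offset in $b^*$ vanishes, so $b^*=-b$ and $\vartheta_4(b^*)/\vartheta_4(b)=1$ by evenness, while $a^*=-a+\tfrac{2\ii l}{K}$: for $l=0$ this is pure parity and yields $+1$, whereas for $l=\tfrac{K'}{2}$ the offset is one full quasi-period and the index-preserving part of \eqref{eq:Jacobi Theta-K iK}, evaluated at this argument, collapses the $q$-power and exponential to the single phase $\ee^{2\ii\Im(z_i)\pi/2K}$, which is \eqref{eq:r-r-3}. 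On the boundary lines $\Re(z_i)=l\pm K'$ one finds $a=b$ and $a^*=b^*$ when $l=0$, so that $r_i/r_i^*=\vartheta_2(a)\vartheta_4(a^*)\big/\big(\vartheta_4(a)\vartheta_2(a^*)\big)$ and the quasi-period formula—which carries the extra sign $-1$ on $\vartheta_4$ but none on $\vartheta_2$—returns $-1$; for $l=\tfrac{K'}{2}$ the same computation with offsets of two and one full periods in $a^*$ and $b^*$ produces $-\ee^{2\ii\Im(z_i)\pi/2K}$, giving \eqref{eq:r-r-1}.

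For the third bullet, where $\lambda(z_i)\notin\ii\mathbb{R}$, I would keep the paired product and exploit $\overline{r(w)}=r(w^*)$ once more. Since $z_{i+1}=-z_i^*+2l$, we have $r_{i+1}=r(-z_i^*+2l)$ and $r_{i+1}^*=r(-z_i+2l)$, so setting $\rho:=r(-z_i+2l)/r(z_i)$ gives
\[
	\frac{r_ir_{i+1}}{r_i^*r_{i+1}^*}
	=\frac{r(z_i)}{r(-z_i+2l)}\;\overline{\left(\frac{r(-z_i+2l)}{r(z_i)}\right)}
	=\frac{\overline{\rho}}{\rho}.
\]
By parity $\rho=\vartheta_2\!\left(\tfrac{\ii(z_i-3l)}{2K}\right)\big/\vartheta_2\!\left(\tfrac{\ii(z_i+l)}{2K}\right)$, whose two arguments differ by $-\tfrac{2\ii l}{K}$. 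When $l=0$ this difference vanishes, $\rho=1$, and the product is $1$; when $l=\tfrac{K'}{2}$ it is one full quasi-period, so \eqref{eq:Jacobi Theta-K iK} gives $\rho=q^{-1}\ee^{2\pi\ii w}$ with $w=\tfrac{\ii(z_i+l)}{2K}$, whence $\arg\rho=-\pi\Im(z_i)/K$ and $\overline{\rho}/\rho=\ee^{4\ii\Im(z_i)\pi/2K}$, which is \eqref{eq:r-r-2}.

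The mathematics is elementary; the real work, and the only place errors can enter, is the bookkeeping of the $q$-powers, exponential prefactors and signs coming from \eqref{eq:Jacobi Theta-K iK}. The delicate point is that a full imaginary quasi-period preserves the theta index but attaches the sign $-1$ to $\vartheta_4$ (exactly what turns a $+1$ into the $-1$ of the first bullet), so one must classify each offset as the correct multiple of the period and then substitute the specific value of the argument before the prefactors are allowed to cancel. Once that accounting is done carefully, each phase drops out immediately and no further obstacle is expected.
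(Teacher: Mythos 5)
Your proof is correct; I checked the phase bookkeeping in all three bullets (e.g.\ for the second bullet with $l=\tfrac{K'}{2}$ one gets $r_i/r_i^*=q\,\ee^{-2\pi\ii a}=\ee^{\ii\pi\Im(z_i)/K}$, and in the third bullet $\overline{\rho}/\rho=\ee^{4\ii\Im(z_i)\pi/2K}$), and each matches \eqref{eq:r-r-1}--\eqref{eq:r-r-2}. Your route differs from the paper's in organization rather than substance. The paper evaluates $r_i$ itself in closed form: it applies the half-period, index-changing shifts of \eqref{eq:Jacobi Theta-K iK} to rewrite each $r_i$ as a real theta quotient, such as $\pm\ii\,\vartheta_3(\tfrac{z_{iI}}{2K})/\vartheta_1(\tfrac{z_{iI}}{2K})$ for the first bullet with $l=0$, times an explicit exponential prefactor, and then reads off $r_i/r_i^*$; for the third bullet it proves the intermediate identities $r_{i+1}=r_i^*$ ($l=0$) and $r_{i+1}=r_i^*\,\ee^{-(z_i^*-l)\pi/K}$ ($l=\tfrac{K'}{2}$). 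You never compute $r_i$: you first observe $r_i^*=\overline{r_i}$ (real nome plus evenness of $\vartheta_2,\vartheta_4$), so every ratio is a priori unimodular, and you then extract the phase using only parity and the full-quasi-period relations $\vartheta_2(w+\tau)=q^{-1}\ee^{-2\pi\ii w}\vartheta_2(w)$, $\vartheta_4(w+\tau)=-q^{-1}\ee^{-2\pi\ii w}\vartheta_4(w)$, with the third bullet collapsing to $\overline{\rho}/\rho$ for a single $\vartheta_2$-quotient. Your version is leaner, since it avoids tracking the $M=q^{1/4}\ee^{\ii\pi z}$ prefactors of the index-changing shifts; the paper's closed forms for $r_i$ carry more information (modulus as well as phase), which the paper can reuse elsewhere. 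Two small points of precision, neither a gap: the full-period relations you invoke are not literally contained in \eqref{eq:Jacobi Theta-K iK}, which lists only $\tfrac{\tau}{2}$-shifts, but follow by composing two of them; and in the first bullet with $l=\tfrac{K'}{2}$ the offsets are $(2\tau,\tau)$ only for $\Re(z_i)=l+K'$, the case $\Re(z_i)=l-K'$ giving $(0,-\tau)$ --- both yield $-\ee^{2\ii\Im(z_i)\pi/2K}$ as you assert.
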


\begin{proof}	
	By the function $r_i$ defined in equation \eqref{eq:r_i} and the shift formulas of the Jacobi theta function \eqref{eq:Jacobi Theta-K iK}, we know that if $\ii(z_i-l)=\pm \ii K'-z_{iI}$,
	\begin{equation}\label{eq:r-r}
		\begin{split}
			r_i=&\frac{\vartheta_2(\frac{\pm \ii K'-z_{iI}}{2K})}{\vartheta_4(\frac{\pm \ii K'-z_{iI}}{2K})}=\pm \ii \frac{\vartheta_3(\frac{z_{iI} }{2K})}{\vartheta_1(\frac{z_{iI}}{2K})}, \qquad l=0,\\
			r_i=&\frac{\vartheta_2(\frac{\ii(z_i-l)+\ii K'}{2K})}{\vartheta_4(\frac{\ii(z_i-l)}{2K})}=\pm \ii \frac{\vartheta_2(\frac{z_{iI} }{2K})}{\vartheta_1(\frac{z_{iI}}{2K})}\exp\left( \pm\frac{3K'+2\ii z_{iI}}{4K}\pi \right), \qquad l=\frac{K'}{2},
		\end{split}
	\end{equation}
	which implies that $\frac{r_i}{r_i^*}=-1$ when $l=0$ and $\frac{r_i}{r_i^*}=-\ee^{\frac{2\ii\Im(z_i)\pi }{2K}}$ when $l=\frac{K'}{2}$. Thus, equation \eqref{eq:r-r-1} holds. 
	
	When $\ii(z_i-l)=-z_{iI}$, plugging them into equation \eqref{eq:r_i}, we obtain
	\begin{equation}
		\begin{split}
			r_i=&\frac{\vartheta_2(\frac{z_{iI}}{2K})}{\vartheta_4(\frac{z_{iI}}{2K})}, \quad l=0,\qquad \text{and} \qquad
			r_i=\frac{\vartheta_2(\frac{-z_{iI}+2\ii l}{2K})}{\vartheta_4(-\frac{z_{iI}}{2K})}=\frac{\vartheta_3(\frac{z_{iI}}{2K})}{\vartheta_4(\frac{z_{iI}}{2K})}\exp\left(\frac{2\ii z_{iI}+K'}{4K}\pi\right), \quad l=\frac{K'}{2}.
		\end{split}	
	\end{equation}
	Therefore, we get that  $\frac{r_i}{r_i^*}=1$ when $l=0$ and $\frac{r_i}{r_i^*}=\ee^{\frac{2\ii\Im(z_i)\pi }{2K}}$ when $l=\frac{K'}{2}$. Thus, equation \eqref{eq:r-r-3} holds. 
	
	Then, when $z_{i+1}=-z_i^*+2l$, we consider 
	\begin{equation}
		\begin{split}
			r_{i+1}=&\frac{\vartheta_2(\frac{\ii(-z_i^*+2l+l)}{2K})}{\vartheta_4(\frac{\ii(-z_i^*+2l-l)}{2K})}=\frac{\vartheta_2(\frac{-\ii(z_i^*+l)}{2K})}{\vartheta_4(\frac{-\ii(z_i^*-l)}{2K})}=r_i^*, \qquad l=0,\\
			r_{i+1}=&\frac{\vartheta_2(\frac{-\ii(z_i^*+l)+4\ii l}{2K})}{\vartheta_4(\frac{-\ii(z_i^*-l)}{2K})}=\frac{\vartheta_2(\frac{-\ii(z_i^*+l)}{2K})}{\vartheta_4(\frac{-\ii(z_i^*-l)}{2K})}\exp\left(\frac{-(z_i^*-l)}{K}\pi\right)=r_i^*\ee^{\frac{-(z_i^*-l)}{K}\pi}, \qquad l=\frac{K'}{2},
		\end{split}
	\end{equation}
	which implies that equation \eqref{eq:r-r-2} holds.
\end{proof}

\newenvironment{aproof-R}{\emph{Proof of Theorem \ref{theorem:asy-R-elliptic}.}}{\hfill$\Box$\medskip}
\begin{aproof-R}
	We set that the number of $z_i,i=1,2,\cdots,m,$ satisfying $\ii(z_i-l)\in \mathbb{R}$ is $n_1-p$. Combining with Lemma \ref{lemma:asy-R} and Lemma \ref{lemma:r-r}, we obtain that when $l=0$, equation \eqref{eq:solution-u-n-infy-R} could be rewritten as 
\begin{equation}
	\begin{split}
		\hat{u}^{[N]}(\xi,\tau ;R_{q}^{\pm})\longrightarrow&\frac{\alpha \vartheta_2\vartheta_4 }{\vartheta_3\vartheta_3}\mathbf{r}_{h,h}^{\pm}
		\frac{\vartheta_2(\frac{\alpha \xi +s_{h,h}^{\pm}}{2K}) }{\vartheta_4(\frac{\alpha \xi +s_{h,h}^{\pm}}{2K})}
		=(-1)^{p}\alpha k \cn(\alpha \xi +s_{h,h}^{\pm}),
	\end{split}
\end{equation}
where $s_{h,h}^{\pm}$ and $\mathbf{r}_{h,h}^{\pm}$ are defined in equations \eqref{eq:s-i-j} and \eqref{eq:define-r-i-j} respectively, and  $p$ is the number of spectral parameter $\lambda_i\in \ii \mathbb{R}$, $i=1,2,\cdots,N$, satisfying $|\Im(\lambda_i)|>\frac{\alpha(1+k')}{2}$ when $l=\frac{K'}{2}$ or $|\Im(\lambda_i)|>\frac{\alpha}{2}$ when $l=0$. If $l=\frac{K'}{2}$, by formulas \eqref{eq:Jacobi Theta-K iK}, 
equation \eqref{eq:solution-u-n-infy-R} could be rewritten as 
\begin{equation}
	\begin{split}
		\hat{u}^{[N]}(\xi,\tau;R_{k}^{\pm}) \longrightarrow&\frac{\alpha \vartheta_2\vartheta_4 }{A\vartheta_3\vartheta_3(\frac{\ii K'}{2K})}\mathbf{r}_{h,h}^{\pm}
		\frac{\vartheta_2(\frac{\alpha \xi+\ii K' +s_{h,h}^{\pm}}{2K}) }{\vartheta_4(\frac{\alpha \xi +s_{h,h}^{\pm}}{2K})}
		=(-1)^{p}\frac{\alpha \vartheta_2\vartheta_4 }{\vartheta_3\vartheta_2}
		\frac{\vartheta_3(\frac{\alpha \xi +s_{h,h}^{\pm}}{2K}) }{\vartheta_4(\frac{\alpha \xi +s_{h,h}^{\pm}}{2K})}
		=(-1)^{p}\alpha \dn(\alpha \xi +s_{h,h}^{\pm}),
	\end{split}
\end{equation}
where $s_{h,h}^{\pm}$ and $\mathbf{r}_{h,h}^{\pm}$ are defined in equations \eqref{eq:s-i-j} and \eqref{eq:define-r-i-j} respectively, and  $p$ is the number of spectral parameter $\lambda_i\in \ii \mathbb{R}$, $i=1,2,\cdots,N$, satisfying $|\Im(\lambda_i)|>\frac{\alpha(1+k')}{2}$ when $l=\frac{K'}{2}$ or $|\Im(\lambda_i)|>\frac{\alpha}{2}$ when $l=0$.
Then, by the Lemma \ref{lemma:asy-R}, we obtain the Theorem \ref{theorem:asy-R-elliptic}.
\end{aproof-R}

We now perform the exact asymptotic analysis based on the above-mentioned elliptic-localized solution as an example to vividly show the asymptotic behavior. For the Type-I, we consider the two-elliptic-soliton solution $u^{[2]}(x,t)$ in the Case I-2. Setting $l=0$, $k=\frac{1}{2}$, $\alpha=1$, $c_1=c_2=1$, $z_1=K'+\frac{2K}{9}\ii$ and $z_2=K'+\frac{K}{3}\ii$, we obtain the 3d-figure of function $u^{[2]}( x,t)$ under the $(x,t)$-axis in Figure \ref{fig:cn-2b-3d}(a). Then, we consider the corresponding asymptotic analysis under the $(\xi,\tau)$-axis. 
Plugging the above parameters into equations \eqref{eq:mKdV-solution-xi-n-k-vec} and \eqref{eq:X1-X2-hat} or substituting them into equation \eqref{eq:solution-asy-v-simp-i}, we get the asymptotic expressions and draw their graphs in Figure \ref{fig:cn-asy-l}. The blue curves in the Figure \ref{fig:cn-asy-l-1} and Figure \ref{fig:cn-asy-l-2} describe the function $\hat{u}^{[2]}(\xi,\pm 6)$, respectively. The red curve describes $\hat{u}^{[2]}(\xi,6;L_1^+)$ and $\hat{u}^{[2]}(\xi,6;L_2^+)$ in Figure \ref{fig:cn-asy-l-1}. The purple curve describes $\hat{u}^{[2]}(\xi,-6;L_1^-)$ and $\hat{u}^{[2]}(\xi,-6;L_2^-)$ in Figure \ref{fig:cn-asy-l-2}.

\begin{figure}[h]
	\centering
	\subfigure[The asymptotic analysis for the breathers at $t=6$.]{\includegraphics[width=0.43\textwidth]{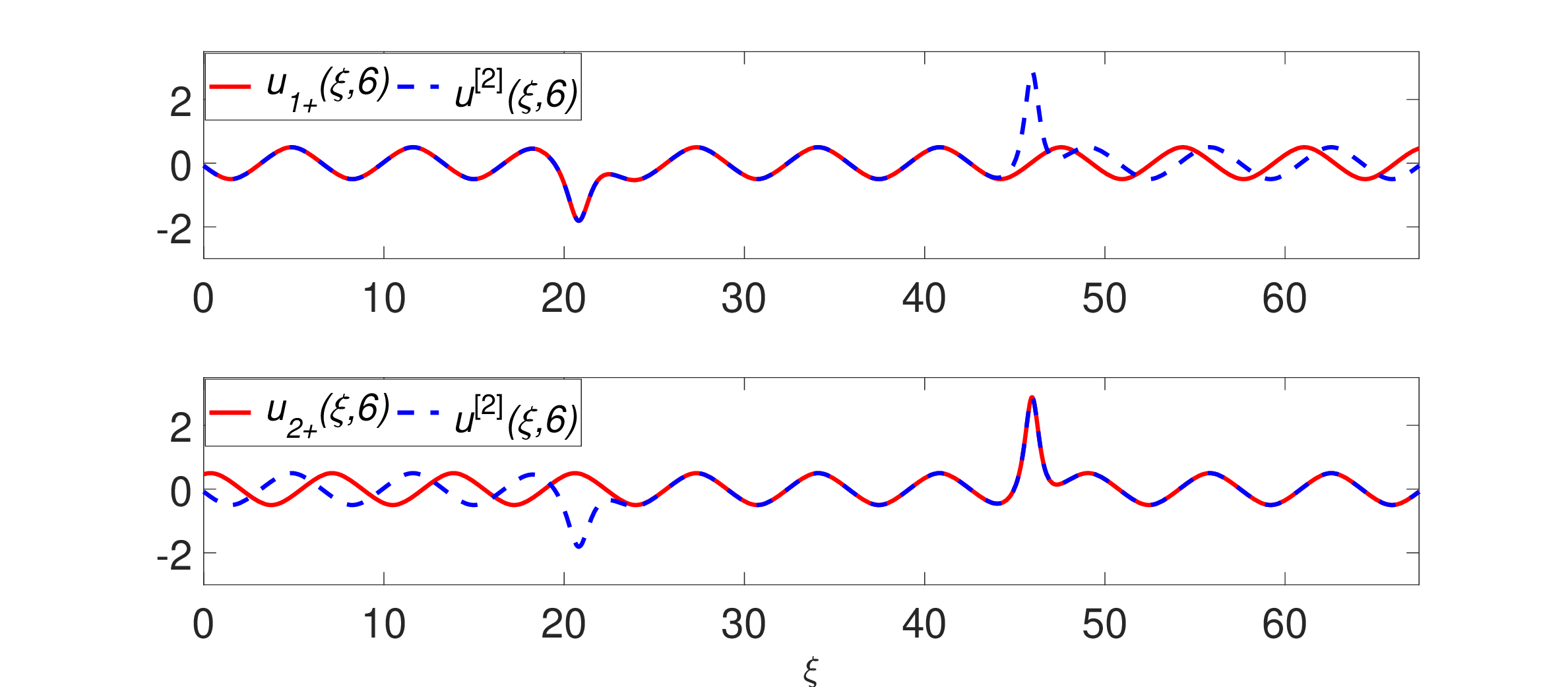}\label{fig:cn-asy-l-1}}
	\subfigure[The asymptotic analysis for the breathers at $t=-6$. ]{\includegraphics[width=0.49\textwidth]{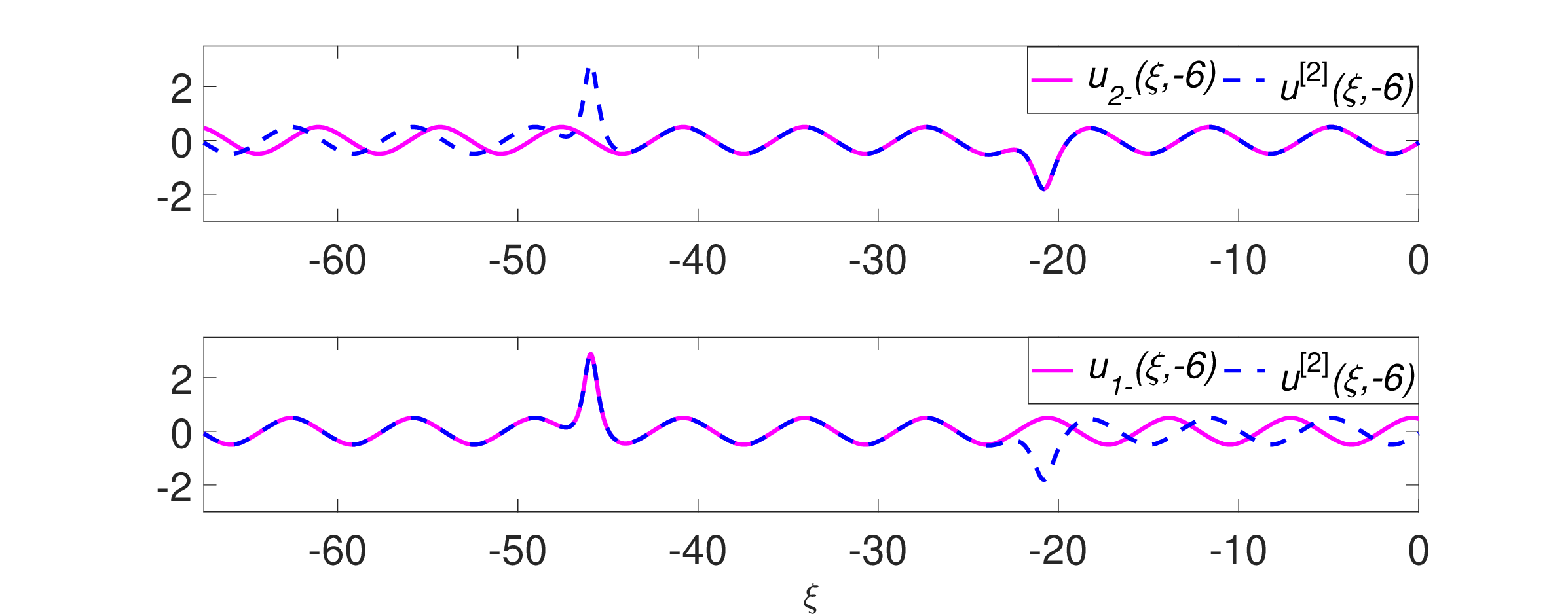}\label{fig:cn-asy-l-2}}
	\caption{The asymptotic analysis for  breathers of solution $\hat{u}^{[2]}(\xi,\tau)$ at $t=\pm 6$. The blue curves describe the solution $\hat{u}^{[2]}(\xi,\pm 6)$. The red and purple curves show the asymptotic expression in equations \eqref{eq:mKdV-solution-xi-n-k-vec} and \eqref{eq:X1-X2-hat} with $l=0$, $k=\frac{1}{2}$, $\alpha=1$, $c_1=c_2=0$, $z_1=K'+\frac{2K}{9}\ii$ and $z_2=K'+\frac{K}{3}\ii$.
	}
	\label{fig:cn-asy-l}
\end{figure}

Then, we consider the asymptotic expressions of solutions on the regions $R_1^+$, $R_2^+$, $R_1^-$. Combining with equation \eqref{eq:solution-u-n-infy-R-cn}, we obtain the following three asymptotic solutions as $\tau\rightarrow+\infty$:
\begin{equation}\label{eq:asy-cn-2-2-R}
	\begin{split}
		\hat{u}^{[2]}( \xi,\tau;R_1^+)
		\rightarrow&
		-k\alpha\cn(\alpha \xi -2\Im(z_1)-2\Im(z_2))=-k\alpha\cn\left(\alpha \xi -\frac{10K}{9}\right),\\
		\hat{u}^{[2]}(\xi,\tau;R_2^+)
		\rightarrow&
		-k\alpha\cn(\alpha \xi +2\Im(z_1)-2\Im(z_2))=-k\alpha\cn\left(\alpha \xi -\frac{2K}{9}\right),\\
		\hat{u}^{[2]}(\xi,\tau;R_1^-)
		\rightarrow&
		-k\alpha\cn(\alpha \xi +2\Im(z_1)+2\Im(z_2))=-k\alpha\cn\left(\alpha \xi+\frac{10K}{9}\right).
	\end{split}
\end{equation}
Similarly, as $\tau\rightarrow-\infty$, we get 
\begin{equation}\label{eq:asy-cn-2-2-R-n}
	\begin{split}
		\hat{u}^{[2]}( \xi,\tau;R_1^-)
		\rightarrow&
		-k\alpha\cn(\alpha \xi +2\Im(z_1)+2\Im(z_2))=-k\alpha\cn\left(\alpha \xi +\frac{10K}{9}\right),\\
		\hat{u}^{[2]}(\xi,\tau;R_2^-)
		\rightarrow&
		-k\alpha\cn(\alpha \xi -2\Im(z_1)+2\Im(z_2))=-k\alpha\cn\left(\alpha \xi +\frac{2K}{9}\right),\\
		\hat{u}^{[2]}(\xi,\tau;R_1^+)
		\rightarrow&
		-k\alpha\cn(\alpha \xi -2\Im(z_1)-2\Im(z_2))=-k\alpha\cn\left(\alpha \xi -\frac{10K}{9}\right).
	\end{split}
\end{equation}
Above six functions in equations \eqref{eq:asy-cn-2-2-R} and \eqref{eq:asy-cn-2-2-R-n} are plotted in Figures \ref{fig:cn-asy-R}(a) and  \ref{fig:cn-asy-R}(b), respectively.

\begin{figure}[h]
	\centering
	\subfigure[The asymptotic analysis for the breathers at $\tau=6$.]{\includegraphics[width=0.495\textwidth]{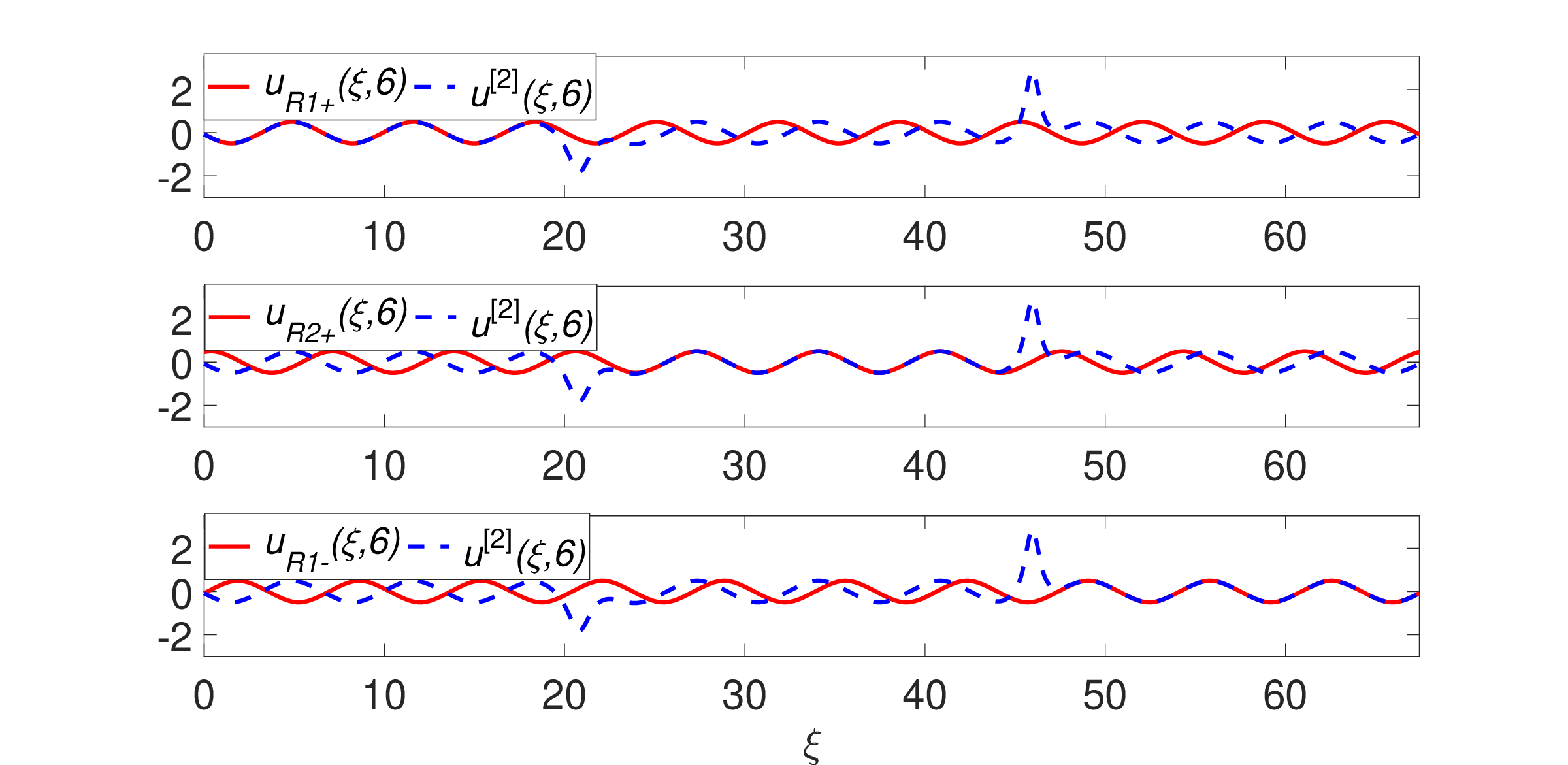}\label{fig:cn-asy-R-1}}
	\subfigure[The asymptotic analysis for the breathers at $\tau=-6$.]{\includegraphics[width=0.495\textwidth]{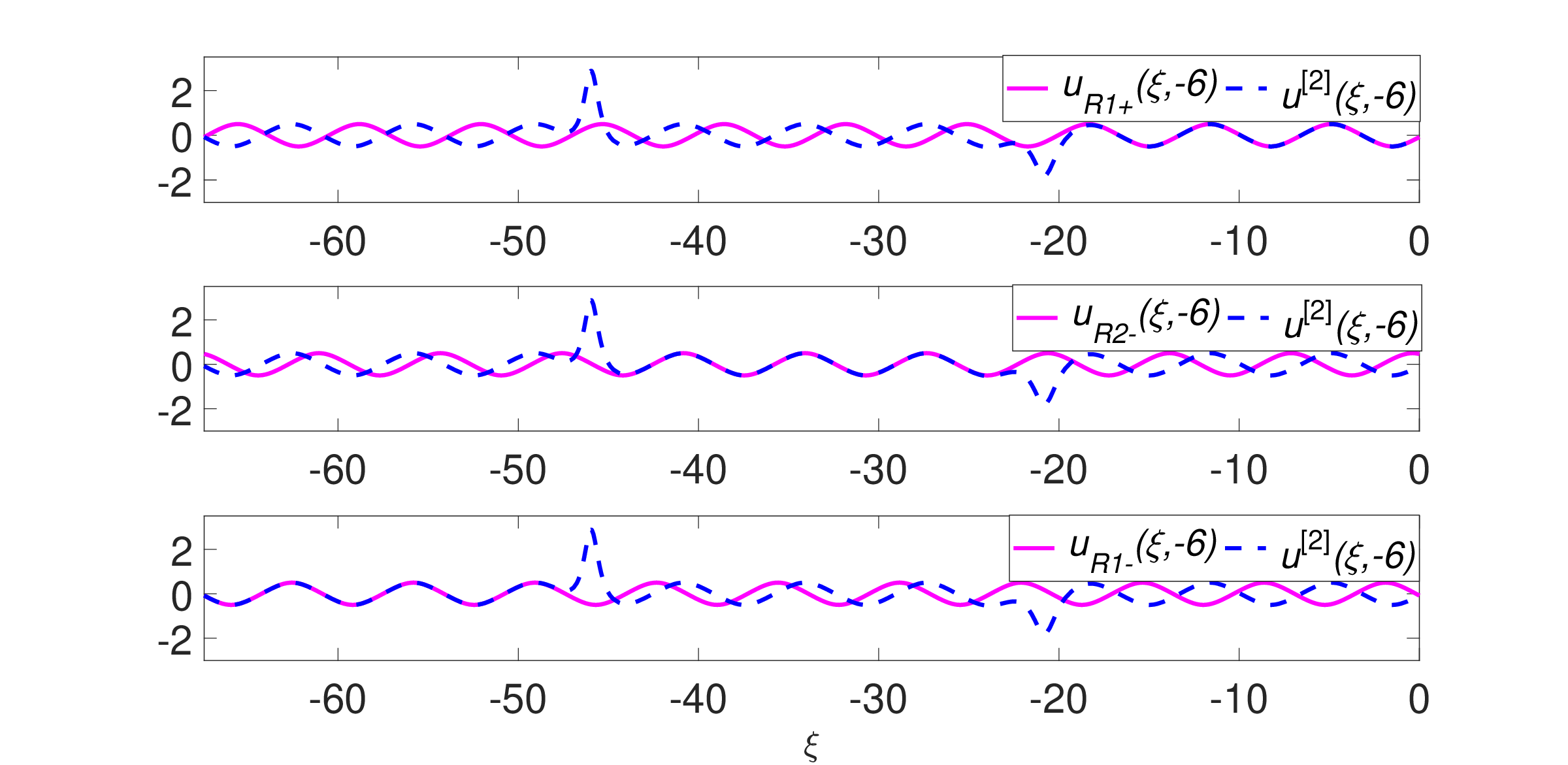}\label{fig:cn-asy-R-2}}
	\caption{The asymptotic analysis for breathers of solution $\hat{u}^{[2]}(\xi,\tau)$ at $\tau=\pm 6$. The blue curves describe the solution $\hat{u}^{[2]}(\xi,\pm 6)$. The red and purple curves show the asymptotic expression on the region $R_{1,2}^{\pm}$ in equations \eqref{eq:asy-cn-2-2-R} and \eqref{eq:asy-cn-2-2-R-n} with $l=0$, $k=\frac{1}{2}$, $\alpha=1$, $c_1=c_2=0$, $z_1=K'+\frac{2K}{9}\ii$ and $z_2=K'+\frac{K}{3}\ii$.
	}
	\label{fig:cn-asy-R}
\end{figure}

\subsection{The symmetry and strictly elastic collision of multi elliptic-localized solutions}\label{sec:symmetric}

In this subsection, we aim to introduce the symmetric property of multi elliptic-localized solutions. From the asymptotic expressions of solutions provided in Theorem \ref{theorem:exact-N-solution}, we know that their interactions are elastic. But if multi elliptic-localized solutions have the symmetry $u^{[N]}(x,t)=u^{[N]}(-x,-t)$, we can claim that the collisions between the breathers and solitons are strictly elastic.

Based on the expression of solution $\Phi(x,t;\lambda)$ \eqref{eq:Lax-solution-Phi} of the Lax pair \eqref{eq:Lax-pair}, the solution $u^{[N]}( x,t)$ has the symmetry $u^{[N]}( x,t)=u^{[N]}( -x,-t)$, if it satisfies the conditions $c_i=1,i=1,2,\cdots,m$ given in Theorem \ref{theorem:symm}.

\newenvironment{aproof-symm}{\emph{Proof of Theorem \ref{theorem:symm}.}}{\hfill$\Box$\medskip}
\begin{aproof-symm}
	By the formula \eqref{eq:Lax-solution-Phi} and $E_2^{-1}( x,t)=\exp(\alpha \xi Z(2\ii l+K))E_1(x,t)$ (obtained in \cite{LinglmS-21}), when $c_i=1$,	$i=1,2,\cdots,N$, it is easy to verify that 
		\begin{equation}\label{eq:Phi-sym}
		\begin{bmatrix}
		\Phi_{i,1}(- x,-t) \\
		\Phi_{i,2}(- x,-t)
		\end{bmatrix}\equiv\Phi(- x,-t;\lambda_i)	\begin{bmatrix}
		1  \\ 1
		\end{bmatrix}=
		\begin{bmatrix}
		0  & -1  \\ -1 & 0
		\end{bmatrix}\Phi( x,t;\lambda_i)
		\begin{bmatrix}
		0  & 1  \\ 1 & 0
		\end{bmatrix}	\begin{bmatrix}
		1  \\ 1
		\end{bmatrix}
		=\begin{bmatrix}
		-\Phi_{i,2}( x,t) \\
		-\Phi_{i,1}( x,t)
		\end{bmatrix},
		\end{equation}
		which implies $(\Phi_i^{\dagger}(- x,-t)	\Phi_j(- x,-t))^{\dagger}=\Phi_j^{\dagger}(- x,-t)\Phi_i(- x,-t)=\Phi_j^{\dagger}( x,t)\Phi_i( x,t)$.
		Combining the formula $u^{[N]}( x,t)$ in equation \eqref{eq:u-N-breather}, we get that the symmetry $u^{[N]}( -x,-t)=u^{[N]}( x,t)$ holds.
\end{aproof-symm}

From the Theorem \ref{theorem:symm}, we know that the dynamic behavior is consistent at times $t$ and $-t$, because of $u^{[N]}(x,t)=u^{[N]}(-x,-t)$. Thus, the collision dynamics between the breathers are strictly elastic, which means that the shape of breathers does not change after the collision. Two typical examples of the strictly elastic collisions are shown in Figure \ref{fig:cn-elas} and Figure \ref{fig:dn-elas}. Then, we will describe each of the above situations.

Plugging $k=\frac{1}{2}$, $l=0$, $z_1=K'+\ii \frac{K}{3}$, $z_2=K'+\ii \frac{2K}{9}$, $c_1=c_2=1$ into equation \eqref{eq:mKdV-solution-xi-n-1}, a two-elliptic-solitons solution $u^{[2]}(x,t)$ is obtained under the $\cn$-type background. The Figure \ref{fig:cn-elas-den} shows the density evolution of solution $u^{[2]}(x,t)$ and the Figure \ref{fig:cn-2b-3d}(a) is the 3d-plot of this solution. Two solitons of $u^{[2]}(x,t)$ in Figure \ref{fig:cn-elas-den} collide at the moment $t=0$. By the symmetry of solution $u^{[2]}(x,t)$ proved in Theorem \ref{theorem:symm}, we know that the behaviors of the above two solitons in solution $u^{[2]}(x,t)$ are the same at the corresponding moments before and after the collision. Here, $t = 5$, for example, we plot the sectional view of functions $u^{[2]}(x,5)$ and $u^{[2]}(-x,-5)$ shown in Figure \ref{fig:cn-elas-2d} to reflect the variation before the collision ($t=-5$) and after the collision ($t=5$). The above variation clearly depicts the same solitons in the above two positions, which provide a vivid image of the strictly elastic collision.

\begin{figure}[h]
	\centering
	\subfigure[The dentisy plot of function $u^{[2]}(x,t)$.]{\includegraphics[width=0.4\textwidth]{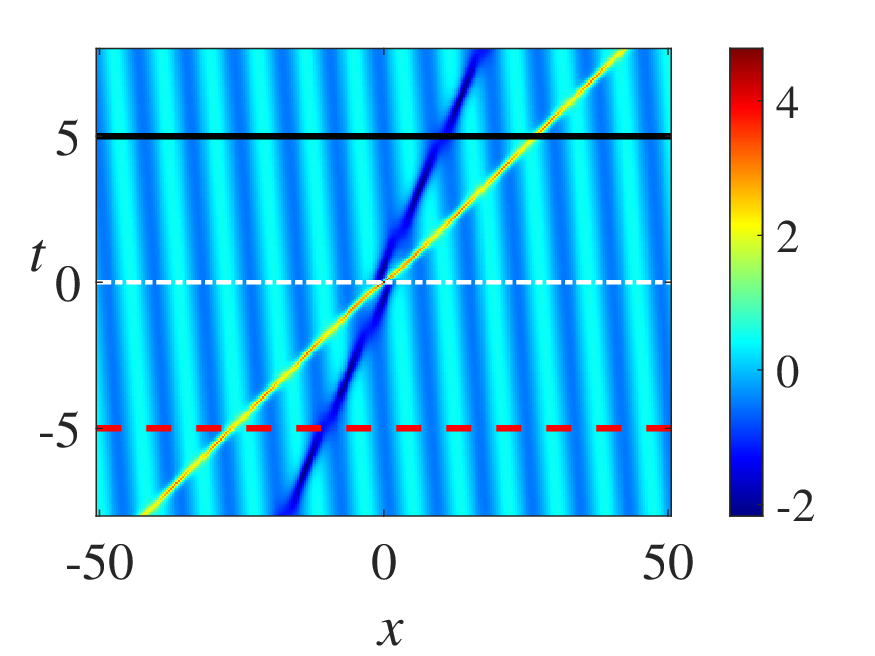}\label{fig:cn-elas-den}}
	\subfigure[The sectional view of functions $u^{[2]}(x,5)$ and $u^{[2]}(-x,-5)$.  ]{\raisebox{0.85\height}{\includegraphics[width=0.58\textwidth]{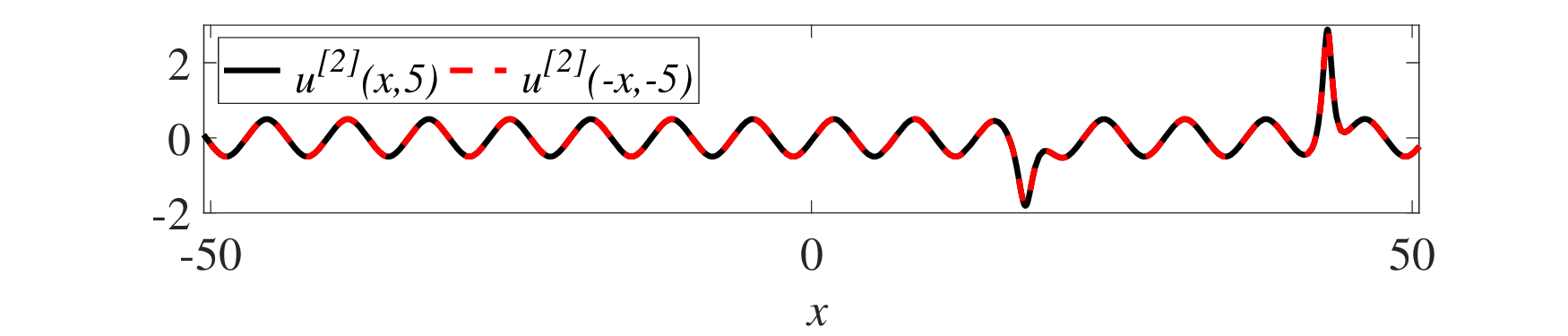}\label{fig:cn-elas-2d}}}
	\caption{The two-elliptic-solitons solution $u^{[2]}(x,t)$ under the $\cn$-type background describes the strictly elastic collision between two bound-state solitons.}
	\label{fig:cn-elas}
\end{figure}

Similarly, the Figure \ref{fig:dn-elas} also reflects a strictly elastic collision, which is obtained under the $\dn$-type background with parameters $k=\frac{9}{10}$, $l=\frac{K'}{2}$, $z_1=-\frac{K'}{8}+\ii \frac{2K}{5}$, $z_2=-\frac{K'}{2}+\ii \frac{2K}{5}$, $z_3=\frac{9K'}{8}+\ii \frac{2K}{5}$ and $c_1=c_2=1$. The Figure \ref{fig:dn-elas-den} is a density evolution of the elliptic-soliton-breather solution $u^{[2]}(x,t)$ showing the collision between a breather and a soliton at the time $t=0$. And the 3d-plot of solution $u^{[2]}(x,t)$ is shown in Figure \ref{fig:dn-2b-3d}(b). In Figure \ref{fig:dn-elas-2d}, we draw the sectional view of functions $u^{[2]}(x,6)$ and $u^{[2]}(-x,-6)$, which reflects the variation before ($t=-6$) and after ($t=6$) the collision and shows a complete consistency between them.

\begin{figure}[h]
	\centering
	\subfigure[The dentisy plot of function $u^{[2]}(x,t)$.]{\includegraphics[width=0.46\textwidth]{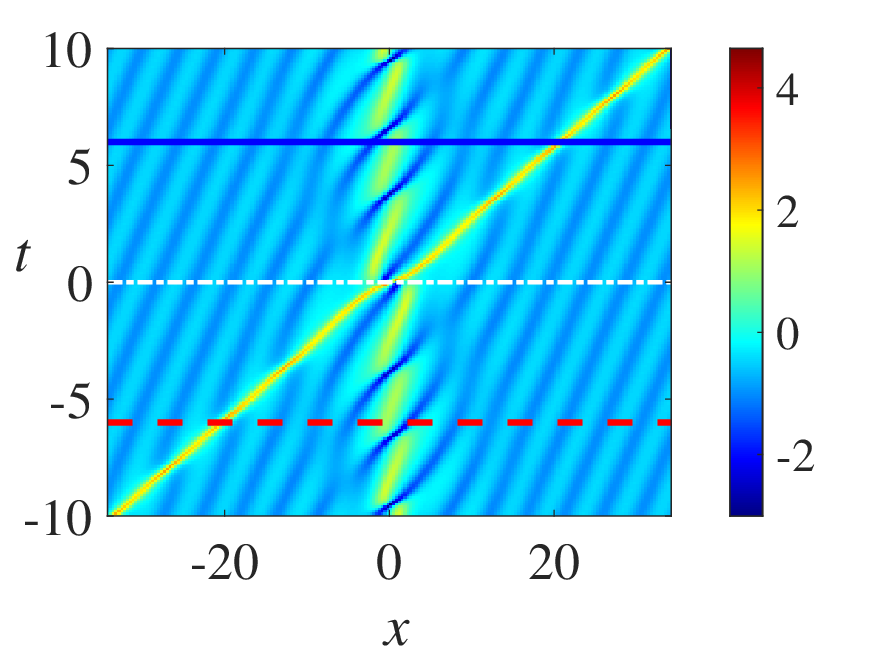}\label{fig:dn-elas-den}}
	\subfigure[The sectional view of functions $u^{[2]}(x,6)$ and $u^{[2]}(-x,-6)$. ]{\raisebox{0.9\height}{\includegraphics[width=0.51\textwidth]{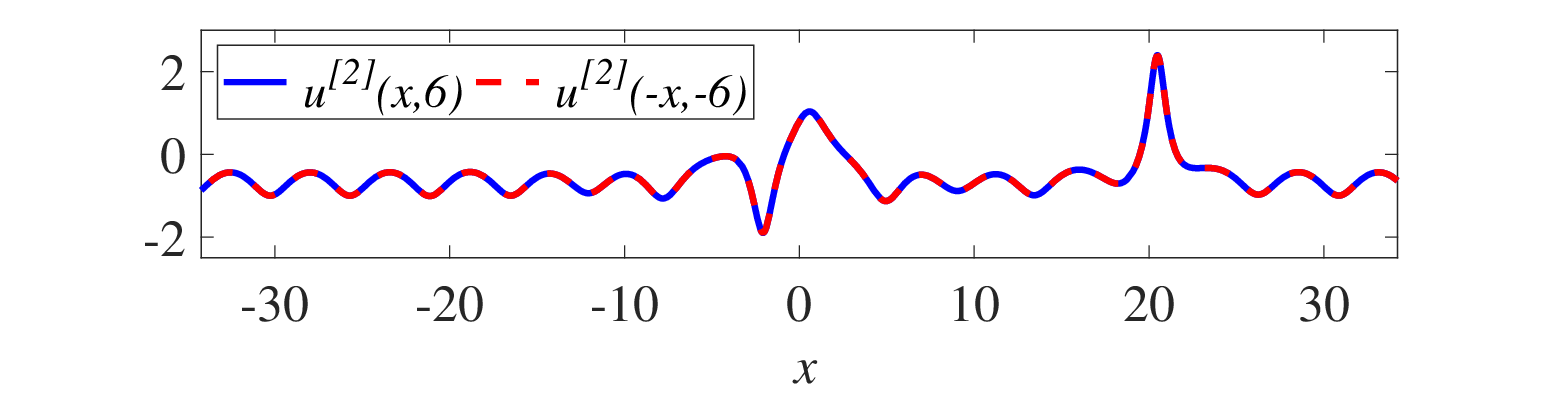}\label{fig:dn-elas-2d}}}
	\caption{The elliptic-soliton-breather solution $u^{[2]}(x,t)$ under the $\dn$-type background describes a strictly elastic collision between the bound-state soliton and the bound-state breather.}
	\label{fig:dn-elas}
\end{figure}

Suppose the multi elliptic-localized solutions $u^{[N]}(x,t)$ do not satisfy the condition $c_i= 1,i=1,2,\cdots,N$, provided in Theorem \ref{theorem:symm}. In that case, it is hard to obtain whether solutions $u^{[N]}(x,t)$ are symmetric or not. Therefore, we do not know whether collisions between breathers and solitons are strictly elastic or not.
	
However, when we consider the asymptotic expression of the multi elliptic-localized solutions, 
we could get that as $\tau\rightarrow \pm \infty$ along the evolution direction of breathers and solitons, the asymptotic expressions of them $\hat{u}^{[N]}(\xi,\tau;L_i^{\pm})$, $i=1,2,\cdots,N$, provided in Theorem \ref{theorem:exact-N-solution} are consistent. Under the transformation $\xi=x-st$, $\tau=t$ in \eqref{eq:xi-x-solution}, we know that as $t\rightarrow\pm \infty$ the asymptotic expressions of multi elliptic-localized solutions $u^{[N]}(x,t)$ are also consistent, which reflects that before and after the collision, the asymptotic expressions of solutions are consistent. In that case, we could claim that the collisions between breathers and solitons of the multi elliptic-localized solutions $u^{[N]}(x,t)$ in equation \eqref{eq:mKdV-solution-xi-n-1} are elastic. Without condition $c_i= 1,i=1,2,\cdots,N$ of the multi elliptic-localized solutions $u^{[N]}(x,t)$, we do not know whether the collisions between the breathers and solitons are strictly elastic or not.

\section{The degeneration of the multi elliptic-localized solutions}\label{sec:elliptic-constant}

Considering elliptic function solutions $u(x,t)$ \eqref{eq:mKdV-solution-cn-dn} of equation \eqref{eq:mKdV-equation}, we find that as $k\rightarrow 0^+$ above solutions degenerate into constant solutions, which link the periodic solutions and constant solutions together. Then, we want to consider whether the relationship exists between solutions under the background of elliptic functions and solutions on the vanishing or constant background.

Utilizing the approximation formulas \eqref{eq:cd-app}, we can easily get the results
\begin{equation}\label{eq:mKdV-solution-k-0}
	\lim_{k\rightarrow 0^+}\alpha\dn(\alpha(x-st),k)=\alpha,\qquad \qquad \qquad
    \lim_{k\rightarrow 0^+}\alpha k\cn(\alpha(x-st),k)=0.
\end{equation} 
Accordingly, we consider the limitation of multi elliptic-localized solutions $u^{[N]}(x,t)$ in equation \eqref{eq:mKdV-solution-xi-n-1} as $k\rightarrow 0^+$. Here, we mainly study the function $\Phi(x,t;\lambda)$ in equation \eqref{eq:Lax-solution-Phi}. If the fundamental solution $\Phi(x,t;\lambda)$ of Lax pair \eqref{eq:Lax-pair} could degenerate into the fundamental solution of the corresponding Lax pair with constant-valued function $u(x,t)$, we could obtain that the multi elliptic-localized solutions could degenerate into solitons or breathers under the vanishing or constant backgrounds by using the Darboux-B\"{a}cklund transformation.

Reviewing the results in our previous work \cite{LinglmS-21} (Theorem 6), we obtain that the fundamental solution of Lax pair \eqref{eq:Lax-pair} could also be rewritten as
\begin{equation}\label{eq:Lax-pair-funda}
	\Phi(x,t;\lambda)=
	\begin{bmatrix}
		\sqrt{u^2(x,t)-\beta_1}\exp(\theta_1) & 
		\sqrt{u^2(x,t)-\beta_2}\exp(\theta_2) \\
		-\sqrt{u^2(x,t)-\beta_2}\exp(-\theta_2) &
		-\sqrt{u^2(x,t)-\beta_1}\exp(-\theta_1) 
	\end{bmatrix},
\end{equation} 
where 
\begin{equation}\label{eq:beta-theta}
	\beta_{1}=2\lambda^2+\frac{s}{2}-2y, \quad  \beta_{2}=2\lambda^2+\frac{s}{2}+2y, \qquad \theta_{i}=\int_{0}^{\xi}\frac{2\ii \lambda \beta_i}{u^2(x)-\beta_i}\dd x+\ii \lambda \xi \pm 4\ii \lambda y t,\quad i=1,2,
\end{equation}
$s$ is defined in equation \eqref{eq:mKdV-solution-cn-dn}; $y$ satisfies the algebraic curve \begin{equation}\label{eq:y}
	y^2=(\lambda-\hat{\lambda}_1)(\lambda-\hat{\lambda}_2)(\lambda-\hat{\lambda}_3)(\lambda-\hat{\lambda}_4),
\end{equation}
and the value of $\hat{\lambda}_i$ are shown in Appendix \ref{appendix:conformal-approximation}.

\begin{prop}\label{prop:k-cn-dn}
	 As $k\rightarrow 0^+$, we analyze the value of $s$ and the expression of $y^2$ as follows:
	 \begin{itemize}
	 	\item Under the $\cn$-type background, i.e., $l=0$, we get $\lim_{k\rightarrow 0^+}s=-\alpha^2$ and $\lim_{k\rightarrow 0^+}y^2= \left(\lambda^2-\frac{\alpha^2}{4}\right)^2$;
	 	\item Under the $\dn$-type background, i.e., $l=\frac{K'}{2}$, we get $\lim_{k\rightarrow 0^+}s=2\alpha^2$ and $\lim_{k\rightarrow 0^+}y^2 =\lambda^2\left(\lambda^2+\alpha^2\right)$.
	 \end{itemize}
\end{prop}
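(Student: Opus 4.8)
The plan is to treat the two claimed limits separately and to reduce each to an elementary computation, since all the ingredients I need (the velocity $s$ and the branch points $\hat\lambda_i$) are already recorded explicitly earlier in the paper. First I would dispose of the velocity. By the formulas stated immediately after \eqref{eq:mKdV-solution-cn-dn}, one has $s=\alpha^2(2k^2-1)$ in the $\cn$-case ($l=0$) and $s=\alpha^2(2-k^2)$ in the $\dn$-case ($l=\frac{K'}{2}$). Letting $k\to 0^+$ gives $s\to-\alpha^2$ and $s\to 2\alpha^2$ respectively; this is immediate and needs no elliptic-function estimates.

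The substantive content is the limit of $y^2$. My approach is to start from the spectral curve $y^2=\prod_{i=1}^4(\lambda-\hat\lambda_i)$ in \eqref{eq:y} and insert the explicit branch points $\hat\lambda_i$ listed in Appendix \ref{appendix:conformal-approximation}. I would then compute $\lim_{k\to 0^+}\hat\lambda_i$ for each $i$, using $k'=\sqrt{1-k^2}\to 1$ together with the limiting values of the complete elliptic integrals. The key structural observation is that the four branch points do \emph{not} stay distinct in the limit. In the $\cn$-case they coalesce in pairs onto $\pm\frac{\alpha}{2}$, so the product degenerates to the perfect square $(\lambda-\frac{\alpha}{2})^2(\lambda+\frac{\alpha}{2})^2=(\lambda^2-\frac{\alpha^2}{4})^2$, reflecting the collapse of the genus-$1$ curve to a rational one over the vanishing background. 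In the $\dn$-case two branch points tend to $0$ (a double point) and the remaining two to $\pm\ii\alpha$, giving $\lambda^2(\lambda-\ii\alpha)(\lambda+\ii\alpha)=\lambda^2(\lambda^2+\alpha^2)$, consistent with the constant background $u\equiv\alpha$ that $\alpha\dn$ tends to.

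An equivalent self-contained route, which I would keep in reserve should the appendix expressions prove cumbersome, avoids the branch points entirely: from $\Omega(z)=8\ii\lambda y$ in \eqref{eq:Omega} and $\lambda(z)$ in \eqref{eq:lambda-elliptic} one solves $y=\Omega(z)/(8\ii\lambda(z))$ as an explicit function of $z$, then eliminates $z$ by expressing $\lambda^2$ as a rational function of $\cn^2(\ii z)$ (respectively of the shifted Jacobi functions when $l=\frac{K'}{2}$). This yields $y^2$ directly as a polynomial in $\lambda^2$ with $k$-dependent coefficients, and the stated limits follow by setting $k=0$, $k'=1$. I expect this route to reproduce the same factored forms and to serve as an independent check on the branch-point computation.

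The main obstacle is precisely the $k\to 0^+$ limit of the branch points, because that limit is \emph{degenerate}: the $\hat\lambda_i$ collide, so I must track which pair merges and confirm the correct multiplicities rather than naively substituting $k=0$ into a factored expression that has already lost information. Verifying that the coalescence produces exactly the squared factor in the $\cn$-case, and the double zero at $\lambda=0$ together with the simple factors $\lambda\mp\ii\alpha$ in the $\dn$-case, is the one point requiring genuine care; the remainder is bookkeeping with $k'\to 1$ and the elementary limits of the elliptic integrals.
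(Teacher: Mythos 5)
Your proposal is correct and follows essentially the same route as the paper: the limits of $s$ are read off from the explicit formulas $s=\alpha^2(2k^2-1)$ and $s=\alpha^2(2-k^2)$, and the limits of $y^2$ are obtained by substituting the closed-form branch points $\hat\lambda_i=\lambda(\hat z_i)$, $\hat z_i=\pm\frac{K'}{2}\pm\ii\frac{K}{2}$, from equations \eqref{eq:lambda-i-0} and \eqref{eq:lambda-i-K1} into $y^2=\prod_{i=1}^4(\lambda-\hat\lambda_i)$ and letting $k'\to 1$. Your concern about the degenerate coalescence is in fact harmless here: since each $\hat\lambda_i(k)$ is given by an explicit continuous expression, the limit passes directly through the finite product and the multiplicities $(\lambda^2-\frac{\alpha^2}{4})^2$ and $\lambda^2(\lambda^2+\alpha^2)$ emerge automatically, exactly as in the paper.
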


\begin{proof}
	From the solution \eqref{eq:mKdV-solution-cn-dn}, we know that the parameter $s$ in the different backgrounds is different, so the limitation of $s$ are divided into the following two cases:
	\begin{equation}\label{eq:s-k}
		\lim_{k\rightarrow 0^+}s=\lim_{k\rightarrow 0^+}\alpha^2(2k^2-1)=-\alpha^2, \quad l=0 \qquad  \text{and} \qquad \lim_{k\rightarrow 0^+}s=\lim_{k\rightarrow 0^+}\alpha^2(2-k^2)=2\alpha^2, \quad l=\frac{K'}{2}. 
	\end{equation}
	Then, we consider the value of $y$ in equation \eqref{eq:y}.
	The parameters $\hat{\lambda}_i=\lambda(\hat{z}_i)$ could be obtained by the following four points $\hat{z}=\pm \frac{K'}{2}\pm \ii \frac{K}{2}$. Combining with equations \eqref{eq:lambda-elliptic}, \eqref{eq:lambda-i-0} and \eqref{eq:lambda-i-K1}, we obtain 
	\begin{equation}\label{eq:lambda-k-0}
		\begin{split}
			&\lim_{k\rightarrow 0^+}\hat{\lambda}_1= \lim_{k\rightarrow 0^+}\hat{\lambda}_2=\frac{\alpha}{2},\qquad 	\qquad 
			\lim_{k\rightarrow 0^+}\hat{\lambda}_3=
			\lim_{k\rightarrow 0^+}\hat{\lambda}_4=-\frac{\alpha}{2}, \qquad l=0,\\
			&\lim_{k\rightarrow 0^+}\hat{\lambda}_1=\lim_{k\rightarrow 0^+}\hat{\lambda}_2=0,\qquad
			\lim_{k\rightarrow 0^+}\hat{\lambda}_3=\ii \alpha, \qquad \lim_{k\rightarrow 0^+}\hat{\lambda}_4=-\ii \alpha, \qquad l=\frac{K'}{2}
		\end{split}
	\end{equation} 
The above calculation process is given in equations \eqref{eq:lambda-i-0} and \eqref{eq:lambda-i-K1} of  Appendix \ref{appendix:conformal-approximation}.
 Combining above results, we get
	\begin{equation}\label{eq:asy-y-0}
		\lim_{k\rightarrow 0^+} y^2= \left(\lambda^2-\frac{\alpha^2}{4}\right)^2,\quad l=0,
		 \qquad  \qquad
		  \lim_{k\rightarrow 0^+} y^2=\lambda^2 \left(\lambda^2+\alpha^2\right), \quad l=\frac{K'}{2}.
	\end{equation}
\end{proof}

\begin{remark}
	 Considering the expression of the fundamental solution $\Phi(x,t;\lambda)$ and definition of parameters $\beta_i,\theta_i$ in equation \eqref{eq:beta-theta}, we could obtain that if we change the sign of $y$, i.e., from $\sqrt{\prod_{i=1}^{4}(\lambda-\lambda_i)}$ to $ -\sqrt{\prod_{i=1}^{4}(\lambda-\lambda_i)}$, the solution $\Phi(x,t;\lambda)$ is just taking a column transformation. Thus, the fundamental solutions $\Phi(x,t;\lambda)$ are equivalent to each other whatever we take positive or negative sign of parameter $y$.
\end{remark}

For the different backgrounds of elliptic function solutions, we divide them into the following two conditions to study the limitations of $\Phi(x,t;\lambda)$, as $k\rightarrow 0^+$:
\begin{theorem}\label{theorem:asy-k}
	When $l=0$, as $k\rightarrow 0^+$, the limit of matrix $\Phi(x,t;\lambda)$ is 
	\begin{equation}\label{eq:Phi-lim-0}
			\lim_{k\rightarrow 0^+}\frac{\Phi(x,t;\lambda)\sigma_3}{\sqrt{u^2(0,0)-\beta_{1}}}	
			=\begin{bmatrix}
			\ee^{-\ii \lambda x -4\ii \lambda^3 t} &  0   \\ 0   &
			\ee^{\ii \lambda x+4\ii \lambda^3 t} 
		\end{bmatrix}.
	\end{equation}
When $l=\frac{K'}{2}$, define $ \nu=\sqrt{-\lambda^2-\alpha^2}$, $\lambda\in(-\ii\infty,-\ii\alpha)\cup(\ii\alpha,\ii\infty)$, we have
\begin{equation}\label{eq:Phi-lim-K1}
		\lim_{k\rightarrow 0^+}\frac{\Phi(x,t;\lambda)\sigma_3}{\sqrt{u^2(0,0)-\beta_{1}}}	
		=\Psi(x,t;\lambda),\,\,\,
	\Psi(x,t;\lambda):=
	\begin{bmatrix}
	\ee^{-\nu(x+2(2\lambda^2-\alpha^2)t)}
	& \frac{\alpha\ee^{\nu(x+2(2\lambda^2-\alpha^2)t)}}{\nu+\ii \lambda} \\
	\frac{\alpha\ee^{-\nu(x+2(2\lambda^2-\alpha^2)t)}}{\nu+\ii \lambda}			
	&\ee^{\nu(x+2(2\lambda^2-\alpha^2)t)}
	\end{bmatrix}.
\end{equation}
\end{theorem}
\begin{proof}
	When $l=0$, combining with the Proposition \ref{prop:k-cn-dn} and equation \eqref{eq:cd-app} and letting $y=-\sqrt{\prod_{i=1}^{4}(\lambda-\lambda_i)}$, we get 
	\begin{equation}\label{eq:lim-theta-0}
		\begin{split}
			\lim_{k\rightarrow 0^+}\theta_1=&	\lim_{k\rightarrow 0^+}\int_{0}^{\xi}\frac{2\ii \lambda \beta_1}{u^2(s)-\beta_1}\dd s+\ii \lambda \xi +4\ii \lambda y t
			=\int_{0}^{\xi}-2\ii \lambda\dd s+\ii \lambda \xi +4\ii \lambda y t
			=-\ii \lambda x -4\ii \lambda^3 t,
		\end{split}
	\end{equation}
and 
\begin{equation}\label{eq:lim-u-beta-0}
	\begin{split}
		\lim_{k\rightarrow 0^+} u^2(x,t)-\beta_{2}=&\lim_{k\rightarrow 0^+} (\alpha k\cn(x-st))^2-\left(2\lambda^2+\frac{s}{2}+2y\right)=-2\lambda^2+\frac{\alpha^2}{2}+2\left(\lambda^2-\frac{\alpha^2}{4}\right)=0,\\
		\lim_{k\rightarrow 0^+} u^2(x,t)-\beta_{1}=&\lim_{k\rightarrow 0^+} (\alpha k\cn(x-st))^2-\left(2\lambda^2+\frac{s}{2}-2y\right)=\alpha^2-4\lambda^2.
	\end{split}	
\end{equation}
Furthermore, since $\lim_{k\rightarrow 0^+}\alpha k\cn(x-st)=\lim_{k\rightarrow 0^+}\alpha k\cn(0)=0$, it is easy to obtain that 
\begin{equation}\label{eq:lim-f-0}
	\lim_{k\rightarrow 0^+} u^2(0,0)-\beta_{2}=\lim_{k\rightarrow 0^+} u^2(x,t)-\beta_{2}=0, \qquad \qquad u^2(0,0)-\beta_{1}=\lim_{k\rightarrow 0^+} u^2(x,t)-\beta_{1}=\alpha^2-4\lambda^2.
\end{equation}
Thus, combining with equations \eqref{eq:lim-theta-0}, \eqref{eq:lim-u-beta-0} and \eqref{eq:lim-f-0}, we get equation \eqref{eq:Phi-lim-0}.

	Similarly, we consider the case $l=\frac{K'}{2}$ and let $y=\sqrt{\prod_{i=1}^{4}(\lambda-\lambda_i)}$, which means $y_0=\lim_{k\rightarrow 0^+}y=\ii\lambda\nu$. It is easy to verify that 
	\begin{equation}\label{eq:int-simpli}
			\begin{split}
			\int_{0}^{x}\frac{2\ii \lambda \left(2\lambda^2+\alpha^2\mp 2y_0\right)}{\alpha^2-\left(2\lambda^2+\alpha^2\mp 2y_0\right)}\dd s+\ii \lambda x
			=&-\ii \lambda \frac{\lambda^2\mp y_0+\alpha^2}{\lambda^2\mp y_0}x
			=-\ii \lambda \frac{\lambda^4- y_0^2+\alpha^2(\lambda^2\pm y_0)}{\lambda^4- y_0^2}x
			=\pm \ii \frac{y_0}{\lambda}x.
		\end{split}
	\end{equation}
	Based on the definition of $\beta_{1,2}$ in equation \eqref{eq:beta-theta}, the Proposition \ref{prop:k-cn-dn}, equations \eqref{eq:cd-app} and \eqref{eq:int-simpli}, we obtain
\begin{equation}
	\begin{split}
		\lim_{k\rightarrow 0^+}\theta_{1,2}=&	\lim_{k\rightarrow 0^+}\int_{0}^{\xi}\frac{2\ii \lambda \beta_{1,2}}{u^2(s)-\beta_{1,2}}\dd s+\ii \lambda \xi \pm 4\ii \lambda y t\\
		=&\int_{0}^{(x-2\alpha^2 t)}\frac{2\ii \lambda \left(2\lambda^2+\alpha^2\mp 2y_0\right)}{\alpha^2-\left(2\lambda^2+\alpha^2\mp 2y_0\right)}\dd s+\ii \lambda (x-2\alpha^2 t)\pm 4\ii \lambda y_0 t\\
		=&\pm \ii \frac{y_0}{\lambda}(x-2\alpha^2 t)\pm 4\ii \lambda y_0 t\\
		=&\mp\nu (x+2(2\lambda^2-\alpha^2)t ).
	\end{split}
\end{equation}
Furthermore, we also could get
\begin{equation}\label{eq:u-u-2}
\begin{split}
	\lim_{k\rightarrow 0^+}\frac{u^2(x,t)-\beta_{2}}{u^2(0,0)-\beta_{1}}
	=&\frac{\alpha^2-\left(2\lambda^2+\alpha^2+2y_0\right)}{\alpha^2-\left(2\lambda^2+\alpha^2-2y_0\right)}
	=\frac{\lambda^4-y_0^2}{(\lambda^2-y_0)^2}=\frac{\alpha^2}{2y_0-\alpha^2-2\lambda^2}=\left(\frac{-\alpha}{\ii \lambda +\nu}\right)^2,\\
	\lim_{k\rightarrow 0^+}\frac{u^2(x,t)-\beta_{1}}{u^2(0,0)-\beta_{1}}
	=&\frac{\alpha^2-\left(2\lambda^2+\alpha^2-2y_0\right)}{\alpha^2-\left(2\lambda^2+\alpha^2-2y_0\right)}
	=1.
\end{split}
\end{equation}
Thus, equation \eqref{eq:Phi-lim-K1} holds.
\end{proof}

\begin{remark}
	Expanding the right side of the matrix function \eqref{eq:Phi-lim-K1} in the small neighborhood of $\alpha=0$, we obtain 
	\begin{equation}
		\begin{split}
			\begin{bmatrix}
				\ee^{-\nu(x+2(2\lambda^2-\alpha^2)t)}
				& \frac{\alpha\ee^{\nu(x+2(2\lambda^2-\alpha^2)t)}}{\nu+\ii \lambda} \\
				\frac{\alpha\ee^{-\nu(x+2(2\lambda^2-\alpha^2)t)}}{\nu+\ii \lambda}			
				&\ee^{\nu(x+2(2\lambda^2-\alpha^2)t)}
			\end{bmatrix}
		=\begin{bmatrix}
			\ee^{-\ii \lambda(x+4\lambda^2 t)}+\mathcal{O}(\alpha^2)
			& 
			\frac{ \ee^{\ii\lambda(x+4\lambda^2t)}}{2\ii \lambda}\alpha+\mathcal{O}(\alpha^2)
			 \\ 
			 \frac{ \ee^{-\ii\lambda(x+4\lambda^2t)}}{2\ii \lambda}\alpha+\mathcal{O}(\alpha^2)		
			&\ee^{\ii \lambda(x+4\lambda^2 t)}+\mathcal{O}(\alpha^2)
		\end{bmatrix}.
		\end{split}
	\end{equation}
Comparing with the right side of equation \eqref{eq:Phi-lim-0}, we get that equation \eqref{eq:Phi-lim-0} could be seen as the degeneration of equation \eqref{eq:Phi-lim-K1} as $\alpha \rightarrow 0$. So, in the following analysis, we will just consider equation \eqref{eq:Phi-lim-K1}.
\end{remark}

The limitation of solution $\Phi(x,t;\lambda)$
as $k\rightarrow0^+$ is 
\begin{equation}
	\lim_{k\rightarrow 0^+} \Phi(x,t;\lambda)=\Psi(x,t;\lambda) A(\lambda),	\qquad 	A(\lambda):=\lim_{k\rightarrow 0^+} \frac{\sigma_3}{\sqrt{u^2(0,0)-\beta_{1}}}
	=a(\lambda)\sigma_3,
\end{equation}
and 
\begin{equation}
	\lim_{k\rightarrow 0^+}\Phi_i:=\lim_{k\rightarrow 0^+}\Phi(x,t;\lambda_i)
	\begin{bmatrix}
		1 \\c_i
	\end{bmatrix}
	=\Psi(x,t;\lambda_i)\lim_{k\rightarrow 0^+}A(\lambda_i)
	\begin{bmatrix}
		1 \\c_i
	\end{bmatrix}
	=\Psi(x,t;\lambda_i)a(\lambda_i)
	\begin{bmatrix}
		1 \\ -c_i
	\end{bmatrix}=\Psi_i.
\end{equation}
Based on the Darboux-B\"{a}cklund transformation and collecting the multi soliton solutions \cite{ChanL-1994} and the multi breather solutions \cite{TajiriW-98}, we obtain that as $k\rightarrow 0^+$ the solutions $u^{[N]}(x,t)$ degenerate into the multi soliton solutions, the multi breather solutions and the multi soliton-breather solutions.

Then, we will describe the above solutions degeneration for choosing different spectral parameters $\lambda$. Here, we just consider the degeneration of the elliptic-soliton solution and the elliptic-breather solutions, since all the degeneration cases of multi elliptic-localized solutions are contained in the above two solutions. 

The elliptic-soliton solution $u^{[1]}(x,t)$ is obtained by the parameter $z_1=mK'+l+\ii z_I,m=-1,0,1$, i.e., $\lambda_1=\lambda(z_1)\in \ii \mathbb{R}$ gained from Appendix \ref{appendix:conformal-approximation}. Based on the conformal mapping between $\lambda$ and $z$ provided in \cite{LinglmS-21}, we just consider the parameter $\lambda$ in the following paper. Combining the Theorem \ref{theorem:asy-R-elliptic}, we know that 
\begin{subequations}\label{eq:v-asy-R}
	\begin{align}
		&\lim_{k\rightarrow 0^+}\hat{u}^{[1]}(\xi,\tau;R_{1}^{\pm}) \rightarrow0, \quad \tau\rightarrow \pm \infty, \quad l=0 \quad \text{and}	\quad \lambda_1\in \ii \mathbb{R}, \label{eq:v-asy-R-1}\\
		&\lim_{k\rightarrow 0^+}\hat{u}^{[1]}(\xi,\tau;R_{1}^{\pm}) \rightarrow\alpha , \quad \tau\rightarrow \pm \infty, \quad l=\frac{K'}{2} \quad\text{and}	\quad \lambda_1\in \ii \mathbb{R}, \quad |\Im(\lambda)|<\frac{\alpha(1-k')}{2}, \label{eq:v-asy-R-2}\\
		&\lim_{k\rightarrow 0^+}\hat{u}^{[1]}(\xi,\tau;R_{1}^{\pm}) \rightarrow-\alpha , \quad \tau\rightarrow \pm \infty, \quad l=\frac{K'}{2} \quad\text{and}	\quad \lambda_1\in \ii \mathbb{R}, \quad |\Im(\lambda)|>\frac{\alpha(1+k')}{2}. \label{eq:v-asy-R-3}
	\end{align}
\end{subequations}
Then we study the variation of peaks as $k\to0^+$. In Section \ref{sec:Darboux-transformation}, we have known that the range of $u^{[1]}(x,t)$ is $ \left[\min(u(x,t))-2|\Im(\lambda_1)|,\max(u(x,t))+2|\Im(\lambda_1)|\right]$. Because the limitation $\lim_{k\rightarrow 0^+}u(x,t)$ is obtained in equation \eqref{eq:mKdV-solution-k-0}, we just study the value of $\lambda_1$. From Appendix \ref{appendix:conformal-approximation}, we consider the conformal map $\lambda(z)$ and obtain the following cases:
\begin{itemize}
\item[(i)] When $l=0$, for any fixed $\lambda_1=\lambda(z_1)\in \ii \mathbb{R}$,
	$\lim_{k\rightarrow 0^+}\lambda(z_1)$ always satisfies $\Im(\lambda(z_1))\in \left(0,\frac{\alpha}{2}\right)\cup \left(\frac{\alpha}{2},+\infty\right)$ and $\lambda(z_1)\in \ii \mathbb{R}$ shown in Appendix \ref{appendix:conformal-approximation}, which implies that the peak could not vanish and the maximum value of $\lim_{k\rightarrow 0^+}u^{[1]}(x,t)$ is $2|\Im(\lim_{k\rightarrow 0^+}\lambda_1)|>0$. Furthermore, collecting equation \eqref{eq:v-asy-R-1} and the Theorem \ref{theorem:asy-k}, we obtain that the elliptic-soliton solution under the $\cn$-type background would degenerate into the soliton solution.
	\item[(ii)] When $l=\frac{K'}{2}$, for any fixed $\lambda_1=\lambda(z_1)\in\ii \mathbb{R}$, $|\Im(\lambda_1)|<\frac{\alpha(1-k')}{2}$, we get $\lim_{k\rightarrow 0^+}\lambda(z_1)=0$, since $\Im(\lambda_1)\in (0,\frac{\alpha(1-k')}{2}) $ and $\lim_{k\rightarrow 0^+}\frac{\alpha(1-k')}{2}=0$ proved in Appendix \ref{appendix:conformal-approximation}. Combining with the maximum value analysis, we obtain that the peak would vanish as $k\rightarrow0^+$. Collecting equation \eqref{eq:v-asy-R-2}, we obtain that $\lim_{k\rightarrow 0^+}u^{[1]}(x,t)\rightarrow 0$, which reflects that the elliptic-soliton solution under the $\dn$-type background with $\lambda_1\in \ii \mathbb{R}$, $|\Im(\lambda_1)|<\frac{\alpha(1-k')}{2}$ would degenerate into the constant solution.
	\item[(iii)] When $l=\frac{K'}{2}$, for any fixed $\lambda_1\in \ii \mathbb{R}$, $|\Im(\lambda_1)|>\frac{\alpha(1+k')}{2}$, we get  $\lim_{k\rightarrow 0^+}\lambda(z_1)\in \left(\alpha,\infty\right)$, since $\Im(\lambda(z_1))\in \left(\frac{\alpha(1+k')}{2},\infty\right)$ and $\lim_{k\rightarrow 0^+}\frac{\alpha(1+k')}{2}=\alpha $. Thus, the peak could not vanish and $\Im(\lim_{k\rightarrow 0^+}\lambda_1)|>\alpha$. By equation \eqref{eq:v-asy-R-3} and the Theorem \ref{theorem:asy-k}, we obtain that the elliptic-soliton solution under the $\dn$-type background with $\lambda_1\in \ii \mathbb{R}$, $|\Im(\lambda_1)|>\frac{\alpha(1+k')}{2}$ would degenerate into the soliton solution.
\end{itemize} 

To illustrate the above situation more clearly, we provide the following special examples. Firstly, we consider the $\cn$-type background case with $l=0$. By choosing $z_1=K'+\frac{K}{3}\ii$, it is easy to verify that $z_1$ satisfies the case (i) and equation \eqref{eq:v-asy-R-1}. Plugging $l=0$, $\alpha=1$, $k=\frac{1}{100}$, $c=1$ into equation \eqref{eq:mKdV-solution-xi-n-1}, we obtain a soliton in Figure \ref{fig:cn-1b-k}(a). Comparing Figure \ref{fig:cn-1b-k}(a) with Figure \ref{fig:cn-1b-3d}(a) drawn by the modulus $k=\frac{1}{2}$ with the same parameter $z_1=K'+\frac{K}{3}\ii$, and we know that when the modulus $k$ changes from $\frac{1}{2}$ to $\frac{1}{100}$, the value of $v^{[1]}(x,t)$ in equation \eqref{eq:v-asy-R-1} tends to zero but the peak would not vanish, which provide a vividly description for the above analysis in case (i).

Then, we consider the $\dn$-type background case. Letting $z_1=\frac{K'}{2}+\ii\frac{2K}{5}$, we could verify that it satisfies the case (ii) and equation \eqref{eq:v-asy-R-2}. Plugging $l=\frac{K'}{2}$, $\alpha=1$, $k=\frac{2}{100}$, $c=1$ into equation \eqref{eq:mKdV-solution-xi-n-1}, we draw this solution in Figure \ref{fig:cn-1b-k}(b). Comparing with Figure \ref{fig:dn-1b-3d}(a) drawn by the modulus $k=\frac{9}{10}$, we find that the small peak in
Figure \ref{fig:dn-1b-3d}(a) gradually disappears in Figure \ref{fig:cn-1b-k}(b), when the modulus $k$ changes from $\frac{9}{10}$ to $\frac{2}{100}$. Furthermore, the amplitude is getting smaller and smaller with parameter $\lambda_1$ turning to zero. Thus, as $k$ changes from $\frac{9}{10}$ to $\frac{2}{100}$, the function $u^{[1]}(x,t)$ tends to be constant $\alpha=1$ with  the peak disappearing. This phenomenon confirms the above-mentioned analysis of case (ii).

When $z_1=-\frac{K'}{2}+\ii\frac{K}{5}$ and $l=\frac{K'}{2}$, it falls into the case (iii) and equation \eqref{eq:v-asy-R-3}. Plugging $l=\frac{K'}{2}$, $\alpha=1$, $k=\frac{2}{10}$, $c=1$ into equation \eqref{eq:mKdV-solution-xi-n-1}, we draw this soliton in Figure \ref{fig:cn-1b-k}(c). Comparing with Figure \ref{fig:dn-1b-3d}(b) drawn by the modulus $k=\frac{9}{10}$, we find that as the modulus $k$ changes from $\frac{9}{10}$ to $\frac{2}{10}$, the peak in
Figure \ref{fig:dn-1b-3d}(b) does not disappear in Figure \ref{fig:cn-1b-k}(c) and the function $v^{[1]}(x,t)$ in \eqref{eq:v-asy-R-3} tends to $-\alpha=-1$. This phenomenon confirms the above-mentioned analysis of case (iii).

\begin{figure}[h]
	\centering
	\includegraphics[width=1\linewidth]{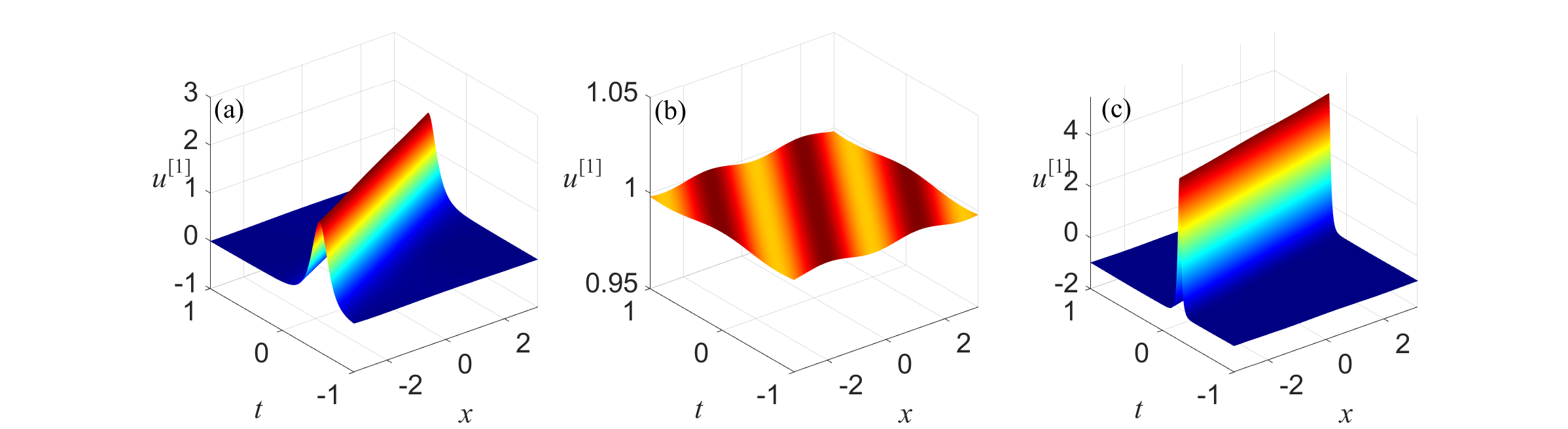}
	\caption{The 3d-plot of soliton solutions $u^{[1]}(x,t)$, as $k\rightarrow 0^+$.
	 (a): The solution is constructed by $k=\frac{1}{100}$, $\lambda_1\in \ii \mathbb{R}$, under $\cn$-type background. (b): The solution is constructed by $k=\frac{2}{100}$, $\lambda_1\in \ii \mathbb{R}$, $|\Im(\lambda_1)|<\frac{(1-k')}{2}$ under $\dn$-type background. (c): The solution is constructed by $k=\frac{2}{10}$, $\lambda_1\in \ii \mathbb{R}$, $|\Im(\lambda_1)|>\frac{(1+k')}{2}$ under the $\dn$-type background.}\label{fig:cn-1b-k}
\end{figure}

Similarly, we obtain that the multi elliptic-breather solutions could degenerate into the multi breather solutions as $k\rightarrow 0^+$ (in Figure \ref{fig:asy-k}). The breather solutions $u^{[1]}(x,t)$ in Figure \ref{fig:asy-k}(a) are obtained by $k=\frac{1}{20}$, $l=0$, $z_1=-\frac{9K'}{10}+\ii \frac{K}{4}$, $\alpha=1$, $c_1=3+4\ii$, $\lambda_1\approx 0.696 - 0.612\ii$. Consider the function $u^{[1]}(x,t)$ in region $R_{1}^{\pm}$. From the Theorem \ref{theorem:asy-R-elliptic}, we know that as $k\rightarrow 0^+$, the function $u^{[1]}(x,t;R_1^{\pm})$ tends to zero by equation \eqref{eq:solution-u-n-infy-R-cn}, if the solution $u^{[1]}(x,t)$ is constructed under the $\cn$-type background. The Figure \ref{fig:asy-k}(b) is obtained by the $\dn$-type background with $k=\frac{1}{10}$. Choosing $z_1=-\frac{K'}{3}+\ii \frac{K}{3}$, $l=\frac{K'}{2}$, $\lambda_1\approx 0.244- 0.502\ii $, $c_1=1$, $\alpha=1$, we get a single breather solution $u^{[1]}(x,t)$ in Figure \ref{fig:asy-k}(b). As $k$ is changing to $\frac{1}{10}$, the function $u^{[1]}(x,t;R_1^{\pm})$ is approximating to the constant $\alpha=1$ by equation \eqref{eq:solution-u-n-infy-R-dn} in Theorem \ref{theorem:asy-R-elliptic}. The breather solution $u^{[2]}(x,t)$ in Figure \ref{fig:asy-k}(c) is obtained by parameters $k=\frac{1}{20}$, $l=0$, $z_1=-\frac{9K'}{10}+\ii \frac{K}{4}$, $z_2=\frac{5K'}{6}+\ii\frac{ 2K}{5}$, $\alpha=1$, $c_1=3+4\ii $, $c_2=1.38 -\ii$, $\lambda_1=0.706 - 0.504\ii $, $\lambda_2=-0.519- 0.243\ii$. The same as the case in Figure \ref{fig:asy-k}(a), the function $u^{[1]}(x,t;R_1^{\pm})\rightarrow 0,$ as $k\rightarrow 0^+$.
 
\begin{figure}[h]
	\centering
	\subfigure {\includegraphics[width=0.33\linewidth]{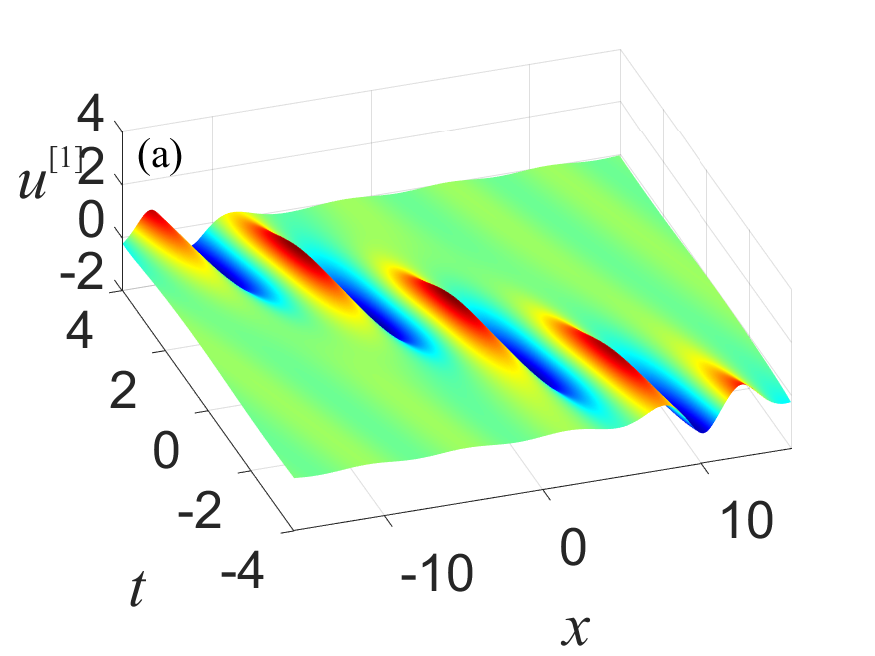}}
	\subfigure {\includegraphics[width=0.33\linewidth]{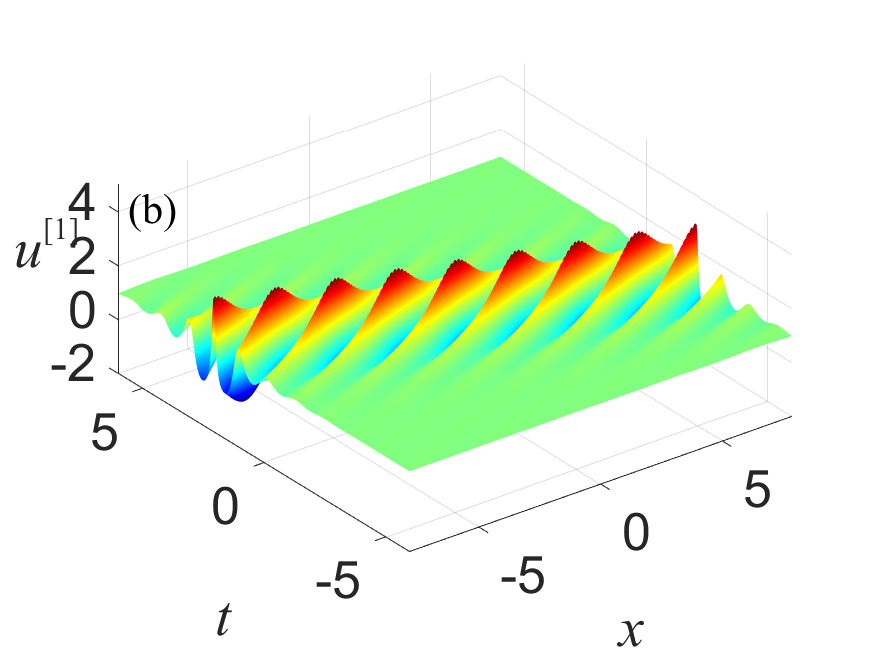}}
	\subfigure {\includegraphics[width=0.33\linewidth]{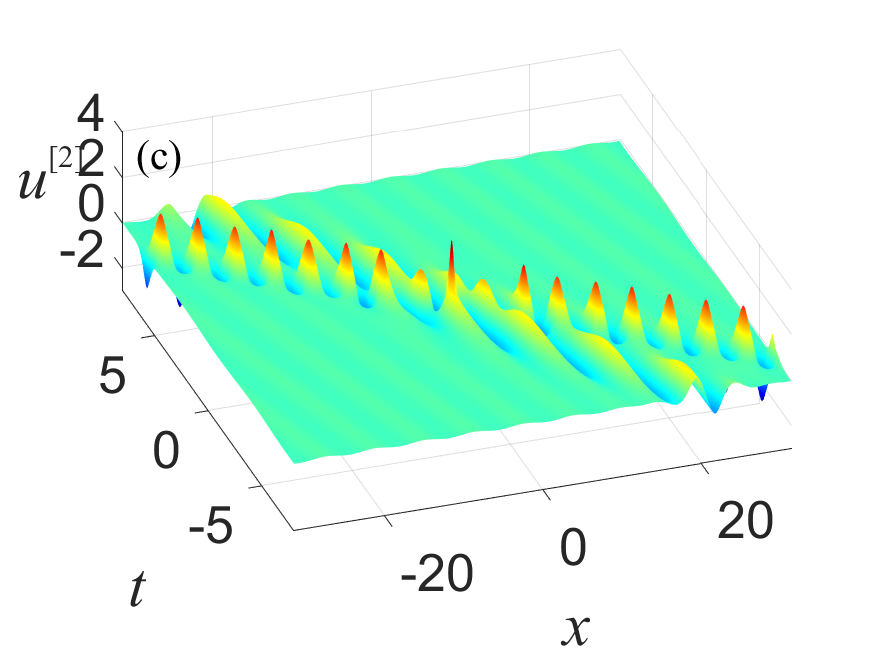}}
	\caption{The 3d-plot of solutions $u^{[1]}(x,t)$ or $u^{[2]}(x,t)$ for the mKdV equation \eqref{eq:mKdV-equation}, as $k\rightarrow 0^+$. (a): The breather solutions $u^{[1]}(x,t)$ with $k=\frac{1}{10}$ under the $\cn$-type background. (b): The breather solutions $u^{[1]}(x,t)$ with $k=\frac{1}{10}$, under the $\dn$-type background. (c): A two-breathers solution $u^{[2]}(x,t)$ with $k=\frac{1}{20}$ under the $\cn$-type background.}
	\label{fig:asy-k}
\end{figure}

Combining the above analysis, we get that as $k\rightarrow 0^+$ when $\lambda_i\in \ii \mathbb{R}$, $i=1,2,\cdots,N$, satisfies $|\Im(\lambda_i)|<\frac{\alpha(1-k')}{2}$, the multi elliptic-soliton solutions under the $\dn$-type background could degenerate into the constant solutions. Excepting the above case, as $k\rightarrow 0^+$, multi elliptic-soliton solutions, multi elliptic-breather solutions, and multi elliptic-soliton-breather solutions could degenerate into multi soliton solutions, multi breather solutions, and multi soliton-breather solutions, respectively. Then, we get Theorem \ref{theorem:Phi-k}.
Moreover, we obtain that the modulus $k$ is an important bridge connecting those two groups of solutions together. Furthermore, we could claim that we have extended the soliton and breather solutions under the constant or vanishing background into the elliptic functions background.

The breather solutions to the vanishing and constant backgrounds are well-known in the previous literature \cite{ChanL-1994,TajiriW-98}. By studying the maximum value of breathers and solitons solutions, it is easy to obtain that the height of the peak for solutions depends on the imaginary part of the spectral parameter $\lambda$. Under the elliptic solution background, as the modulus $k$ increases, the dynamic behavior of the nonlinearly superimposed soliton interacting with the spatial oscillations is more complex than the solitons or breathers under the constant backgrounds. For the peaks of solutions, the periodic background could influence the shape of the peaks, but it does not influence the height of the peaks. Thus, when we choose a small enough modulus $k$, the periodic background could be regarded as the constant background adding a small perturbation. The studies of the variation for these perturbations could be used to reflect the stable or unstable dynamic behaviors. As the evolution of time, if the perturbation is still small enough at any moment, the solution exhibits a stable dynamic behavior; if not, the solution exhibits an unstable dynamic behavior. These solutions could be taken as examples to support the stability analyses of breathers \cite{AlejoM-13,AlejoMP-17}, multi solitons \cite{Le-21} and so on.

\section{Conclusion and discussion}\label{sec:conclusion-discussion}

In this work, we systematically construct multi elliptic-localized solutions for the focusing mKdV equation using the Jacobi theta functions. And then, we provide the exact expressions to the multi elliptic-localized solutions under the different elliptic function solution backgrounds and perform uniform processing. 

In addition, we study the asymptotic behaviors of the multi elliptic-localized solutions along the evolution direction of solitons and breathers as well as on the region between the solitons and breathers. The collision between the breathers and solitons of the multi elliptic-localized solutions is elastic. For the multi elliptic-localized solutions $u^{[N]}(x,t)$, when $c_i=1,i=1,2,\cdots,N$, the collision between the breathers and solitons of solutions is strictly elastic. Moreover, we draw the profile of the multi elliptic-localized solutions before and after the collision to show the dynamic behavior of strictly elastic collision. Furthermore, the multi elliptic-localized solutions could degenerate as the breathers and solitons as $k\rightarrow 0^+$. For the different values of spectral parameter $\lambda$ depending on the uniform parameter $z$, the dynamic behavior of elliptic-localized solutions is shown as $k\rightarrow 0^+$. The above method can be easily extended to the other integrable models, such as the coupled NLS equation and the sine-Gordon equation.

The multi elliptic-localized solutions obtained here could motivate more studies in theories and experiments. Lamb et al. \cite{LambPTPXK-07} revealed that breathers were generated under several initial disturbances in internal wave records in symmetrically stratified fluids. Through experimental observations, Xu et al.\cite{XuGCZK-19} confirmed that the shaped molecular breathing light waves propagate in an almost conservative fiber optic system. In addition, the breathers also appear in the wave field dynamics \cite{Takahashi-16}.

The degenerated analysis among the branch points has not been studied. The rogue wave of the NLS equation in Jacobi theta function form is obtained by Feng et al. \cite{FengLT-19}, but the rogue wave of the mKdV equation in theta function form is not provided. The stability analysis of the elliptic function solutions of the focusing NLS equation and focusing mKdV equation is obtained in \cite{DeconinckyU-20,LinglmS-21}. Naturally, we are interested in the stability of multi elliptic-localized solutions. We will analyze the works mentioned above in the near future.

\section*{Acknowledgement}

This work is supported by the National Natural Science Foundation of China (Grant No. 12122105).

\appendix

\titleformat{\section}[display]
{\centering\LARGE\bfseries}{ }{11pt}{\LARGE}

\renewcommand{\appendixname}{Appendix \ \Alph{section} }

\section{\appendixname. The definitions and properties of elliptic functions}\label{appendix:Elliptic functions}

\setcounter{equation}{0}
\renewcommand\theequation{\Alph{section}.\arabic{equation}}
\setcounter{figure}{0}
\renewcommand\thefigure{\Alph{section}.\arabic{figure}}

\renewcommand\thedefinition{\Alph{section}.\arabic{definition}}

\renewcommand\thelemma{\Alph{section}.\arabic{lemma}}

\subsection*{Jacobi elliptic function}

The functions $K(k)$ and $E(k)$ in the above equations are called the first and second complete elliptic integrals \cite{ByrdF-54}, which are defined as
\begin{equation}\label{eq:J-K-E-int}
	K\equiv K(k)=\int_{0}^{\frac{\pi}{2}}\frac{\dd \theta}{\sqrt{1-k^2\sin^2\theta}},\quad \text{and} \quad
	E\equiv E(k) =\int_{0}^{\frac{\pi}{2}}\sqrt{1-k^2\sin^2\theta}\dd \theta.
\end{equation}
In addition to the above two integrals, we usually use associated complete elliptic integrals $K'=K(k'), k'=\sqrt{1-k^2}$. Then, we list useful formulas in \cite{ByrdF-54} as follows:
\begin{itemize}
	\item Zeros and poles of Jacobi elliptic functions:
	\begin{equation}\label{eq:J 0in}
		\begin{split}
			\sn(2mK+(2n+1)\ii K')=\infty, \qquad & \sn(2mK+2 n\ii K')=0,\\
			\cn(2mK+(2n+1)\ii K')=\infty,\qquad &
			\cn((2m+1)K+2n\ii K')=0,\\
			\dn(2mK+(2n+1)\ii K')=\infty, \qquad &
			\dn((2m+1)K+(2n+1)\ii K')=0,
		\end{split}
	\end{equation}
	where $n$ and $m$ are any integers including zero;
	\item Shift formulas:
	\begin{equation}\label{eq:Jacobi-shift}
		\begin{split}
			&\sn(u+K)=\cd(u),\qquad \sn(u+\ii K')=\ns(u)/k, \qquad \qquad  \sn(u+K+\ii K')=\dc(u)/k, \\
			&\cn(u+K)=-k'\sd(u), \quad \cn(u+\ii K')=-\ii \ds(u)/k, \qquad 
			\cn(u+K+\ii K')=-\ii k'\nc(u)/k, \\ &\dn(u+K)=k'\nd(u), \qquad \dn(u+\ii K')=-\ii \cs(u),\qquad \quad 
			\dn(u+K+\ii K')=\ii k' \tn(u);
		\end{split}
	\end{equation}
	\item Imaginary arguments:
	\begin{equation}\label{eq:Jacobi-I}
		\sn(\ii u,k) =\ii \tn(u,k'),\quad
		\cn(\ii u,k) = \nc(u,k'),\quad
		\dn(\ii u,k) = \dc(u,k'), \quad k^2+k'^2=1;
	\end{equation}
	\item Half arguments:
\begin{equation}\label{eq:Jacobi-half}
	\sn^2\left(\frac{u}{2}\right) =\frac{1-\cn(u)}{1+\dn(u)},\quad
	\cn^2\left(\frac{u}{2}\right) =\frac{\dn(u)+\cn(u)}{1+\dn(u)},\quad
	\dn^2\left(\frac{u}{2}\right) =\frac{\dn(u)+\cn(u)}{1+\cn(u)};
\end{equation}
	\item Approximation formulas:
	\begin{equation}\label{eq:cd-app}
		\cn(u,k)\approx \cos u+k^2\sin u(u-\sin u\cos u)/4,\qquad \dn(u,k)\approx 1-(k^2\sin^2u)/2, \qquad k\ll 1.
	\end{equation}
\end{itemize}

\subsection*{Jacobi theta function}
The definition of the Jacobi theta function in \cite{ArmitageE-06} is as follows:
\begin{definition}\label{define:theta}
The Jacobi theta functions are defined as the summation:
	\begin{equation}\label{eq:theta1234}
		\begin{split}
			\vartheta_1(z,q)=&\ii\sum_{n=-\infty}^{+\infty}(-1)^{n}q^{\left(n-\frac{1}{2}\right)^2}\ee^{(2n-1)\ii\pi z},\qquad \qquad 
			\vartheta_3(z,q)=\sum_{n=-\infty}^{+\infty}q^{n^2}\ee^{2n\ii\pi z}, \\
			\vartheta_2(z,q)=&	\sum_{n=-\infty}^{+\infty}q^{\left(n-\frac{1}{2}\right)^2}\ee^{(2n-1)\ii\pi z}, \qquad \qquad 
			\vartheta_4(z,q)=\sum_{n=-\infty}^{+\infty}(-1)^{n}q^{n^2}\ee^{2n\ii\pi z},
		\end{split}
	\end{equation}
	where $q=e^{\ii \pi \tau}$, $\tau=\frac{\ii K'}{K}$.
\end{definition}
We usually omit the parameter $q$ in the jacobi theta functions: $\vartheta_i(z)\equiv\vartheta_i(z,q)$, $i=1,2,\cdots,4$. Furthermore, if $z=0$, we omit the $0$, i.e., $\vartheta_i=\vartheta_i(0)$.
There are many relationships among the above four theta functions. Here, we just provide common formulas:
\begin{itemize}
	\item Shift formulas among four theta functions in \cite{ArmitageE-06}:
	\begin{equation}\label{eq:Jacobi Theta-K iK}
		\begin{split}
			\vartheta_1(z)&=-\vartheta_2\left(z+\frac{1}{2}\right)=-\ii M \vartheta_3\left(z+\frac{1}{2}+\frac{\tau}{2}\right)=-\ii M\vartheta_4\left(z+\frac{\tau}{2}\right),\\
			\vartheta_2(z)&=\vartheta_1\left(z+\frac{1}{2}\right)=M \vartheta_4\left(z+\frac{1}{2}+\frac{\tau}{2}\right)= M\vartheta_3\left(z+\frac{\tau}{2}\right),\\
			\vartheta_3(z)&=\vartheta_4\left(z+\frac{1}{2}\right)= M \vartheta_1\left(z+\frac{1}{2}+\frac{\tau}{2}\right)= M\vartheta_2\left(z+\frac{\tau}{2}\right),\\
			\vartheta_4(z)&=\vartheta_3\left(z+\frac{1}{2}\right)=\ii M \vartheta_2\left(z+\frac{1}{2}+\frac{\tau}{2}\right)=-\ii M\vartheta_1\left(z+\frac{\tau}{2}\right),
		\end{split}
	\end{equation}
	where $M=q^{1/4}\ee^{\ii \pi z}$.
	\item Conversion formulas between Jacobi theta functions and elliptic functions in \cite{ByrdF-54}:
	\begin{equation}\label{eq:sn vartheta}
		\begin{split}
			\sn(u)=\frac{\vartheta_3\vartheta_1(z)}{\vartheta_2\vartheta_4(z)}, \quad \cn(u)&=\frac{\vartheta_4\vartheta_2(z)}{\vartheta_2\vartheta_4(z)}, \quad
			\dn(u)=\frac{\vartheta_4\vartheta_3(z)}{\vartheta_3\vartheta_4(z)}, 
		\end{split}
	\end{equation}
	where $z=\frac{u}{2K}$ and $ k^{1/2}=\frac{\vartheta_2}{\vartheta_3}$;
	\item Weierstrass addition formulas (or Fay’s identities) are given by \cite{KharchevZ-15}:
	\begin{itemize}
		\item[(1)] Complimentary system:
		\begin{equation}\label{eq:Jacobi Theta}
			\begin{split}
				&\vartheta_k(u+v)\vartheta_k(u-v)\vartheta_l(x+y)\vartheta_l(x-y)\\
				=&\vartheta_i(v+x)\vartheta_i(v-x)\vartheta_j(u+y)\vartheta_j(u-y)
				-\vartheta_i(u+x)\vartheta_i(u-x)\vartheta_j(v+y)\vartheta_j(v-y),
			\end{split}
		\end{equation}
		where the combinations of $[k,l,i,j]$ are $[1,4,2,3]$, $[1,3,2,4]$ and $[1,2,3,4]$;
		\item[(2)] Mixed identities:
		\begin{equation}\label{eq:Jacobi Theta uv}
			\begin{split}
				&\vartheta_1(u+x)\vartheta_2(u-x)\vartheta_3(v+y)\vartheta_4(v-y)
				-\vartheta_1(u-y)\vartheta_2(u+y)\vartheta_3(v-x)\vartheta_4(v+x)\\
				=&\vartheta_1(x+y)\vartheta_2(x-y)\vartheta_3(u+v)\vartheta_4(u-v).
			\end{split}
		\end{equation}
	\end{itemize}
\end{itemize}

\subsection*{Jacobi Zeta function}
The definition of the Jacobi Zeta function in \cite{ByrdF-54} is as follows:
\begin{definition}\label{define:Zeta}
	The Jacobi Zeta function is defined by 
	\begin{equation}\label{eq:zeta}
		Z(u)\equiv \int_0^u \left( \dn^2(v)-\frac{E}{K} \right) \dd v,
	\end{equation}
	where $E\equiv E(k), K\equiv K(k)$ is the complete elliptic integrals defined in \eqref{eq:J-K-E-int}.
\end{definition}

\section{\appendixname. Darboux-B\"acklund transformation for mKdV equation}\label{appendix:Darboux transformation}

\setcounter{equation}{0}
\renewcommand\theequation{\Alph{section}.\arabic{equation}}
\setcounter{theorem}{0}
\renewcommand\thetheorem{\Alph{section}.\arabic{theorem}}
\renewcommand\thelemma{\Alph{section}.\arabic{lemma}}

In this Appendix, we give the fundamental properties of Darboux-B\"acklund transformation. 

\newenvironment{aproof-T}{\emph{Proof of Theorem \ref{theorem:u-N-breather}.}}{\hfill$\Box$\medskip}
\begin{aproof-T}
	Based on the Darboux transformation, the $N$ iterations of the elementary Darboux transformation lead to the $N$-fold Darboux matrix \eqref{eq:multi-T}, and it also could be rewritten as
	\begin{equation}\label{eq:T-n-m}
		\mathbf{T}^{[N]}(\lambda; x,t)=\mathbb{I}-\mathbf{X}_m\mathbf{M}_m( x,t)^{-1}\mathbf{D}_m^{-1}\mathbf{X}_m^{\dagger},
	\end{equation}
where matrices $\mathbf{M}_m( x,t)$ and $\mathbf{X}_m$ are defined in \eqref{eq:define-M-X}, and
\begin{equation}\label{eq:define-D}
	\mathbf{D}_m={\rm diag}(\lambda-\lambda_1^*,\lambda-\lambda_2^*,\cdots ,\lambda-\lambda_m^*).
\end{equation}

Firstly, we verify
\begin{equation}\label{eq:T-n-n}
	\mathbf{T}^{[N]}(\lambda; x,t)=\mathbf{T}^{\mathrm{P}}_{N}(\lambda; x,t)\cdots\mathbf{T}^{\mathrm{P}}_{2}(\lambda; x,t)\mathbf{T}^{\mathrm{P}}_{1}(\lambda; x,t)=\mathbb{I}-\ii\mathbf{X}_N\mathbf{M}_N( x,t)^{-1}\mathbf{D}_N^{-1}\mathbf{X}_N^{\dagger},
\end{equation}
where $\mathbf{T}^{\mathrm{P}}_1(\lambda; x,t)$ and $\mathbf{T}^{\mathrm{P}}_i(\lambda; x,t)$ are defined in \eqref{eq:T-1} and \eqref{eq:T-i-P-C}, and matrices $\mathbf{M}_N( x,t)$ and $\mathbf{X}_N$ are defined in equation \eqref{eq:define-M-X} and
$\mathbf{D}_N(\lambda)$ is defined in equation \eqref{eq:define-D}.
Combing equation \eqref{eq:T-n-n}, we know
\begin{equation}
	\begin{split}
		\mathbf{T}^{[N]}(\lambda_i; x,t)\Phi_i
		=&\mathbf{T}^{\mathrm{P}}_{N}(\lambda_i; x,t)\cdots\mathbf{T}^{\mathrm{P}}_{i+1}(\lambda_i; x,t)\left(\mathbb{I}-\frac{ \Phi_i^{[i-1]}(\Phi_i^{[i-1]})^{\dagger}}{(\Phi_i^{[i-1]})^{\dagger}\Phi_i^{[i-1]}}\right)\mathbf{T}^{\mathrm{P}}_{i-1}(\lambda_i; x,t)\cdots\mathbf{T}^{\mathrm{P}}_{1}(\lambda_i; x,t) \Phi_i\\
		=&\mathbf{T}^{\mathrm{P}}_{N}(\lambda_i; x,t)\cdots\mathbf{T}^{\mathrm{P}}_{i+1}(\lambda_i; x,t)\left(\mathbb{I}-\frac{ \Phi_i^{[i-1]}(\Phi_i^{[i-1]})^{\dagger}}{(\Phi_i^{[i-1]})^{\dagger}\Phi_i^{[i-1]}}\right) \Phi_i^{[i-1]}=0, \qquad  i=1,2,\cdots,N,
	\end{split}
\end{equation}
which means that $\Phi_i^{[i-1]}$ is a kernel of matrix $\mathbf{T}^{[N]}(\lambda_i;x,t)$. Combining the poles of matrix $\mathbf{T}^{[N]}(\lambda;x,t)$ and the Liouville theorem, we could set that the matrix $\mathbf{T}^{[N]}(\lambda; x,t)$ has the ansatz:
\begin{equation}\label{eq:T-ansatz}
	\mathbf{T}^{[N]}(\lambda; x,t)=\mathbb{I}+\sum_{i=1}^{n}\frac{|x_i\rangle\langle y_i|}{\lambda-\lambda_i^*},\qquad \mathbf{T}^{[N]\dagger}(\lambda^*; x,t)=\mathbb{I}+\sum_{i=1}^{n}\frac{|y_i\rangle\langle x_i|}{\lambda-\lambda_i},
\end{equation}
where $| y_i\rangle=(\langle y_i|)^{\dagger}$ and $\langle y_i|$ is a vector with two rows and one column. Since $\mathbf{T}^{[N]}(\lambda; x,t)$ satisfies the symmetry \eqref{eq:T-sym}, 
\begin{equation}
	\mathop{\mathrm{Res}}\limits_{\lambda=\lambda_i}(\mathbf{T}^{[N]}(\lambda_i; x,t)(\mathbf{T}^{[N]}(\lambda_i; x,t))^{-1})=
	\mathop{\mathrm{Res}}\limits_{\lambda=\lambda_i}(\mathbf{T}^{[N]}(\lambda_i; x,t)(\mathbf{T}^{[N]}(\lambda_i^*; x,t))^{\dagger})	=\mathbf{T}^{[N]}(\lambda_i; x,t)|y_i\rangle\langle x_i|=0,
\end{equation}
which means $\mathbf{T}^{[N]}(\lambda_i; x,t)|y_i\rangle=0, i=1,2,\cdots N$. Because $\mathbf{T}^{[N]}(\lambda_i; x,t)$ is a non-zero two-dimensional matrix, without loss of generality, we could set $|y_i\rangle= \Phi_i$. Plugging $|y_i\rangle= \Phi_i$ into equation \eqref{eq:T-ansatz}, we gain
\begin{equation}
	\begin{pmatrix}
		 \Phi_1, &  \Phi_2, & \cdots, &  \Phi_N
	\end{pmatrix}
=-\begin{pmatrix}
	|x_1\rangle, & |x_2\rangle, & \cdots, &|x_N\rangle
\end{pmatrix}\mathbf{M}_N( x,t),
\end{equation}
which means 
\begin{equation}
	\begin{pmatrix}
		|x_1\rangle, & |x_2\rangle, & \cdots,& |x_N\rangle
	\end{pmatrix}
	=-\begin{pmatrix}
		 \Phi_1, &  \Phi_2, & \cdots, &  \Phi_N
	\end{pmatrix}
	(\mathbf{M}_N( x,t))^{-1}.
	\end{equation}
Therefore, we rewritten equation \eqref{eq:T-ansatz} as
\begin{equation}
	\begin{split}
	  \mathbf{T}^{[N]}(\lambda; x,t)=&\mathbb{I}-\mathbf{X}_N
	  \mathbf{M}^{-1}_N( x,t)\mathbf{D}_N^{-1} \mathbf{X}_N^{\dagger},
	\end{split}
\end{equation}
where $\mathbf{M}_N(x,t)$ and $\mathbf{X}_N$ are defined in \eqref{eq:define-M-X} and $\mathbf{D}_N$ is defined in \eqref{eq:define-D}.

 Then, we consider the fixed conditions, including $\mathbf{T}_i^{\mathrm{P}}(\lambda;x,t)$ and $\mathbf{T}_i^{\mathrm{C}}(\lambda;x,t)$. For the Darboux matrix $\mathbf{T}_i^{\mathrm{C}}(\lambda;x,t)$, we could divide it into two transformations as 
 \begin{equation}\label{eq:T-matrix}
 	\mathbf{T}_i^{\mathrm{C}}(\lambda;x,t)=\left(\mathbb{I}-\frac{\lambda_i-\lambda_i^*}{\lambda+\lambda_i}\frac{ \hat{\Phi}_i(\hat{\Phi}_i)^{\dagger}}{(\hat{\Phi}_i)^{\dagger}\hat{\Phi}_i}\right)\left(\mathbb{I}-\frac{\lambda_i-\lambda_i^*}{\lambda-\lambda_i^*}\frac{ \Phi_i\Phi_i^{\dagger}}{\Phi_i^{\dagger}\Phi_i}\right),
 \end{equation}
where $\hat{\Phi}_i=\left.\left(\mathbb{I}-\frac{\lambda_i-\lambda_i^*}{\lambda-\lambda_i^*}\frac{ \Phi_i\Phi_i^{\dagger}}{\Phi_i^{\dagger}\Phi_i}\right)\Phi(x,t;\lambda)\mathbf{c}_{i}^*\right|_{\lambda=-\lambda_i^*}$.
Noticing the matrix $\mathbf{T}_i^{\mathrm{C}}(\lambda;x,t)$ as a two-fold Darboux transformation, we obtain equation \eqref{eq:T-n-m} with the dimension of matrices $\mathbf{M}_m( x,t)$ and $\mathbf{D}_m$ satisfying $N\le m=n_1+2n_2\le 2N$.
Based on the B\"{a}cklund transformation, the solution $u^{[N]}(x,t)$ with multi elliptic-localized solutions could be obtained by \eqref{eq:u-N-breather}.
\end{aproof-T}

\section{\appendixname. The conformal map and the asymptotic analysis}\label{appendix:conformal-approximation}

In this section, we briefly review the conformal mapping studied in \cite{LinglmS-21}, because this helps us immensely with the asymptotic analysis as $k\rightarrow 0^+$. And then, we provide the preliminary results before the asymptotic analysis.

The conformal map $\lambda(z)$ maps a rectangular region $S$ to the entire complex plane $\lambda$-plane with two cuts. More details are given in reference \cite{LinglmS-21}. Here, we consider the specific points $\hat{z}_i=\pm \frac{K'}{2}\pm \ii \frac{K}{2}+l,i=1,2,3,4$, which play an important role in the cuts in this conformal map. The schematic of this conformal map is given in Figure \ref{fig:conformal} when $l=0$ and in Figure \ref{fig:conformal-K} when $l=\frac{K'}{2}$.

\begin{figure}[h]
	\centering
	\includegraphics[width=0.7\linewidth]{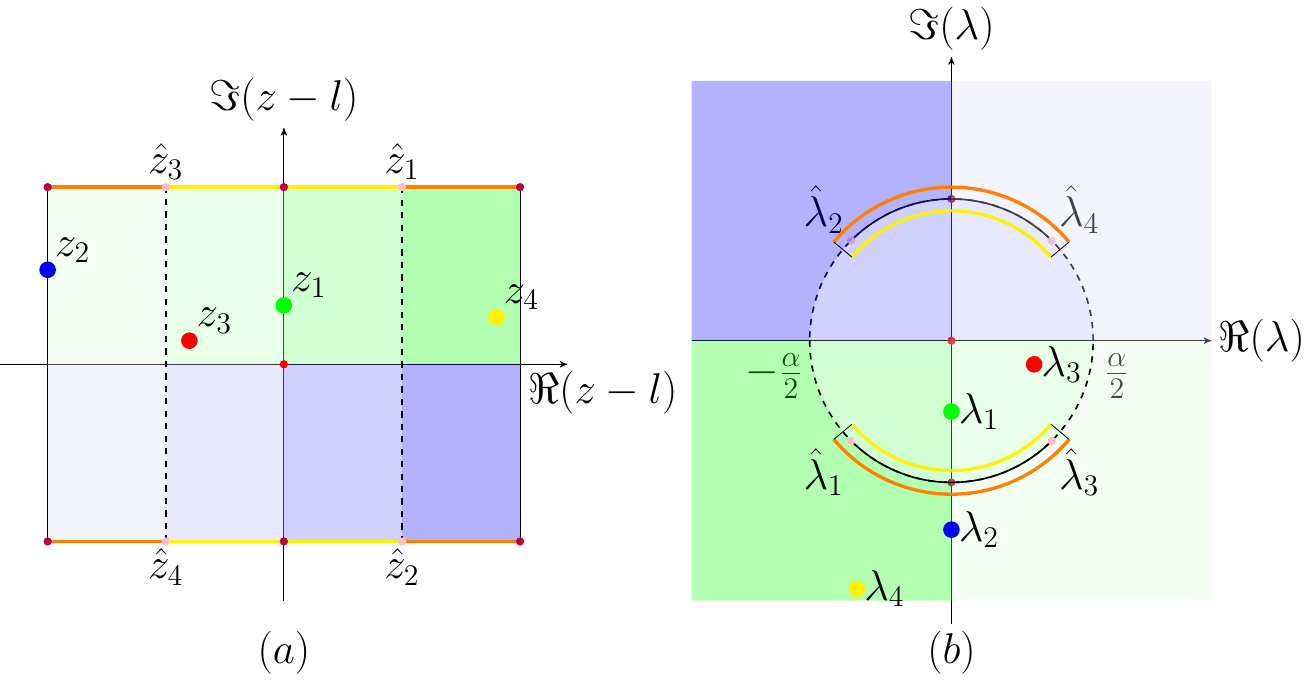}
	\caption{The conformal map $\lambda(z)$ in equation \eqref{eq:lambda-elliptic-0} with $l=0$.}
	\label{fig:conformal}
\end{figure}

When $l=0$, we consider the four points $\hat{z}_i=\pm \frac{K'}{2}\pm \ii \frac{K}{2},i=1,2,3,4$ that determine the endpoint of cuts in $\lambda$-plane. Combining with equation \eqref{eq:lambda-elliptic-0}, we obtain that those four points satisfy the equations 
\begin{equation}\label{eq:lambda-i-0}
	\begin{split}
		\hat{\lambda}_1=&\lambda(\hat{z}_1)=\lambda\left(\frac{K'}{2}+ \ii \frac{K}{2}\right)=\frac{\ii \alpha }{2}\left(-k+\ii k'\right), \qquad \quad \hat{\lambda}_2=\lambda(\hat{z}_2)=\lambda\left(\frac{K'}{2}- \ii \frac{K}{2}\right)=\frac{\ii \alpha }{2}\left(k+\ii k'\right),\\
		\hat{\lambda}_3=&\lambda(\hat{z}_3)=\lambda\left(-\frac{K'}{2}+ \ii \frac{K}{2}\right)=\frac{\ii \alpha }{2}\left(-k-\ii k'\right),  \qquad  \hat{\lambda}_4=\lambda(\hat{z}_4)=\lambda\left(-\frac{K'}{2}- \ii \frac{K}{2}\right)=\frac{\ii \alpha }{2}\left(k-\ii k'\right),
	\end{split}
\end{equation}
where $k'=\sqrt{1-k^2}$. Then, we analyze the changes in cuts, as $k\rightarrow 0^+$. For any point on the cut in $\lambda$-plane, there must exist two points $z_1\neq z_2$ on the lines $z\in \left\{z\in S\left|z=z_R\pm \ii \frac{K}{2}\right.\right\}$ satisfying $\lambda(z_1)=\lambda(z_2)$. Therefore, we mainly consider the above two lines and obtain the Lemma \ref{lemma:cn-K1}. 
\begin{lemma}\label{lemma:cn-K1}
	When $l=0$, for any $k\in (0,1)$ and $z\in \left\{z\in S\left|z=z_R\pm \ii \frac{K}{2}\right.\right\}$, the value of function $\lambda(z)$ is on the circle centered at the origin with radius $\frac{\alpha}{2}$.
\end{lemma}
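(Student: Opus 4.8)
The plan is to reduce the statement to a single elliptic-function identity and then verify that identity with the product (double-argument) formulas. First I would record a conjugation symmetry of the conformal map: since the Jacobi functions have real power-series coefficients and definite parity, one checks directly from \eqref{eq:lambda-elliptic-0} that $\overline{\lambda(z)}=\lambda(z^*)$. Consequently $|\lambda(z)|^2=\lambda(z)\lambda(z^*)$, and because complex conjugation sends the line $\Im(z)=\tfrac{K}{2}$ to $\Im(z)=-\tfrac{K}{2}$, it suffices to treat the single line $z=z_R+\ii\frac{K}{2}$; the statement on $\Im(z)=-\tfrac{K}{2}$ then follows automatically since conjugate points have equal modulus.

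Second, on that line I would write $\ii z=\ii z_R-\tfrac{K}{2}$ and $\ii z^*=\ii z_R+\tfrac{K}{2}$, and set $w=\ii z_R$. From \eqref{eq:lambda-elliptic-0} the product of the two factors $\tfrac{\ii\alpha}{2}$ contributes $-\tfrac{\alpha^2}{4}$, giving
\[
|\lambda(z)|^2=\lambda(z)\lambda(z^*)=-\frac{\alpha^2}{4}\,\frac{\sn(w-\tfrac{K}{2})\dn(w-\tfrac{K}{2})}{\cn(w-\tfrac{K}{2})}\cdot\frac{\sn(w+\tfrac{K}{2})\dn(w+\tfrac{K}{2})}{\cn(w+\tfrac{K}{2})}.
\]
Thus the lemma is equivalent to the meromorphic identity that the product of the two fractions equals $-1$ for all $w$ and all $k\in(0,1)$.

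Third, I would evaluate the three symmetric products $\sn(w-\tfrac K2)\sn(w+\tfrac K2)$, $\cn(w-\tfrac K2)\cn(w+\tfrac K2)$ and $\dn(w-\tfrac K2)\dn(w+\tfrac K2)$ via the formulas for $\sn(u+v)\sn(u-v)$, $\cn(u+v)\cn(u-v)$, $\dn(u+v)\dn(u-v)$ with $v=\tfrac K2$, substituting the quarter-period values $\sn^2(\tfrac K2)=\tfrac{1}{1+k'}$, $\cn^2(\tfrac K2)=\tfrac{k'}{1+k'}$, $\dn^2(\tfrac K2)=k'$. Writing $s^2=\sn^2 w$ and invoking the algebraic relation $k^2=(1-k')(1+k')$, i.e. $\tfrac{k^2}{1+k'}=1-k'$, every factor reorganizes around the single quantity $1-s^2(1-k')$: the common denominator $1-k^2\sn^2 w\,\sn^2(\tfrac K2)$ equals exactly $1-s^2(1-k')$, and the combined $\sn/\dn/\cn$ numerators collapse to $-\bigl(1-s^2(1-k')\bigr)$ after the factor $1-s^2(1+k')$ cancels, so the whole product is $-1$ independently of $w$ and $k$. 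This gives $|\lambda(z)|=\tfrac{\alpha}{2}$.

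The main obstacle I anticipate is purely bookkeeping: keeping the three product formulas, the quarter-period constants, and the $k'$-dependence organized so the cancellations are transparent rather than a brute-force polynomial expansion in $s^2$. The two decisive simplifications are $k^2=(1-k')(1+k')$ and the coincidence of the denominator $1-s^2(1-k')$ with the surviving numerator factor, which forces the ratio to $-1$. As a consistency check guarding against sign errors, one verifies that the four endpoints $\hat z_i=\pm\tfrac{K'}{2}\pm\ii\tfrac K2$ already lie on the circle by \eqref{eq:lambda-i-0}, since $\left|-k+\ii k'\right|=1$ gives $|\hat\lambda_i|=\tfrac\alpha2$; this is precisely the special case $z_R=\pm\tfrac{K'}{2}$ of the identity above.
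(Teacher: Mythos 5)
Your proposal is correct, and it takes a genuinely different route from the paper's. You reduce the claim to the modulus-squared identity $|\lambda(z)|^2=\lambda(z)\lambda(z^*)$ via the reflection symmetry $\overline{\lambda(z)}=\lambda(z^*)$ (which indeed holds: the parities of $\sn,\cn,\dn$ compensate the conjugation of the prefactor $\frac{\ii\alpha}{2}$), and your product computation checks out: with $v=\frac{K}{2}$, $s^2=\sn^2 w$, the three $\pm\frac K2$ product formulas give numerators $N_s=\frac{s^2(1+k')-1}{1+k'}$, $N_c=-k'\,\frac{s^2(1+k')-1}{1+k'}$, $N_d=k'\bigl(1-s^2(1-k')\bigr)$ over the common denominator $D=1-s^2(1-k')$, so $\frac{N_sN_d}{N_cD}=-1$ identically and $|\lambda(z)|=\frac{\alpha}{2}$. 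The paper, by contrast, never passes to $|\lambda|^2$: it evaluates $\lambda$ on the line directly, using the half-argument identity $\sn(u)\dn(u)/\cn(u)=\bigl(1-\cn(2u)\bigr)/\sn(2u)$, the shift formulas by $\mp K$, and the imaginary-argument transformation to the complementary modulus, arriving at the explicit real parametrization $\lambda\bigl(z_R\pm\ii\tfrac K2\bigr)=\pm\frac{\ii\alpha}{2}\bigl(\dn(2z_R,k')\pm\ii k'\sn(2z_R,k')\bigr)$, whose modulus equals $\frac{\alpha}{2}$ by $\dn^2(u,k')+k'^2\sn^2(u,k')=1$. What each approach buys: yours stays entirely in modulus $k$, needs only algebraic bookkeeping at the fixed shift $\frac K2$, and the conjugation trick halves the case analysis over the two lines $\Im(z)=\pm\frac K2$; the paper's yields strictly more than the lemma asserts, namely an explicit description of \emph{how} the image traverses the circle, which is what underlies the picture of the branch cuts in the $\lambda$-plane and is consistent with the endpoint values \eqref{eq:lambda-i-0} that you invoked only as a spot check. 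Both arguments ultimately rest on a Pythagorean-type identity — yours through $k^2=(1-k')(1+k')$ and the collapse of $N_sN_d/N_c$ to $-D$, the paper's through $\dn^2+k'^2\sn^2=1$ in the complementary modulus.
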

\begin{proof}
	For any $z\in \left\{z\in S\left|z=z_R\pm \ii \frac{K}{2}\right.\right\}$, utilizing formulas \eqref{eq:Jacobi-shift}, \eqref{eq:Jacobi-I} and \eqref{eq:Jacobi-half}, we know that 
	\begin{equation}
		\begin{split}
			\lambda\left(z_R\pm\ii \frac{K}{2}\right)
			\xlongequal[ ]{\eqref{eq:Jacobi-half}}&\frac{\ii \alpha }{2}\frac{1-\cn(\mp K+2\ii z_R)}{\sn(\mp K+2\ii z_R)}\\
			\xlongequal[ ]{\eqref{eq:Jacobi-shift}}&\frac{\ii \alpha }{2}\frac{\dn(2\ii z_R)\pm k'\sn(2\ii z_R)}{\pm\cn(2\ii z_R)}\\
			\xlongequal[ ]{\eqref{eq:Jacobi-I}}&\pm\frac{\ii \alpha }{2}\left(\dn(2 z_R,k')\pm\ii k' \sn(2z_R,k')\right),
		\end{split}
	\end{equation}
	which implies that for any $k\in (0,1)$ and $z\in \left\{z\in S\left|z=z_R\pm \ii \frac{K}{2}\right.\right\}$, the value of $\lambda(z)$ must satisfy $|\lambda(z)|=\frac{\alpha}{2}$. Thus, $\lambda(z)$ is on the circle centered at the origin with radius $\frac{\alpha}{2}$.
\end{proof}

Then, we consider the case $l=\frac{K'}{2}$. By the definition of $\lambda(z)$ in equation \eqref{eq:lambda-elliptic-K}, it is easy to verify that
\begin{equation}\label{eq:lambda-i-K1}
	\begin{split}
		\hat{\lambda}_1=&\lambda(\hat{z}_1)=\lambda\left(-\frac{K'}{2}+ \ii \frac{K}{2}\right)=-\frac{\ii \alpha }{2}\left(1+k'\right), \qquad \hat{\lambda}_2=\lambda(\hat{z}_2)=\lambda\left(-\frac{K'}{2}- \ii \frac{K}{2}\right)=\frac{\ii \alpha }{2}\left(1+k'\right),\\
		\hat{\lambda}_3=&\lambda(\hat{z}_3)=\lambda\left(\frac{K'}{2}+ \ii \frac{K}{2}\right)=-\frac{\ii \alpha }{2}\left(1-k'\right), \qquad \hat{\lambda}_4=\lambda(\hat{z}_4)=\lambda\left(\frac{K'}{2}- \ii \frac{K}{2}\right)=\frac{\ii \alpha }{2}\left(1-k'\right),\\
	\end{split}
\end{equation}
 where $k'=\sqrt{1-k^2}$.

\begin{figure}[h]
	\centering
	\includegraphics[width=0.7\linewidth]{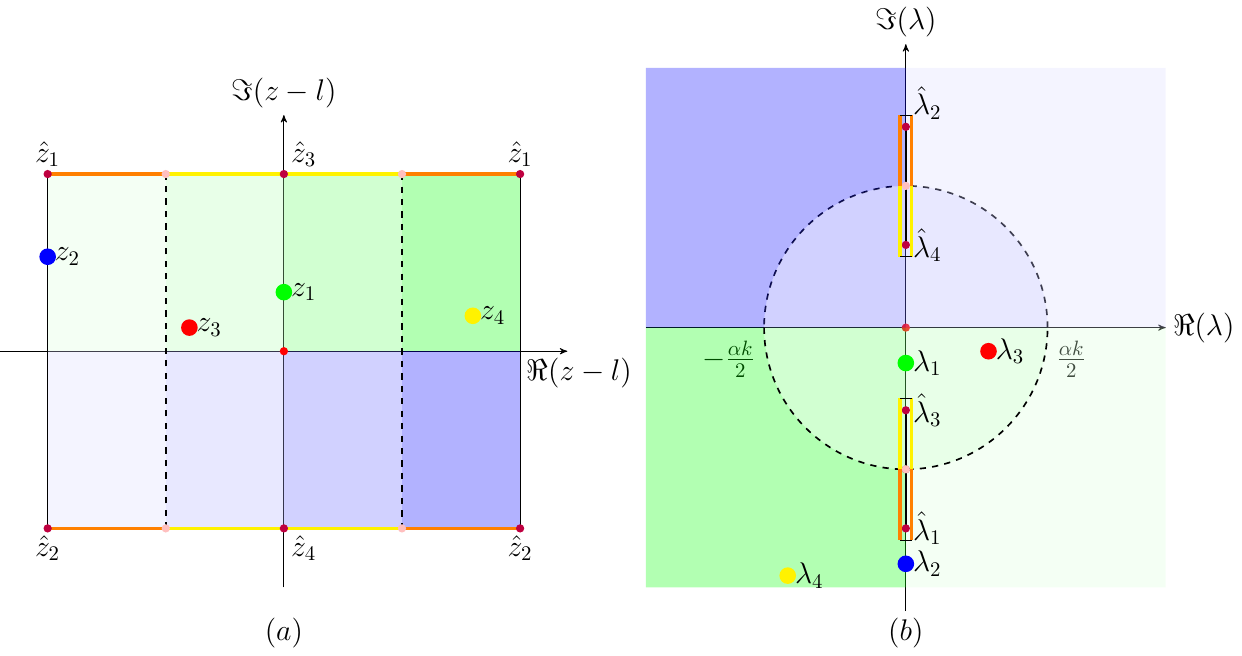}
	\caption{The conformal map $\lambda(z)$ in equation \eqref{eq:lambda-elliptic-K} with $l=\frac{K'}{2}$.}
	\label{fig:conformal-K}
\end{figure}

\begin{lemma}\label{lemma:dn-K1}
	When $l=\frac{K'}{2}$, for any $k\in (0,1)$ and $z\in \left\{z\in S\left|z=z_R\pm \ii \frac{K}{2}\right.\right\}$, the value of function $\lambda(z)$ is on the imaginary axis. 
\end{lemma}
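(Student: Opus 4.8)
The plan is to imitate the proof of Lemma \ref{lemma:cn-K1}, replacing the $l=0$ map by the $l=\frac{K'}{2}$ map \eqref{eq:lambda-elliptic-K}. Writing $w:=\ii(z-l)$ and substituting $z=z_R\pm\ii\frac{K}{2}$ gives
\[
	w=\ii z_R\mp\frac{K}{2}-\ii\frac{K'}{2},
\]
so that $\lambda(z)=\frac{\ii\alpha k^2}{2}\,g(w)$ with $g(w):=\frac{\sn(w)\cn(w)}{\dn(w)}$. Since the prefactor $\frac{\ii\alpha k^2}{2}$ is purely imaginary (as $\alpha,k$ are real), the claim $\lambda(z)\in\ii\mathbb{R}$ is \emph{equivalent} to showing that $g(w)\in\mathbb{R}$.

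First I would exploit the reality of the Jacobi functions for real modulus $k$: their Maclaurin coefficients are real, hence $\overline{\sn(w)}=\sn(\overline{w})$ and likewise for $\cn,\dn$, so $\overline{g(w)}=g(\overline{w})$. Because $z_R,K,K'$ are real, $\overline{w}=-\ii z_R\mp\frac{K}{2}+\ii\frac{K'}{2}$, whence $w+\overline{w}=\mp K$, i.e. $\overline{w}=\mp K-w$. Thus proving $g(w)\in\mathbb{R}$ reduces to the single identity $g(\mp K-w)=g(w)$.

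That identity I would establish from two elementary facts. The function $g$ is odd, since $\sn$ is odd while $\cn,\dn$ are even; and the $K$-shift formulas \eqref{eq:Jacobi-shift}, namely $\sn(v+K)=\cd(v)$, $\cn(v+K)=-k'\sd(v)$, $\dn(v+K)=k'\nd(v)$, combine to $g(v+K)=-g(v)$ after the factors of $k'$ and one power of $\dn(v)$ cancel. Combining oddness with this $K$-shift yields $g(\mp K-w)=g(w)$ in both sign cases, so $\overline{g(w)}=g(w)$ and $g(w)$ is real, completing the argument.

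The computations here are routine, so the only points requiring care are (i) that conjugation commutes with $\sn,\cn,\dn$ precisely because $k$ is real, and (ii) that $w$ must avoid the zeros of $\dn$ where $g$ has poles. For generic $z_R$ the point $w$ stays away from these poles and $g(w)$ is a finite real number; the finitely many exceptional values of $z_R$ are then handled by continuity (or, as in the explicit computation \eqref{eq:lambda-i-K1}, they map to $\lambda\in\{0,\pm\ii\alpha\}$, which already lie on the imaginary axis). I expect this pole/endpoint bookkeeping to be the only mild obstacle. As an alternative entirely parallel to Lemma \ref{lemma:cn-K1}, one could instead insert the half- and double-argument formulas \eqref{eq:Jacobi-half}, \eqref{eq:Jacobi-Double} together with the imaginary-argument rule \eqref{eq:Jacobi-I} to rewrite $g(w)$ as an explicitly real function of $2z_R$, but the conjugation-symmetry route above is shorter.
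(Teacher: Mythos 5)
Your proof is correct and takes essentially the same route as the paper's: both establish $\lambda^{*}=-\lambda$ by combining Schwarz reflection $\overline{g(w)}=g(\overline{w})$ (valid since $k$ is real) with the oddness of $g=\sn\cn/\dn$ and the $K$-shift antisymmetry $g(v+K)=-g(v)$ from \eqref{eq:Jacobi-shift}; your single identity $g(\mp K-w)=g(w)$ together with $\overline{w}=\mp K-w$ is exactly the paper's displayed chain, the only cosmetic differences being that the paper absorbs the $-\ii K'/2$ into the real parameter $z_R$ (and nominally cites \eqref{eq:Jacobi-half} although its manipulations are the shift and parity identities you use). Your pole bookkeeping is in fact vacuous: since $\Re(w)=\mp\frac{K}{2}$, the line avoids all zeros of $\dn$ (real part an odd multiple of $K$) and all poles of $\sn,\cn,\dn$ (real part an even multiple of $K$), so no exceptional values of $z_R$ occur.
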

\begin{proof}
	For any $z\in \left\{z\in S\left|z=z_R\pm \ii \frac{K}{2}\right.\right\}$, through utilizing the formulas \eqref{eq:Jacobi-shift}, we obtain
	\begin{equation}
		\begin{split}
			\lambda^*\left(z_R\pm\ii \frac{K}{2}\right)
			=&-\frac{\ii \alpha k^2}{2}\frac{\sn(\mp\frac{K}{2}-\ii z_R)\cn(\mp\frac{K}{2}-\ii z_R)}{\dn(\mp\frac{K}{2}-\ii z_R)}\\
			=&\frac{\ii \alpha k^2}{2}\frac{\sn(\pm\frac{K}{2}+\ii z_R)\cn(\pm\frac{K}{2}+\ii z_R)}{\dn(\pm\frac{K}{2}+\ii z_R)}\\
			=&-\frac{\ii \alpha k^2}{2}\frac{\sn(\mp\frac{K}{2}+\ii z_R)\cn(\mp\frac{K}{2}+\ii z_R)}{\dn(\mp\frac{K}{2}+\ii z_R)}\\
			=&-\lambda\left(z_R\pm\ii \frac{K}{2}\right),
		\end{split}
	\end{equation}
which means that for any $z\in \left\{z\in S\left|z=z_R\pm \ii \frac{K}{2}\right.\right\}$, $\lambda(z)\in \ii \mathbb{R}$. Thus,  the value of $\lambda(z)$ must on the imaginary axis for any $k\in(0,1)$.
\end{proof}

\bibliographystyle{siam}
\bibliography{references}

\end{document}